\tikzstyle{rstate}=[state,ellipse]
\tikzset{>={latex}}
\title{Dynamic Complexity of Document Spanners} 
\author{Dominik D. Freydenberger}{Loughborough University, Loughborough, United Kingdom}{}{https://orcid.org/0000-0001-5088-0067}{}
\author{Sam M. Thompson}{Loughborough University, Loughborough, United Kingdom}{}{https://orcid.org/0000-0002-3476-6739}{}
\authorrunning{D.\,D.\ Freydenberger and S.\,M.\ Thompson}
\keywords{Document spanners, information extraction, dynamic complexity, descriptive complexity, word equations}
\newcommand{\erasing}[1]{#1_{\mathsf{E},\Sigma}}
\newcommand{\nonerasing}[1]{#1_{\mathsf{NE},\Sigma}}
\newcommand{\powerset}[1]{\mathcal{P}(#1)}
\newcommand{\df}{:=}
\newcommand{\select}{\upxi}
\newcommand{\join}{\bowtie}
\newcommand{\subst}[3]{#1_{\frac{#2}{#3}}}
\newcommand{\VAset}{\mathsf{VA_{set}}}
\newcommand{\core}{\mathsf{core}}
\newcommand{\spanreg}{\mathsf{reg}}
\newcommand{\diff}{\setminus}
\newcommand{\cored}{\mathsf{core}\cup\{\diff\}}
\newcommand{\rgx}{\mathsf{RGX}}
\newcommand{\RGXreg}{\rgx^{\spanreg}}
\newcommand{\VAsetreg}{\mathsf{VA}_{\mathsf{set}}^{\spanreg}}
\newcommand{\RGXcore}{\rgx^{\core}}
\newcommand{\RGXcored}{\rgx^{\cored}}
\newcommand{\VAsetcore}{\mathsf{VA}_{\mathsf{set}}^{\core}}
\newcommand{\VAsetcored}{\mathsf{VA}_{\mathsf{set}}^{\cored}}
\newcommand{\ECrtext}{\textsf{EC\textsuperscript{reg}}}
\newcommand{\EC}{\ifmmode{\mathsf{EC}}\else{\textsf{EC}}\xspace\fi}
\newcommand{\ECr}{\ifmmode{\mathsf{EC^{reg}}}\else{\ECrtext}\xspace\fi}
\newcommand{\deltah}{\delta^*}
\newcommand{\quotproj}{\pi}
\newcommand{\subword}{\sqsubseteq}
\newcommand{\lang}{\mathcal{L}} 
\newcommand{\rlang}{\mathcal{R}}  
\newcommand{\weqeq}{\mathbin{\dot{=}}}
\newcommand{\spn}[1]{[#1\rangle}
\newcommand{\spanner}[1]{\llbracket #1 \rrbracket}
\newcommand{\SVars}[1]{{\mathsf{SVars}\left(#1\right)}}
\newcommand{\fvar}{\mathsf{free}}
\newcommand{\var}{\mathsf{var}}
\newcommand{\mv}{\mathsf{W}}
\newcommand{\splog}{\ifmmode{\mathsf{SpLog}}\else{\textsf{SpLog}}\xspace\fi}
\newcommand{\splogneg}{\ifmmode{\mathsf{SpLog}^{\neg}}\else{\textsf{SpLog}$^{\neg}$}\xspace\fi}
\newcommand{\dpcsplog}{\ifmmode{\mathsf{DPC}}\else{\textsf{DPC}}\xspace\fi}
\newcommand{\pcsplog}{\ifmmode{\mathsf{PC}}\else{\textsf{PC}}\xspace\fi}
\newcommand{\splogrx}{\ifmmode{\mathsf{SpLog_{rx}}}\else{\textsf{SpLog\textsubscript{rx}}}\xspace\fi}
\newcommand{\splognegrx}{\ifmmode{\mathsf{SpLog_{rx}^{\neg}}}\else{\textsf{SpLog$_{\mathsf{rx}}^{\neg}$}}\xspace\fi}
\newcommand{\dpcsplogrx}{\ifmmode{\mathsf{DPC{rx}}}\else{\textsf{DPC\textsubscript{rx}}}\xspace\fi}
\newcommand{\pcsplogrx}{\ifmmode{\mathsf{PC_{rx}}}\else{\textsf{PC\textsubscript{rx}}}\xspace\fi}
\newcommand{\constr}{\ensuremath{\mathsf{C}}}
\newcommand{\openvar}[1]{\mathbin{\vdash_{#1}}}
\newcommand{\closevar}[1]{\mathbin{\dashv_{#1}}}
\newcommand{\validr}[1]{\mathsf{Ref}(#1)}
\newcommand{\clr}{\mathsf{clr}}
\newcommand{\logeq}{\mathbin{\dot{=}}}
\newcommand{\wordstruc}{\mathcal{W}}
\newcommand{\auxstruc}{\mathcal{W}_{aux}}
\newcommand{\programstate}{\mathcal{S}}
\newcommand{\dynword}{\mathsf{word}}
\newcommand{\worddomain}{D}
\newcommand{\openspanvar}[1]{ #1^{o} }
\newcommand{\closespanvar}[1]{ #1^{c} }
\newcommand{\symbolrel}[1]{R_{#1} }
\newcommand{\updateprogram}{\vec{P}}
\newcommand{\updateformula}[3]{\phi^{#1}_{#2} ( #3 )}
\newcommand{\ins}[2]{\mathsf{ins}_{#1}(#2)}
\newcommand{\reset}[1]{\mathsf{reset}(#1)}
\newcommand{\absins}[1]{\ifmmode{\mathsf{ins_{#1}}}\else{\textsf{ins_{#1}}}\xspace\fi}
\newcommand{\absreset}{\ifmmode{\mathsf{reset}}\else{\textsf{reset}}\xspace\fi}
\newcommand{\unknownupdate}{\partial}
\newcommand{\nextsym}{\leadsto_{w}}
\newcommand{\newnextsym}{\leadsto_{w'}}
\newcommand{\nextrelation}{R_{\mathsf{Next}}}
\newcommand{\lastrel}{R_{\mathsf{last}}}
\newcommand{\firstrel}{R_{\mathsf{first}}}
\newcommand{\position}[2]{\mathsf{pos}_{#1}(#2)}
\newcommand{\equalsubstr}{R_{\mathsf{eq}}}
\newcommand{\symel}[1]{\bigvee\limits_{\xi \in \Sigma}(\symbolrel{\xi}(#1))}
\newcommand{\oneopen}{x_o}
\newcommand{\oneclose}{x_c}
\newcommand{\twoopen}{y_o}
\newcommand{\twoclose}{y_c}
\newcommand{\patternquery}{\mathcal{P}}
\newcommand{\true}{\ifmmode{\mathsf{True}}\else{\textsf{True}}\xspace\fi}
\newcommand{\false}{\ifmmode{\mathsf{False}}\else{\textsf{False}}\xspace\fi}
\newcommand{\cq}{\ifmmode{\mathsf{CQ}}\else{\textsf{CQ}}\xspace\fi}
\newcommand{\ucq}{\ifmmode{\mathsf{UCQ}}\else{\textsf{UCQ}}\xspace\fi}
\newcommand{\dynfo}{\ifmmode{\mathsf{DynFO}}\else{\textsf{DynFO}}\xspace\fi}
\newcommand{\dynqf}{\ifmmode{\mathsf{DynQF}}\else{\textsf{DynQF}}\xspace\fi}
\newcommand{\dynprop}{\ifmmode{\mathsf{DynPROP}}\else{\textsf{DynPROP}}\xspace\fi}
\newcommand{\dyncq}{\ifmmode{\mathsf{DynCQ}}\else{\textsf{DynCQ}}\xspace\fi}
\newcommand{\dynucq}{\ifmmode{\mathsf{DynUCQ}}\else{\textsf{DynUCQ}}\xspace\fi}
\newcommand{\cqwithpre}{\ifmmode{\mathsf{CQ_{pre}}}\else{$\mathsf{CQ_{pre}}$}\xspace\fi}
\newcommand{\ucqwithpre}{\ifmmode{\mathsf{UCQ_{pre}}}\else{$\mathsf{UCQ_{pre}}$}\xspace\fi}
\newcommand{\dynfowithpre}{\ifmmode{\mathsf{DynFO_{pre}}}\else{$\mathsf{DynFO_{pre}}$}\xspace\fi}
\newcommand{\dynpropwithpre}{\ifmmode{\mathsf{DynPROP_{pre}}}\else{$\mathsf{DynPROP_{pre}}$}\xspace\fi}
\newcommand{\dyncqwithpre}{\ifmmode{\mathsf{DynCQ_{pre}}}\else{$\mathsf{DynCQ_{pre}}$}\xspace\fi}
\newcommand{\dynucqwithpre}{\ifmmode{\mathsf{DynUCQ_{pre}}}\else{$\mathsf{DynUCQ_{pre}}$}\xspace\fi}
\newcommand{\emptyword}{\varepsilon}
\newcommand{\nextsubform}[1]{\varphi_{#1}^{\nextrelation}}
\newcommand{\substrsubform}[1]{\mu_{\zeta,#1}^{\equalsubstr}}
\newcommand{\ie}{i.\,e.\xspace}
\begin{document}
\maketitle
\begin{abstract}
The present paper investigates the dynamic complexity of document spanners, a formal framework for information extraction introduced by Fagin, Kimelfeld, Reiss, and Vansummeren (JACM~2015). We first look at the class of regular spanners and prove that any regular spanner can be maintained in the dynamic complexity class \dynprop. This result follows from work done previously on the dynamic complexity of formal languages by Gelade, Marquardt, and Schwentick (TOCL~2012). 

To investigate core spanners we use \splog, a concatenation logic that exactly captures core spanners. We show that the dynamic complexity class \dyncq is more expressive than \splog and therefore can maintain any core spanner. This result is then extended to show that \dynfo can maintain any generalized core spanner and that \dynfo is more powerful than \splog with negation. 
\end{abstract}
\section{Introduction}\label{sec:intro}
Document spanners where introduced by Fagin, Kimelfeld, Reiss, and Vansummeren~\cite{fag:spa} as a formalization of IBM's information retrieval language AQL. Essentially, they can be explained as a formalism of querying text like one would query a relational database.

The universe of document spanners are \emph{spans}, intervals of positions in a text. For example, if one searches for a word inside a larger text, every match can be understood as being one span inside the text. Spanners generalize this by mapping an input text to a table of spans.

More specifically, the process can be described as follows.
First, \emph{primitive spanners}, so-called \emph{extractors}, are used to convert the input text into tables of spans. 
These extractors can be assumed to be \emph{regex formulas}, which are regular expressions with variables.
The tables can then be combined using relational algebra. As one might expect, different types of spanners allow different choices of operators. 
In this paper, we deal with three types of spanners that were introduced by Fagin et al.~\cite{fag:spa}.
\emph{Regular spanners}, currently the most widely studied in literature, allow the operators $\cup$~(union), $\pi$~(projection), and $\join$ (join). \emph{Core spanners} extend regular spanners by allowing the string equality selection operator~$\select^=$, which allows one to check whether spans describe the same string (but potentially at different places). \emph{Generalized core spanners} then extend these with the set difference~$\diff$.

In the last few years, various aspects of spanners have received considerable attention (see our related work section). The main focus was on evaluation and enumeration of results. But very few papers have considered aspects of maintaining the results of spanners under updates on the input text, and these have only focused on regular spanners.

In this paper, we examine the complexity of this problem from a \emph{dynamic complexity} point of view. 
The classic dynamic complexity setting was independently introduced by 
Dong, Su, and Topor~\cite{don:non} and 
Patnaik and Immerman~\cite{pat:dynfo}.
The ``default setting'' of dynamic complexity assumes a big relational database that is constantly changing (where the updates consist of adding or removing tuples from relations). 
The goal is then to maintain a set of auxiliary relations that can be updated with ``little effort''. As this is a descriptive complexity point of view, little effort is defined as using only first-order formulas. The class of all problems that can be maintained in this way is called \dynfo. 

A more restricted setting is \dynprop, where only quantifier-free formulas can be used. As one might expect, restricting the update formulas leads to various classes between \dynprop and \dynfo. Of particular interest to this paper are \dyncq and \dynucq, where the update formulas are conjuctive queries or unions of conjunctive queries. As shown by Zeume and Schwentick~\cite{zeu:dyncq}, $\dyncq=\dynucq$ holds; but it is open whether these are proper subclasses of \dynfo (see Zeume~\cite{zeu:small} for detailed background information).

As document spanners are defined on words, we adapt the dynamic complexity setting for formal languages by Gelade, Marquardt, and Schwentick~\cite{gel:dyn}. This interprets a word structure as a linear order (of positions in the word) with unary predicates for every terminal symbol. To account for the dynamic complexity setting, positions can be undefined, and the update operations are setting a position to a symbol (an insertion or a symbol change) and resetting a position to undefined (deleting a symbol).

We show that in this setting, regular spanners can be maintained in \dynprop, core spanners in \dynucq (and, hence, by~\cite{zeu:dyncq} in \dyncq), and generalized core spanners in \dynfo. 
Here, the second of these results is the main result of the present paper (the third follows directly from it, and the first almost immediately from~\cite{gel:dyn}).
To achieve it, we do not convert core spanners directly, but use the concatenation logic \splog  as an intermediate model.

\splog  (short for \emph{spanner logic}) was introduced by Freydenberger~\cite{fre:splog} and has the same expressive power as core spanners (under some caveats that we discuss in \cref{sec:splog}). An additional benefit of the main result is that \splog can be used to simplify proofs that languages or word relations can be maintained in \dyncq. 

\subparagraph*{Related work}
Recently, algorithmic and complexity theoretic aspects of evaluation and enumeration of spanners have received a considerable amount of attention, see~\cite{ama:con,  flo:con,  fre:doc,  fre:joi,  fre:splog, los:fou, mat:doc, mor:eng, pet:com,  pet:rec}. But these almost exclusively consider spanners in a static setting.
To the authors' knowledge, the only articles to also examine updates   are Losemann~\cite{los:fou} and  Amarilli, Bourhis, Mengel, and Niewerth~\cite{ama:con}. 
Both do not take  a \dynfo point of view; moreover, both only deal with regular spanners and there is no obvious way to also include the string equalities that are required for core spanners and generalized core spanners.

Doleschal, 
 Kimelfeld, 
 Martens, 
 Nahshon, and
 Neven~\cite{dol:split} 
introduce the notion of split-correctness. Without going into details, this examines spanners for which it is possible to split the input word into subwords on which the spanner is then evaluated. This can be viewed as a special case of update, but again was restricted to regular spanners.

Gelade, Marquardt, and Schwentick~\cite{gel:dyn} examined the dynamic complexity of formal languages. Their result that \dynprop captures the regular languages is the basis for~\cref{prop:regular} in the current paper. 
While they also
established that every context free language is in \dynfo and that every Dyck-language is in \dynqf (\dynprop with auxiliary functions), they did not examine \dynucq and \dyncq, which the present paper does.

Mu{\~{n}}oz, Vortmeier, and Zeume~\cite{mun:dyn} studied the dynamic complexity in a graph database setting, namely for \emph{conjunctive regular path queries (CRQPs)} and \emph{extended conjunctive regular path queries (ECRPQs)}. In particular, Theorem~14 in~\cite{mun:dyn} states that on acyclic graphs, even a generalization of ECRPQs can be maintained in \dynfo. Fagin et al.~\cite{fag:spa}  established that on marked paths (a certain type of graph)  core spanners have the same expressive powers as a CRPQs with string equalities (a fragment of ECRPQs). While marked paths are not acyclic in a strict sense, Section~7 of \cite{fre:splog} proposes a variant of this model that could be directly combined with the construction from~\cite{mun:dyn}. Thus, one could combine these results and observe that core spanners can be maintained in \dynfo. In contrast to this, the present paper allows us to lower the upper bound to \dyncq. Moreover, if one is satisfied with \dynucq, the constructions in the present paper also guarantee that all auxiliary relations only contain active nodes (nodes which carry a letter) of the word-structure, the only exception being the special case where the word-structure represents the empty string. 

\subparagraph*{Structure of the paper} 
Section~\ref{sec:prelim} contains the central definitions.  
Section~\ref{sec:main} establishes dynamic upper bounds for the three central classes of document spanners (regular, core, and generalized core spanners), in particular the main result (Theorem~\ref{splogindyncq}). Section~\ref{sec:rel} further examines the relative expressive powers of core spanners and \dyncq. Section~\ref{sec:conc} concludes the paper.
Some of the longer proofs have been moved to the appendix.
\section{Preliminaries}\label{sec:prelim}
Let $\mathbb{N} := \{ 0 , 1 , 2 \dots \}$ and let $\mathbb{N}_+ := \mathbb{N} \setminus \{0\}$, where $\setminus$ denotes set difference. We write $|S|$ to represent the \emph{cardinality} of a set $S$. We use $\subseteq$ for subset and $\subset$ for proper subset. We denote the powerset of $S$ by $\mathcal{P} (S) $. Let $\emptyset$ be the empty set. If $R$ is a relation of arity $0$, then $R$ is the empty set, or $R$ is the set containing the empty tuple. We define $[n]\df\{ 1 , 2 \dots n \} $. 

 Let $A$ be an alphabet\footnote{We use $A$ here as a generic alphabet since we look at both the alphabet of terminal symbols and the alphabet of variables, and the concepts defined here apply to both.}.
 We write $|w|$ to denote the length of a word $w \in A^*$. The number of occurrences of some $a \in A$ in a word $w \in A^*$ is represented by $| w |_a$. We use $\emptyword$ to denote the empty word. Given two words $u \in A^*$ and $v \in A^*$, we write $u \cdot v$, or simply $uv$ for concatenation. 
If  $w = v_1 u v_2$ where $v_1 \in A^*$ and $v_2 \in A^*$, then $u$ is a \emph{subword} of $w$. We use~$\sqsubseteq$ for subword and $\sqsubset$ for the proper subword relation. If $u$ is not a subword of $w$, we write $u \not\sqsubseteq w$. Let $\Sigma$ be a finite alphabet of so-called \emph{terminal symbols}. Let $\Xi$ be an infinite set of so-called \emph{variables}, which is disjoint from $\Sigma$. Let $\mathcal{L}(A)$ (or $\mathcal{L}(\alpha)$) denote the language of a nondeterministic finite automaton (NFA) $A$ (or of a regular expression $\alpha$).

The rest of this section is structured as follows: First, we define various types of document spanners in \cref{sec:spanners} and equivalent logics (\cref{sec:splog}). After that, we define dynamic complexity, with a particular focus on its application to document spanners (\cref{sec:dyn}).

\subsection{Document Spanners and Spanner Algebra}\label{sec:spanners}

In this section, we introduce document spanners and their representations. We begin with \emph{primitive spanners} (\cref{sec:spanner-rep}) and then combine these to \emph{spanner algebras} (\cref{sec:spannerAlgebra}).
\subsubsection{Primitive Spanner Representations}\label{sec:spanner-rep}
Let $w := a_1 \cdot a_2 \cdots a_n$ be a word, where $n \geq 0$ and $a_1,\dots,a_n \in \Sigma$. A \emph{span} of $w$ is an interval $\spn{i,j}$ with $1 \leq i \leq j \leq n+1$ and $i,j \geq 0$. Given a span $\spn{i,j}$ of a word $w$, we define the subword $w_{\spn{i,j}}$ as $a_i \cdot a_{i+1} \cdots a_{j-1}$.
\begin{example}
Consider the word 
$w\df \mathtt{banana}$.	As $|w|=6$, the spans of $w$ are the $\spn{i,j}$ with $1\leq i \leq j\leq 7$. For example, we have $w_{\spn{1,2}}=\mathtt{b}$ and $w_{\spn{2,4}}=w_{\spn{4,6}}=\mathtt{an}$. Although both spans describe the same subword \texttt{an}, the two occurrences are at different locations (and, thus, at different spans). Analogously, we have $w_{\spn{1,1}}=w_{\spn{2,2}}=\cdots=w_{\spn{7,7}}=\emptyword$, but $\spn{i,i}\neq \spn{i',i'}$ for all distinct $1\leq i,i'\leq 7$.
\end{example}

Let  $V \subseteq \Xi$ and  $w \in \Sigma^*$. A $(V,w)$\emph{-tuple} is a function $\mu$ that maps each  $x\in V$ to a span $\mu(x)$ of $w$. A set of $(V,w)$-tuples is called a  $(V,w)$\emph{-relation}. A \emph{spanner} P is a function that maps every $w \in \Sigma^*$ to a $(V,w)$-relation~$P(w)$. 
We write $\SVars{P}$ to denote the set of variables $V$ of a spanner~$P$. 
Two spanners $P_1$ and $P_2$ are \emph{equivalent} if $\SVars{P_1} = \SVars{P_2}$ and  $P_1(w) = P_2(w)$ holds for all $w \in \Sigma^*$.

In the usual applications of spans and spanners, the word $w$ is some type of text. Hence, we can view a spanner $P$ as mapping an input text $w$ to a $(V,w)$-relation  $P(w)$, which can be understood as a table of spans of $w$. 

To define spanners, we use two types of \emph{primitive spanner representations}, the so-called \emph{regex formulas} and \emph{variable-set automata}. Both extend classical mechanisms for regular languages (regular expressions and NFAs, respectively) with variables.  
\subparagraph*{Regex formulas:}
The syntax of regex formulas is defined by the following
$\alpha \df \emptyset \mid \emptyword \mid a \mid (\alpha \lor \alpha) \mid (\alpha \cdot \alpha) \mid (\alpha)^* \mid x \{ \alpha \}, $
where $a \in \Sigma$ and $x \in \Xi$. We use $\alpha^+$ to denote $\alpha \cdot \alpha^*$. 

Like~\cite{fre:splog}, we define the  semantics of regex formulas using two step-semantics with  \emph{ref-words} (originally introduced by Schmid~\cite{sch:cha} in a different context). A ref-word is a word over the extended alphabet $(\Sigma \cup \Gamma)$ where $\Gamma \df \{ \openvar{x}, \closevar{x} \mid x \in \Xi \}$. The symbols $\openvar{x}$ and $\closevar{x}$ represent the beginning and end of the span for the variable $x$. 
The first step in the definition of semantics is treating each regex formula $\alpha$ as generators of languages of ref-words $\rlang(\alpha)\subseteq (\Sigma \cup \Gamma)^*$, which  is defined by $\rlang(\emptyset) \df \emptyset$, $\rlang(a) \df \{ a \}$ where $a \in \Sigma \cup \{ \emptyword \}$, $\rlang(\alpha_1 \lor \alpha_2) \df \rlang(\alpha_1) \cup \rlang(\alpha_2)$, $\rlang(\alpha_1 \cdot \alpha_2) \df \rlang(\alpha_1) \cdot \rlang(\alpha_2)$, $\rlang(\alpha^*) \df \rlang(\alpha)^*$, and $\rlang(x\{ \alpha \}) \df \openvar{x} \rlang(\alpha) \closevar{x}$.

Let $\SVars{\alpha}$ be the set of all $x \in \Xi$ such that $x\{ \}$ occurs somewhere in $\alpha$. A ref-word $r \in \rlang(\alpha)$ is \emph{valid} if for all $x \in \SVars{\alpha}$, we have that $|r|_{\openvar{x}} = 1$. 
We denote the set of valid ref-words in $\rlang(\alpha)$ as $\validr{\alpha}$ and say that 
a regex formula is \emph{functional} if $\rlang(\alpha) = \validr{\alpha}$. 
We write $\rgx$ for the set of all functional regex formulas. 
By definition,  for  every $\alpha\in\rgx$, every $r \in \validr{\alpha}$, and  every $x \in \SVars{\alpha}$, there is  a unique factorization $r = r_1 \openvar{x} r_2 \closevar{x} r_3$. 

This allows us to define the second step of the semantics, which turns such a  factorization for some variable $x$ into a span $\mu(x)$. To this end, we define a morphism $\clr \colon (\Sigma \cup \Gamma)^* \rightarrow \Sigma^*$ by $\clr(a) \df a $ for $a \in \Sigma$ and $\clr(g) = \emptyword$ for all $g \in \Gamma$. 
For a  factorization $r = r_1 \openvar{x} r_2 \closevar{x} r_3$, $\clr(r_1)$ is the substring of $w$ that appears before $\mu(x)$ and $\clr(r_2)$ is the substring $w_{\mu(x)}$.

We use this for the definition of the semantics as follows: 
For $\alpha \in \rgx$ and $w \in \Sigma^*$, let $V \df \SVars{\alpha}$ and (more importantly) $\validr{\alpha, w} \df \{ r \in \validr{\alpha} \mid \clr(r)=w \}$.

Every $r \in \validr{\alpha,w}$ defines a $(V,w)$-tuple $\mu^r$ in the following way: For every $x \in \SVars{\alpha}$, we use  the unique factorization $r = r_1 \openvar{x} r_2 \closevar{x} r_3$ to define $\mu^r(x) \df \spn{|\clr(r_1)|+1, |\clr(r_1 r_2)|+1	}$. 
The spanner $\spanner{\alpha}$ is then defined by $\spanner{\alpha}(w) \df \{ \mu^r \mid r \in \validr{\alpha,w} \} $ for all $w\in\Sigma^*$.

\subparagraph*{Variable-set automata:} Variable-set automata (short: \emph{vset-automata}) are NFAs that may use variable operations $\openvar{x}$ and $\closevar{x}$ as  transitions. More formally, let  $V \subset \Xi$ be a finite set of variables. 
 A variable-set automaton over $\Sigma$ with variables $V$ is a tuple $A = (Q, q_0, q_f , \delta)$, where $Q$ is the set of states, $q_0 \in Q$ is the initial state, $q_f \in Q$ is the accepting state, and $\delta \colon Q \times (\Sigma \cup \{ \emptyword \} \cup \Gamma_V) \rightarrow \powerset{Q}$ is the transition function with $\Gamma_V \df \{ \openvar{x}, \closevar{x} \mid x \in V \}$. 
 
We define the semantics using a two-step approach analogous to the semantic definition of regex formulas. 
 Firstly, we treat $A$ as an NFA that defines the ref-language defined by $\rlang(A)\df\{ r \in (\Sigma \cup \Gamma_V)^* \mid q_f \in \deltah(q_0, r) \}$, where 
 the function $\deltah \colon Q \times (\Sigma \cup \Gamma_V) \rightarrow \powerset{Q}$ is defined such that for all $p,q \in Q$ and $r \in (\Sigma \cup \Gamma_V)^*$, $q \in \deltah(p,r)$ if and only if there exists a path in $A$ from $p$ to $q$ with the label $r$. 

Secondly, let $\SVars{A}$ be the set of $x \in V$ such that $\openvar{x}$ or $\closevar{x}$ appears in $A$. A ref-word $r \in \rlang(A)$ is \emph{valid} if for every $x \in \SVars{A}$, $|r|_{\openvar{x}} = |r|_{\closevar{x}}=1$, and $\openvar{x}$ always occurs to the left of $\closevar{x}$. Then $\validr{A}$, $\validr{A,w}$ and $\spanner{A}$ are defined analogously to regex formulas. We denote the set of all vset-automata using $\VAset$. As for regex formulas, a vset-automaton $A\in\VAset$ is called \emph{functional} if $\rlang(A)=\validr{A}$.

\begin{example}\label{ex:regexAutomaton}
We define the functional regex formula 
$\alpha\df\Sigma^*\cdot x\{(\mathtt{wine})\lor(\mathtt{cake})\}\cdot\Sigma^*$. We also define the functional vset-automaton~$A$ as follows:
\begin{center}
	\begin{tikzpicture}[on grid, node distance =12mm,every loop/.style={shorten >=0pt}, every state/.style={inner sep=0pt,minimum size=5mm}]
	\node[state,initial text=,initial by arrow] (q0) {};
	\node[state, right= of q0] (q1) {};
	\node[state, above right=of q1] (b1) {};
	\node[state, right=of b1] (b2) {};
	\node[state, right=of b2] (b3) {};

	\node[state, below right=of q1] (c1) {};
		\node[state, right=of c1] (c2) {};
		\node[state, right=of c2] (c3) {};
	\node[state, below right=of b3] (q3) {};
	\node[state,accepting,right=of q3] (q4) {};	\path[->]
	(q0) edge [loop above] node {$\Sigma$} (q0)
	(q0) edge node[above] {$\openvar{x}$} (q1)
	(q1) edge[pos=0.2] node[above] {$\mathtt{w}$} (b1)
	(b1) edge[] node[above] {$\mathtt{i}$} (b2)
	(b2) edge[] node[above] {$\mathtt{n}$} (b3)
	(b3) edge[] node[above] {$\mathtt{e}$} (q3)
	(q1) edge[pos=0.2] node[below] {$\mathtt{c}$} (c1)
(c1) edge[] node[below] {$\mathtt{a}$} (c2)
(c2) edge[] node[below] {$\mathtt{k}$} (c3)
(c3) edge[] node[below] {$\mathtt{e}$} (q3)
	(q3) edge node[above] {$\closevar{x}$} (q4)
(q4) edge [loop above] node {$\Sigma$} (q4)
	;
	\end{tikzpicture}
\end{center}  
For all $w\in\Sigma^*$, we have that $\spanner{\alpha}(w)=\spanner{A}(w)$ contains exactly those $(\{x\},w)$-tuples $\mu$ that have $w_{\mu(x)}=\mathtt{wine}$ or $w_{\mu(x)}=\mathtt{cake}$.
\end{example}

\subsubsection{Spanner Algebra}\label{sec:spannerAlgebra}
We now introduce an algebra on spanners in order to construct more complex spanners.

\begin{definition}
Two spanners $P_1$ and $P_2$ are \emph{compatible} if $\SVars{P_1}=\SVars{P_2}$.
We define  the following algebraic operators for all spanners $P, P_1, P_2$:
\begin{itemize}
	\item If $P_1$ and $P_2$ are compatible, their \emph{union} $(P_1 \cup P_2)$ and their \emph{difference} $(P_1\diff P_2)$ are defined by $(P_1 \cup P_2)(w) \df P_1(w) \cup P_2(w)$ and $(P_1 \diff P_2)(w) \df P_1(w) \diff P_2(w)$.
	\item The \emph{projection} $\quotproj_Y P$ for $Y \subseteq \SVars{P}$ is defined by $\quotproj_Y P(w) \df P|_Y(w)$, where $P|_Y(w)$ is the restriction of all $\mu \in P(w)$ to~$Y$.
	\item The \emph{natural join} $P_1 \join P_2$ is obtained by defining each $(P_1 \join P_2)(w)$ as the set of all  $(V_1 \cup V_2,w)$-tuples $\mu$ for which there exists $\mu_1 \in P_1(w)$ and $\mu_2 \in P_2(w)$ with $\mu|_{V_1}(w) = \mu_1(w)$ and $\mu|_{V_2}(w) = \mu_2(w)$, where $V_i \df \SVars{P_i}$ for $i \in \{ 1,2 \}$.   
	\item For every  $k$-ary relation $R \subseteq (\Sigma^*)^k$ and variables $x_1, \dots, x_k \in \SVars{P}$, the \emph{selection}  $\select^R_{x_1 \dots x_k}P$ is defined by
	$\select^R_{x_1 \dots x_k} P(w)\df \{\mu \in P(w) \mid (w_{\mu(x_1)} , \dots , w_{\mu(x_k)}) \in R\}$ for $w\in\Sigma^*$.
\end{itemize}
Let  $\SVars{P_1 \cup P_2} \df \SVars{P_1\diff P_2}\df\SVars{P_1}=\SVars{P_2}$,   $\SVars{\quotproj_Y P} \df Y$,  $\SVars{P_1 \join P_2} \df \SVars{P_1} \cup \SVars{P_2}$, and $\SVars{\select^R_{x_1 \dots x_k}} \df \SVars{P}$.
\end{definition}

 Note that the relations $R$ in the selection are usually infinite; and they are never considered part of the input.

Let $O$ be a spanner algebra and let $C$ be a class of primitive spanner representations, then we use $C^O$ to denote the set of all spanner representations that can be constructed by repeated combinations of the symbols for the operators from $O$ with the spanner representation from~$C$. We denote the closure of $\spanner{C}$ under the spanner operators $O$ as $\spanner{C^O}$. 
\begin{example}

Let $\alpha_1\df \Sigma^*x\{\Sigma^*\}\Sigma^*y\{\Sigma^*\}\Sigma^*$ and $\alpha_2\df\Sigma^*\cdot x\{(\mathtt{wine})\lor(\mathtt{cake})\}\cdot\Sigma^*$ (recall Example~\ref{ex:regexAutomaton}). We combine the two regex formulas into a core spanner $P \df \pi_{x}\select^=_{x,y} (\alpha_1\join\alpha_2)$. Then $\spanner{P}(w)$ contains all $(\{x\},w)$-tuples $\mu$ such that $w_{\mu(x)}$ is an occurrence of \texttt{wine} or \texttt{cake} in $w$ that is followed by another occurrence of the same word.
\end{example}

Like  Fagin et al. \cite{fag:spa}, we are mostly concerned with string equality selections $\select^=$. Following~$\cite{fag:spa,pet:rec}$, we focus on the class  of \emph{regular spanners} $\spanner{\RGXreg}$, the class of  \emph{core spanners}\footnote{As this class captures the core functionality of SystemT.}  $\spanner{\RGXcore}$ 
and the class of \emph{generalized core spanners}  $\spanner{\RGXcored}$, where 
$\spanreg\df\{\quotproj,\cup,\join\}$ and  $\core \df \{ \quotproj, \select^=, \cup, \join\}$. 
As shown in~\cite{fag:spa}, we have
\[ \spanner{\RGXreg}=\spanner{\VAsetreg} =\spanner{\VAset}  \subset  \spanner{\RGXcore}=\spanner{\VAsetcore}\subset \spanner{\RGXcored}=\spanner{\VAsetcored}. \]
In other words, there is a proper hierarchy of regular, core, and generalized core spanners; and for each of the classes, we can choose regex formulas or vset-automata as primitive spanner representations. As shown in~\cite{fre:splog}, functional vset-automata have the same expressive power as vset-automata in general. The size difference can be exponential, but this does not matter for the purpose of the present paper.

\subsection{Spanner Logic}\label{sec:splog}

In this section, we define \splog (spanner logic) and relate it to spanners.
\splog is a fragment of \ECrtext, the existential theory of concatenation with regular constraints (a logic that is built around the concatenation operator).
It was introduced by Freydenberger~\cite{fre:splog} and has the same expressive power as core spanners; and conversions between both models are possible in polynomial time. 
To define \splog, we first introduce \emph{word equations}.

A \emph{pattern} $\alpha$ is a word from $(\Sigma \cup \Xi)^*$. In other words, patterns may contain variables and terminal symbol. A \emph{word equation} is a pair of patterns $(\eta_L, \eta_R)$, which are called the \emph{left} and \emph{right} side of the equation, respectively. We usually write a word equation as $\eta_L \weqeq\eta_R$. The set of all variables in a pattern $\alpha$ is denoted by $\var(\alpha)$. This is extended to word equations $\eta = (\eta_L , \eta_R)$ by $\var(\eta) := \var(\eta_L) \cup \var(\eta_R)$.

A \emph{pattern substitution} is a morphism $\sigma : (\Sigma \cup \Xi)^* \rightarrow \Sigma^*$ such that $\sigma(a) = a$ holds for all $a \in \Sigma$. As every substitution $\sigma$ is a morphism, we have $\sigma(\alpha_1\cdot\alpha_2)=\sigma(\alpha_1)\cdot\sigma(\alpha_2)$ for all patterns $\alpha_1$ and $\alpha_2$. Hence, to define $\sigma$, it suffices to define $\sigma(x) \in \Sigma^*$ for all $x \in \Xi$.

The main idea of \splog is choosing a special main variable $\mv$ that shall correspond to the input string of a spanner. \splog is then an existential-positive logic over words, where the atoms are regular predicates or word equations of the form $\mv\weqeq \eta_R$. Formally, 
we define  syntax and semantics as follows:
\begin{definition}
\label{def:splog}
	Let $\mv \in \Xi$. Then $\splog(\mv)$, the set of all \splog-formulas with main variable~$\mv$, is defined recursively as containing the following formulas:
	\begin{description}
		\item[B1.] $(\mv \weqeq \eta_R)$ for every $\eta_R \in  (\Xi\cup\Sigma)^*$.
		\item[R1.] $(\varphi_1 \land \varphi_2)$ for all $\varphi_1, \varphi_2 \in \splog(\mv)$.
		\item[R2.] $(\varphi_1 \lor \varphi_2)$ for all  $\varphi_1, \varphi_2 \in \splog(\mv)$ with $\fvar(\varphi_1) = \fvar(\varphi_2)$.
		\item[R3.] $ \exists x \colon \varphi$ for all $\varphi \in \splog(\mv)$ and $x \in \fvar(\varphi) \setminus \{ \mv \}$.
		\item[R4.] $(\varphi \land \constr_A(x))$ for every $\varphi \in \splog(\mv)$, every $x \in \fvar(\varphi)$, and every NFA~$A$.
	\end{description}

Let $\fvar(\varphi)$ be $\fvar(\eta) := \var(\eta)$, $\fvar(\varphi_1 \land \varphi_2) := \fvar(\varphi_1 \lor \varphi_2) := \fvar(\varphi_1) \cup \fvar(\varphi_2)$, $\fvar(\exists x \colon \varphi) := \fvar(\varphi) \setminus \{ x \}$, and $\fvar(\varphi \land \constr_A(x)) := \fvar(\varphi)$.

For every  pattern substitution~$\sigma$ and every $\varphi\in\splog(\mv)$, we define $\sigma\models\varphi$  as follows:
\begin{itemize}
	\item $\sigma\models (\mv \weqeq \eta_R)$ if  $\sigma(\mv)=\sigma(\eta_R)$,
	\item $\sigma\models (\varphi_1 \land \varphi_2)$  if $\sigma \models \varphi_1$ and $\sigma \models \varphi_2$; and $\sigma \models (\varphi_1 \lor \varphi_2)$ is defined analogously,
	\item $\sigma \models \exists x \colon \varphi$ if  $\subst{\sigma}{x}{w} \models \varphi$ for some $w \in \Sigma^*$, where  $\subst{\sigma}{x}{w}(x) \df w$ and $\subst{\sigma}{x}{w}(y) = \sigma(y)$ if~$y \neq x$,
	\item $\sigma \models (\varphi \land \constr_A(x))$ if $\sigma \models \varphi$ and $\sigma(x) \in \lang(A)$.
\end{itemize}
\end{definition}

Let $\splog$ be the union of all $\splog(\mv)$ with $\mv\in\Xi$. 
We  add and omit parentheses,  as long as the meaning remains unambiguous. 
We also allow constraints of the form $\constr_{\alpha}(x)$, where $\alpha$ is a regular expression.
For readability, we use $\varphi(\mv; x_1, x_2 \dots  x_k)$ to express that the \splog-formula $\varphi$ has  the main variable $\mv$ and free variables $\{x_1 , x_2 \dots x_k \}$.
As a convention, assume that no word equation $(\mv \weqeq \eta_R)$ has the main variable $\mv$ occur in the right side; that is, that $|\eta_R|_{\mv}=0$ holds. 
\begin{example}
For the \splog-formula $\varphi(\mv)\df \exists x\colon \bigl((\mv \weqeq xxx) \land \constr_{\mathtt{ab}^*}(x) \bigr)$, we have  $\sigma\models\varphi$ if and only if $\sigma(\mv)=www$ for some $w\in \mathtt{ab}^*$. 
\end{example}

We also extend the definition of \splog to \splogneg, which we call \splog \emph{with negation}. 
\begin{definition}
Let $\mv \in \Xi$. Then $\splogneg(\mv)$, the set of \splogneg-formulas with the main variable $\mv$, is defined by extending Definition \ref{def:splog} with the additional rule that if $\varphi \in \splogneg(\mv)$, then $(\neg\varphi) \in \splogneg(\mv)$, with $\fvar(\varphi) = \fvar(\neg\varphi)$. We define $\sigma\models\neg\varphi$ as:
\begin{itemize}
	\item $\sigma(x) \sqsubseteq \sigma(\mv)$ for all $x \in \fvar(\varphi)$, and 
	\item $\sigma \models \varphi$ does not hold.
\end{itemize}
\end{definition}

To compare the expressive power of \splog and document spanners, we need to overcome the difficulty that the former reasons about words, while the latter reason over positions in an input word. To this end, we use the following notion that was introduced by Freydenberger and Holldack~\cite{fre:doc} in the context of \ECrtext. 
\begin{definition}
Let $\varphi \in \splog$ with $\fvar(\varphi) \df \{ W \} \cup \{ x_p, x_c \mid x \in \SVars{P}  \}$. Let $P$ be a spanner. Let $\spanner{\varphi}(w)$ denote the set of all $\sigma$ such that $\sigma \models \varphi$ and $\sigma(\mv) = w$. We then say that $\varphi$ \emph{realizes} $P$ if for all $w \in \Sigma^*$, we have 
$\sigma \in \spanner{\varphi}(w)$ if and only if $\mu \in P(w)$ where for each $x \in \SVars{P}$ and $\spn{i,j} \df \mu(x)$, both $\sigma(x_p) = w_{\spn{1,i}}$ and $\sigma(x_c) =  w_{\spn{i,j}}$.
\end{definition}

Intuitively, this definition uses two main ideas: Firstly, the spanner's input word $w$ is represented by the main variable $\mv$. Secondly, every spanner variable $x$ is represented by two \splog-variables $x_p$ and $x_c$, such that in each $(V,w)$-tuple $\mu$, we have that $x_c$ contains the actual content $w_{\mu(x)}$ and $x_p$ contains the prefix of $w$ before the start of $\mu(x)$.

As shown in Section~4.1 of~\cite{fre:splog}, under this lens, \splog has exactly the same expressive power as $\spanner{\RGXcore}$ (the core spanners), and \splogneg  exactly the same as $\spanner{\RGXcored}$ (the~generalized core spanners).

One of the central questions in~\cite{fag:spa,fre:splog} is which relations $R$ can be added to spanners or \splog without increasing the expressive power (using $\select^R$ or a new constraint symbol for $R$, respectively). This is reflected in the notion of \emph{selectable relations}. 
A relation $R\subseteq (\Sigma^*)^k$ is called \emph{\splog-selectable}
 if for every $\varphi\in\splog(\mv)$ and every sequence $\vec{x}=(x_1,\ldots,x_k)$ of variables with $x_1,\ldots,x_k\in\fvar(\varphi)\diff\{\mv\}$, there is a \splog-formula $\varphi^{R}_{\vec{x}}$ with $\fvar(\varphi)=\fvar(\varphi^{R}_{\vec{x}})$, and  $\sigma\models \varphi^{R}_{\vec{x}}$ if and only if $\sigma\models\varphi$ and $(\sigma(x_1),\ldots,\sigma(x_k))\in R$.
This is equivalent to the analogously defined notion of core spanner selectable relations, see Section~5.1 of~\cite{fre:splog} for details. 
We shall use selectability both in the way to our main result (namely, in Lemma~\ref{lem:splogDynCQ}) and for further observations in \cref{sec:rel}.

\subsection{Dynamic complexity}\label{sec:dyn}

Our definitions of dynamic complexity are based on the setting of dynamic formal languages as described by Gelade, Marquardt, and Schwentick \cite {gel:dyn}. In this setting, strings are modeled by a relational structure. Insertions and deletions of symbols can be performed on this structure and (auxiliary) relations are \emph{maintained} by logic formulas, called \emph{update formulas}. We extend this with a predetermined relation which is maintained to hold the result of some spanner performed on the current word. The idea of dynamic complexity, which was introduced by Patnaik and Immerman \cite{pat:dynfo}, is to have dynamic descriptive complexity classes based upon the logic needed to maintain a relation, or in our case a spanner. We now formally define these concepts.

Let $\Sigma$ be a fixed and finite alphabet of terminal symbols. We represent words using a \emph{word-structure}. A word-structure has a fixed and finite set known as the domain $\worddomain := [n+1]$ as well as a 2-ary order relation $<$ on $\worddomain$. 
We use the shorthands $x \leq y$ for $(x<y) \lor (x \logeq y)$. We have in our word-structure the constant $\$$ which is interpreted by the element $n+1$, the $<$-maximal element of $\worddomain$. 
This <-maximal element marks the end of the word structure and is required for dynamic spanners, which are defined later. For each symbol $\zeta \in \Sigma$ the word-structure has a unary relation $R_{\zeta}(i)$ and there is at \emph{most} one $\zeta \in \Sigma$ such that $R_{\zeta}(i)$ for $i \in [n]$. 
If we have $R_{\zeta}(i)$ then we write $w(i) = \zeta$, otherwise we write $w(i) = \emptyword$. If $w(i) \neq \emptyword$ for some $i \in \worddomain$, then we call $i$ a \emph{symbol-element}.

Given a word-structure $\wordstruc$, the word that $\wordstruc$ represents is denoted by $\dynword(\wordstruc)$ and this is defined as $\dynword(\wordstruc) \df w(1) \cdot w(2) \cdots w(n)$. Since for some $j \in \worddomain$ it could be that $w(j) = \emptyword$, it follows that the length of the word $\dynword(\wordstruc)$ is likely to be less than $n$. Let $w \df \dynword(\wordstruc)$, we write $w[i,j]$ to represent the subword $w[i,j] \df w(i) \cdot w(i+1) \cdots w(j)$ where $i,j \in \worddomain$ such that $i<j$. 

We now define the set of \emph{abstract updates} $\Delta := \{ \absins{\zeta} \mid \zeta \in \Sigma \} \cup \{ \absreset \}$. A \emph{concrete update} is $\ins{\zeta}{i}$ or $\reset{i}$, for some $i \in \worddomain \setminus \{ \$ \}$ and $\zeta \in \Sigma$. The difference between abstract updates and concrete updates is that concrete updates can be \emph{performed} on a word-structure. Given a word-structure with a domain of size $n$, we use $\Delta_n$ to represent the set of possible concrete updates. For some $\unknownupdate  \in \Delta_n$, we denote the word-structure $\wordstruc$ after an update is performed by $\unknownupdate (\wordstruc)$ and this is defined as:
\begin{itemize}
	\item If $\unknownupdate  = \ins{\zeta}{i}$, then $R_{\zeta}(i)$ is true and $R_{\zeta'}(i)$ is false for all $\zeta' \in \Sigma$ where $\zeta \neq \zeta'$.
	\item If $\unknownupdate  = \reset{i}$ then $R_{\zeta}(i)$ is false for all $\zeta \in \Sigma$.
\end{itemize}

All other elements keep the symbol they had before the update. For $k \geq 1$, let $\unknownupdate ^* := \unknownupdate _1, \unknownupdate _2, \dots \unknownupdate _k$ be a sequence of updates. We use $\unknownupdate ^*(\wordstruc)$ as a short hand to represent $\unknownupdate _k ( \dots ( \unknownupdate_2  (\unknownupdate _1 (\wordstruc) ) ) \dots )$. We place the restriction that updates must \emph{change} the string. We do not allow $\reset{i}$ if $w(i) = \emptyword$ and we do not allow $\ins{\zeta}{i}$ if $w(i) = \zeta$.

\begin{example}
Given a word-structure $\wordstruc$ over the alphabet $\Sigma \df \{ a,b \}$ with domain $\worddomain = [6]$, where $6 = \$$. If we have that $R_a = \{ 2,4 \}$ and $R_b \df \{ 5 \}$, it follows that $\dynword(\wordstruc) = aab$. Performing the operation $\ins{b}{1}$ would give us an updated word of $baab$. Say if we then perform $\reset{4}$ on our new word structure, we would have the word $bab$.
\end{example}

We define the \emph{auxiliary structure} $\auxstruc$ as a set of relations over the domain of $\wordstruc$. A \emph{program state} $\programstate := (\wordstruc, \auxstruc)$ is a word-structure and an auxiliary structure. An \emph{update program} $\updateprogram$ is a finite set of update formulas, which are of the form $\updateformula{R}{\mathsf{op}}{y ; x_1, \dots , x_k }$. We have an update formula for each $R \in \auxstruc$ and $\mathsf{op} \in \Delta$. An update, $\mathsf{op}(i)$, performed on $\programstate$ yields $\programstate' = (\unknownupdate (\wordstruc), \auxstruc') $ where all relations $R' \in \auxstruc'$ are defined by $R' := \{ \vec{j} \mid \bar{\programstate} \models \updateformula{R}{op}{i;\vec{j}} \}$, where $\vec{j}$ is a $k$-tuple (where $k$ is the arity of $R$) and where $\bar{\programstate} \df (\unknownupdate(\wordstruc), \auxstruc)$. 

We use $w$ to denote $\dynword(\wordstruc)$ for some word structure $\wordstruc$ and we use $w'$ for $\dynword(\unknownupdate(\wordstruc))$ where $\unknownupdate \in \Delta_n$ is some update performed on $\wordstruc$.

Given some $x \in \worddomain$ where $w(x) \neq \emptyword$, we write that $\position{w}{x}= 1$ if for all $x' \in\worddomain$ where $x' < x$ we have that $w(x') = \emptyword$. Let $z,y$ be elements from the domain such that $z<y$ and $w(z) \neq \emptyword$ and $w(y) \neq \emptyword$. If for all $x \in \worddomain$ where $z<x<y$ we have that $w(x) = \emptyword$ then $ \position{w}{y}= \position{w}{z}+ 1$. We write $x \nextsym y$ if and only if $\position{w}{y}= \position{w}{x}+1$. If it is not the case that $x \nextsym y$ then we write $x \not\nextsym y$. 

For every spanner $P$ with $\SVars{P} \df \{ x_1, x_2 \dots x_k \}$ and every word-structure $\wordstruc$, the spanner relation $R^P$ is a $2k$-ary relation over $\worddomain$ where each spanner variable $x_i$ is represented by two components $\openspanvar{x_i}$ and $\closespanvar{x_i}$. 
We obtain $R^P$ on $\wordstruc$ by converting each $\mu\in P(w)$ into a $2k$-tuple $(\openspanvar{x_1}, \closespanvar{x_1}, \openspanvar{x_2}, \closespanvar{x_2} \dots \openspanvar{x_k}, \closespanvar{x_k})$, where for each $i\in[k]$, we have $\mu(x_i)= \spn{\position{w}{\openspanvar{x_i}},\position{w}{\closespanvar{x_i}}}$. The only exception is if $\mu(x_i) = \spn{j,k}$ and $k > |w|$ then $\closespanvar{x_i} = \$$ for such a tuple $(\openspanvar{x_1}, \closespanvar{x_1}, \openspanvar{x_2}, \closespanvar{x_2} \dots \openspanvar{x_k}, \closespanvar{x_k})$. In Example \ref{spanrel:example} we give a spanner represented by a regex formula and show the corresponding spanner-relation on a word-structure.

\begin{definition}
A dynamic program is a triple, containing:
\begin{itemize}
	\item $\updateprogram$ - an update program over $(\wordstruc, \auxstruc)$.
	\item $\mathsf{INIT}$ - a first-order initialization program.
	\item $R^P \in \auxstruc$ - a designated spanner-relation.
\end{itemize}
\end{definition}

For each $R \in \auxstruc$, we have some $\psi_R(\vec{j}) \in \mathsf{INIT}$ which defines the initial tuples of $R$ (before any updates to the input structure occur). Note that $\vec{j}$ is a $k$-tuple where the arity of $R$ is $k$. For our work $\psi_R$ is a first-order logic formula. 

A dynamic program \emph{maintains} a spanner $P$ if we have that $R^P \in \auxstruc$ always corresponds to $P(\unknownupdate^* (\wordstruc)) $. We can then extend this to saying that we maintain a \emph{relation} if there is a designated $R \in \auxstruc$ which is always equivalent to some relation where the relation is defined in terms of the input word.

\begin{example}\label{spanrel:example}
Consider the regex formula $ \alpha \df \Sigma^* \cdot x\{ a \cdot b \} \cdot \Sigma^*$ where $a,b \in \Sigma$ and $x \in \Xi$. Now consider the following word-structure:
\begin{center}
	\begin{tabular}{ccccccc}
		1&2&3&4&5&6&$\$$\\
		$a$&$\emptyword$ & $b$ & $\emptyword$ & $a$ & $\emptyword$ & $\emptyword$
	\end{tabular}
\end{center}
Note that the top row is the elements of the domain in order, and the bottom row is the corresponding symbols. If we maintain the \emph{spanner relation} of $\alpha$, given the word-structure above, we have the relation $R^P \in \auxstruc$ such that $R^P \df \{ (1,5) \}$. Now assume we perform the update $\ins{b}{6}$. The word-structure is now in the following state:

\begin{center}
	\begin{tabular}{ccccccc}
		1&2&3&4&5&6&$\$$\\
		$a$&$\emptyword$ & $b$ & $\emptyword$ & $a$ & $b$ & $\emptyword$
	\end{tabular}
\end{center}

It must be that $\updateformula{R^P}{\absins{b}}{6;x,y}$ updates the relation $R^P$ to $\{ (1,5), (5,\$) \}$ for us to correctly maintain the spanner.

\end{example}

\begin{definition}
\dynfo is the class of all relations which can be maintained by update formulas which are defined using first-order logic. \dynprop is a subclass of \dynfo where all the update formulas are quantifier-free. 
\end{definition}

A first-order formula is a \emph{conjunctive query}, or CQ for short, if it is built up from atomic formulae, conjunction and existential quantification. We also have unions of conjunctive queries, or UCQ for short, which allows for the finite disjunction of conjunctive queries. We~therefore have the classes \dyncq and \dynucq which use conjunctive queries and unions of conjunctive queries as update formulas respectively.	

For this work, we assume that the input structure is initially empty and that every auxiliary relation is initialized by some first-order initialization. This is to allow us to use the result from Zeume and Schwentick~\cite{zeu:dyncq} that $\dynucq = \dyncq$. However, in our work we only require a very weak form of initialization and hence if $\dynucq$ is sufficient, one could define the precise class needed for the precomputation. We do not do this as the dynamic complexity class needed to \emph{maintain} a spanner is the main focus of this work\footnote{As helpfully pointed out by one of the anonymous reviewers of this paper.}. 

For the proofs in the present paper, one could change the setting by allowing the insertion of unmarked nodes at any point of the word-structure (with an update to the <-relation), given that the word is non-empty. The auxiliary relations in our proofs do not operate on unmarked nodes and do not need to be updated after this. In the same way, we can remove unmarked nodes. However, the present paper does not look at this setting.

\section{Core Spanners are in \dyncq}\label{sec:main}
\newcommand{\limitforreg}{\substack{
p,q \in Q, \; f \in F \\
\delta(s,\xi_1)=p \\ 
\delta(q,\xi_2)=f}}

In this section, we first look at the dynamic complexity of regular spanners. We show that any regular spanner can be maintained by a \dynprop program. We then turn our attention to the main result of this paper, that any core spanner can be maintained by a \dyncq program. In doing so, we also show that \dyncq is at least as expressive as \splog. We then extend this result to show that \dynfo is at least as powerful as \splog with negation, and therefore any generalized core spanner can be maintained in \dynfo.

\begin{proposition}\label{prop:regular}
Regular spanners can be maintained in \dynprop.
\end{proposition}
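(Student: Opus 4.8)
The plan is to reduce the claim to the theorem of Gelade, Marquardt, and Schwentick~\cite{gel:dyn} that \dynprop captures the regular languages, by viewing a regular spanner as a regular language over the extended alphabet $\Sigma\cup\Gamma_V$ and cleanly separating the reading of genuine word symbols from the placement of variable markers. Since $\spanner{\RGXreg}=\spanner{\VAset}$ and functional vset-automata are as expressive as vset-automata, I would fix a functional vset-automaton $A=(Q,q_0,q_f,\delta)$ with $V=\SVars{P}$ and $k\df|V|$. Reading a valid ref-word $r\in\validr{A,w}$ then amounts to reading $w$ while inserting the $2k$ markers $\openvar{x_i},\closevar{x_i}$ at chosen positions, and the domain positions at which these markers are inserted are exactly the entries $\openspanvar{x_i},\closespanvar{x_i}$ recorded in the $2k$-ary spanner relation $R^P$.

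The key auxiliary relations to maintain are \emph{reachability relations}. For each pair $p,q\in Q$ I would maintain a binary relation $\mathrm{Reach}_{p,q}$ such that $\mathrm{Reach}_{p,q}(i,j)$ holds exactly when $A$ has a run from $p$ to $q$ reading the factor of $w$ delimited by the domain positions $i$ and $j$ using only $\Sigma$- and $\emptyword$-transitions, i.e.\ with no variable marker in between. After contracting the variable and $\emptyword$-transitions, these are precisely the reachability relations that~\cite{gel:dyn} maintains for an NFA, so they are \dynprop-maintainable; the crucial point is that an insertion or reset at a single position only splits or merges runs at that position, which can be expressed quantifier-free by composing the $\mathrm{Reach}$ relations at the affected position.

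It then remains to define $R^P$ quantifier-free from these relations. Here the $2k$ marker positions are the free variables of $R^P$, so they need not be quantified at all. Because $k$ is fixed, there are only finitely many linear orderings of the $2k$ markers; for each ordering I would write a conjunction that (i) uses $<$ on the free position variables to assert that the markers occur in that order, (ii) threads the run through $w$ by conjoining reachability atoms $\mathrm{Reach}_{p,q}$ on consecutive marker positions together with the initial and final segments at the word boundaries, and (iii) inserts the matching variable transition from $\delta$ between consecutive segments. The existential choice of the intermediate states becomes a finite disjunction over $Q$, which is harmless in quantifier-free logic, and validity of the ref-word (each variable opened exactly once and before it is closed) is guaranteed by the chosen ordering. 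Taking the disjunction over all orderings and all compatible state sequences yields $R^P$; composing this quantifier-free definition with the \dynprop update formulas for the $\mathrm{Reach}_{p,q}$ gives a \dynprop update program for $R^P$.

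The main obstacle I anticipate is bookkeeping around degenerate configurations rather than any deep difficulty: empty spans and coincident marker positions (several markers inserted between the same two symbols), empty segments that must still permit $\emptyword$-transitions, the boundary cases at the first and last symbol-elements, and the special treatment of spans reaching past the end of the word (encoded by $\closespanvar{x_i}=\$$). Each of these must be folded into the quantifier-free formula so that it stays correct, but none of them forces us to leave \dynprop.
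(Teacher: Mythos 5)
Your proposal is correct and follows essentially the same route as the paper: both reduce to the \dynprop-maintainable reachability relations ($R_{p,q}$, $R^I_q$, $R^F_p$) of Gelade, Marquardt, and Schwentick and then assemble $R^P$ by a quantifier-free formula that threads an accepting run through the $2k$ marker positions, which are free variables and hence need no quantification. The only difference is organizational: the paper first normalizes the vset-automaton to a union of vset-paths (so the order of the variable transitions is fixed along each path and the reachability relations are propagated across those transitions by extra update formulas), whereas you keep a general functional vset-automaton and instead take a finite disjunction over the possible marker orderings and intermediate state sequences --- both are finite case distinctions that remain quantifier-free.
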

\newcommand{\twolinelimit}{\substack{
p \in Q, \\
\delta(s_i,\zeta)=p 
}}

\begin{proof}
Due to the work done by Fagin et al. \cite{fag:spa} we can assume that our vset-automaton is a so called \emph{vset-path union}. We define a vset-path as an ordered sequence of regular deterministic finite automata $A_1$, $A_2$, \dots $A_n$ for some $n \in \mathbb{N}$. Each automaton $A_i$ is of the form $(Q, q_o, F, \delta)$ where $Q$ is the set of states, $q_0 \in Q$ is the initial state, $F$ is the set of accepting states, and $\delta$ is the transition function of the form $\delta \colon Q \times \Sigma \rightarrow Q$. We have the extra assumption that each $f \in F$ only has incoming transitions. All automata, $A_1, A_2, \dots A_n$ share the same set of input symbols $\Sigma$.

Let $A$ be a vset-path. In $A$, each automata $A_i$ where $1 < i \leq n$, the initial state for $A_i$ has incoming transitions from each accepting state from the automaton $A_{i-1}$. These extra transitions between the sequence of automata are labeled, $\openvar{x}$ or $\closevar{x}$ where $x \in \SVars{A}$. We treat the vset-path as a regular vset-automaton and all semantics follow from the definitions in Section \ref{sec:spanner-rep}. We can assume that $A$ is functional \cite{fre:splog}.

Any vset-automaton can be represented as a union of vset-paths~ \cite{fag:spa}. Therefore to prove that any regular spanner can be maintained in \dynprop, it is sufficient to prove that we can maintain a spanner represented by a vset-path, since union can be simulated via disjunction. 

Let $A$ be a vset-path. From Gelade et al. \cite{gel:dyn}, we know that the following relations can be maintained in \dynprop:
\begin{itemize}
\item For any pair of states $p,q \in Q$, $ R_{p,q} \df \{ (i,j) \mid i<j \text{ and } \delta^*(p,w[i+1,j-1]) = q \}$.
\item For each state $q$, $R^I_q \df \{ i \mid \delta^*(q_0,w[1,j-1])=q \}$.
\item For each state $p$, $R^F_p \df \{ j \mid \delta^*(p,[i+1,n]) \in F \}$.
\end{itemize}

We maintain these relations for the vset-path. Some work is needed to deal with the transitions labeled $\openvar{x}$ and $\closevar{x}$. Let $A_i$ and $A_{i+1}$ be two sub-automata such that $1 \leq i < n$, where $n$ is the number of sub-automata. Let $s_{i}$ and $s_{i+1}$ be the starting states for automata $A_i$ and $A_{i+1}$ respectively. Likewise, let $F_i$ and $F_{i+1}$ be the sets of accepting states of $A_i$ and $A_{i+1}$ respectively. The intuition is that if $R_{p, f_i}(x,y)$ where $f_i \in F_i$ holds, then so should $R_{p,s_{i+1}}(x,y)$ since the transition from an accepting state of $A_i$ to the starting state of $A_{i+1}$ is $\openvar{x}$ or $\closevar{x}$. To achieve this, we have the following update formula for $R_{p,s_{i+1}}$
\[
\updateformula{R_{p,s_{i+1}}}{\unknownupdate}{u;x,y} \df \bigvee\limits_{f \in F_{i}} \updateformula{R_{p,f}}{\unknownupdate}{u;x,y}.
\]

We do the analogous for $R^I_q$ and $R^F_p$. If $R^I_{f_{i}}(x)$ holds for any $f_i \in F_i$, then so should $R^I_{s_{i+1}}(x)$. Similarly, if $R^F_{s_{i+1}}(x)$ holds, then so should $R^F_{f_{i}}(x)$ for all $f_i \in F_i$. To achieve this, we proceed analogously to what was done for $\updateformula{R_{p,s_{i+1}}}{\unknownupdate}{u;x,y}$. We also maintain the 0-ary relation $\mathsf{ACC}$ to say whether the word-structure is a member of the language of the vset-path.

We will now give two useful subformulas
\[ \psi^{k'} \df \bigwedge\limits_{1 \leq i \leq k'} \Big( \bigvee\limits_{\zeta \in \Sigma} \big( R_\zeta(\openspanvar{x_i}) \land {R'}^I_{s_i}(x_i^o) \land \bigvee\limits_{\twolinelimit} \big( R'_{p,s_{i+1}}(\openspanvar{x_i},\closespanvar{x_i}) \land \bigvee\limits_{\zeta_2 \in \Sigma}R_{\zeta_2}(\closespanvar{x_i} ) \big) \big) \Big) \]

and 
\[ \psi^{\$} \df \bigvee\limits_{\zeta \in \Sigma} \big( R_\zeta(\openspanvar{x_k}) \land {R'}^I_{s_k}(x_k^o) \land {R'}^I_{F_k}(x_k^c) \land (x_k^c \logeq \$ ) \big). \]

We now give the update formula to maintain a vset-path spanner $A$ with variables $\SVars{A} \df \{x_1, x_2, \dots , x_k \}$
\[ \updateformula{R^A}{\unknownupdate}{u; \openspanvar{x_1},\closespanvar{x_1}, \dots , \openspanvar{x_k}, \closespanvar{x_k} } \df \updateformula{\mathsf{ACC}}{\unknownupdate}{u} \land \big( \psi^{k} \lor (\psi^{k-1} \land \psi^{\$}) \big). \]

Note that, without loss of generality, $R'_{p,q}(x,y)$ is used as a shorthand for $\updateformula{R_{p,q}}{\unknownupdate}{u;x,y}$.
\end{proof}

Since Gelade et al. \cite{gel:dyn} proved that \dynprop maintains exactly the regular languages, it is somewhat unsurprising that we can extend that result to regular spanners. Some work is needed in order to maintain the relation of the spanner, which is why a formal proof of~\cref{prop:regular} is given.

\begin{definition}
The next symbol relation is defined as $\nextrelation := \{ (x,y) \in \worddomain^2 \mid x \nextsym y \}$.
\end{definition}

As stated in Section \ref{sec:dyn}, it is known that $\dyncq=\dynucq$ and therefore to show that a relation can be maintained in \dyncq, it is sufficient to show that the relation can be maintained with \ucq update formulas. We use this to prove many of our results.

\begin{lemma}
\label{lemma:next}
The next symbol relation can be maintained in \dyncq.
\end{lemma}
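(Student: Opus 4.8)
The plan is to maintain the next-symbol relation $\nextrelation = \{(x,y) \mid x \nextsym y\}$ by carefully tracking, for each pair of domain elements, whether they are consecutive symbol-elements (i.e. both carry a terminal and there is no symbol-element strictly between them). Since $\dyncq = \dynucq$ by~\cite{zeu:dyncq}, it suffices to give \ucq update formulas. The central difficulty is that $\nextrelation$ is inherently nonlocal: after a single $\reset{u}$ that deletes the symbol at position $u$, a pair $(x,y)$ that was previously separated by the symbol-element $u$ may \emph{newly} become consecutive, and to recognize this we must verify that the entire open interval $(x,y)$ is now empty. A quantifier-free or purely conjunctive formula cannot scan an unbounded interval on its own, so the whole point is to precompute this interval-emptiness information into an auxiliary relation that the \ucq update formulas can consult.

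First I would introduce an auxiliary ``emptiness'' relation, say $R_{\mathsf{empty}} \df \{(x,y) \mid x < y \text{ and } w(z) = \emptyword \text{ for all } z \text{ with } x < z < y\}$, recording that the open interval strictly between $x$ and $y$ contains no symbol-element. With this relation available, the target relation has a clean characterization: $x \nextsym y$ holds if and only if $x < y$, both $x$ and $y$ are symbol-elements (each satisfies $\bigvee_{\zeta \in \Sigma} R_\zeta(\cdot)$), and $(x,y) \in R_{\mathsf{empty}}$. This is a conjunction of atoms, so once $R_{\mathsf{empty}}$ is maintained, the update formula for $\nextrelation$ itself is a genuine \cq.

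The main work is therefore maintaining $R_{\mathsf{empty}}$ under the two update types. For an insertion $\ins{\zeta}{u}$, the interval $(x,y)$ stops being empty exactly when it was empty before and $u$ lies strictly inside it; so the updated relation is $\{(x,y) \mid R_{\mathsf{empty}}(x,y) \land \neg(x < u \land u < y)\}$. The negated conjunction is not directly a \cq, but I would express the surviving pairs as a \emph{union} over the disjoint cases $u \leq x$, $u \geq y$, and $x < u < y$ being excluded, writing each case with only conjunction and order atoms — this is exactly where the \ucq (rather than \cq) formulation buys us the needed disjunction. For a reset $\reset{u}$, emptiness can only be created, never destroyed: a pair $(x,y)$ becomes empty either because it already was, or because $u$ was the unique symbol-element in $(x,y)$, which we can detect as $x < u \land u < v \land R_{\mathsf{empty}}(x,u) \land R_{\mathsf{empty}}(u,y)$ (using that the two flanking subintervals were empty and $u$ carried the only symbol). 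Each disjunct is again conjunctive, giving a \ucq.

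The step I expect to be the main obstacle is getting the reset case exactly right, together with the boundary conventions around the special element $\$$ and around positions equal to the endpoints. In particular I must be careful that after $\reset{u}$ the newly-consecutive pair is recognized through the two half-intervals $(x,u)$ and $(u,y)$ being empty \emph{before} the reset, and that $u$ itself was a symbol-element beforehand; edge cases where $x$ or $y$ coincides with a word boundary, or where the interval half is degenerate, need separate conjuncts folded into the union. Once the emptiness relation is correctly maintained by these \ucq formulas, the formula for $\nextrelation$ follows immediately as described, and by $\dynucq = \dyncq$ the relation is in \dyncq.
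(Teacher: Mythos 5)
Your proposal is correct, but it takes a genuinely different route from the paper. The paper maintains $\nextrelation$ \emph{directly}, alongside the helper relations $\firstrel$ and $\lastrel$: for $\ins{\zeta}{u}$ it performs a case analysis on how $u$ relates to the pair $(x,y)$ (five \ucq disjuncts, e.g.\ keeping old tuples when $u\leq x$ or $y\leq u$, and creating new tuples $(u,y)$ or $(x,u)$ by consulting $\firstrel$, $\lastrel$, or an existing neighbour $v$ with $v\nextsym y$), and for $\reset{u}$ it uses the key observation that a new tuple can only arise as $\nextrelation(x,u)\land\nextrelation(u,y)$. You instead factor all of the nonlocality into an interval-emptiness relation $R_{\mathsf{empty}}(x,y)$ and recover $\nextrelation$ compositionally as ``both endpoints carry symbols and the interval between them is empty''; the updates to $R_{\mathsf{empty}}$ are then almost trivial (insertion kills exactly the intervals strictly containing $u$; reset merges the two flanking empty half-intervals). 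Your route is arguably more modular and makes the correctness argument for the reset case essentially immediate, and it dispenses with $\firstrel$ and $\lastrel$ for this lemma (though the paper still needs those relations elsewhere, so they must be maintained regardless). The price is a cosmetic one: your auxiliary relation $R_{\mathsf{empty}}$ necessarily contains tuples over unmarked nodes, whereas the paper explicitly advertises (in its related-work discussion) that its auxiliary relations only ever mention active nodes, a property your construction would break; this does not affect the validity of the lemma, but it is the one concrete respect in which the paper's more laborious case analysis buys something yours does not. Minor nits: the conjunct $u<v$ in your reset detector should read $u<y$, and you should state the first-order initialization $R_{\mathsf{empty}}=\{(x,y)\mid x<y\}$ for the initially empty structure, mirroring the paper's initialization of $\firstrel$ and $\lastrel$.
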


To prove Lemma \ref{lemma:next}, we maintain the relations $\firstrel \df \{ x \in \worddomain \mid \position{w}{x}=1 \}$ and $\lastrel \df \{ x \in \worddomain \mid \position{w}{y} = |w| \}$. Note that these relations would be undefined for an empty input structure (because $\position{w}{x}$ is undefined). Hence we have that if $|w|=0$ then $x \in \firstrel$ if and only if $x =\$$, and $y \in \lastrel$ if and only if $y$ is the $<$-minimal element. This requires the initialization of $\firstrel \df \{ \$ \}$ and $\lastrel \df \{ 1 \}$. This is the only initialization required in our work, however the stated \emph{first-order initialization} of auxiliary relations is needed to ensure $\dynucq = \dyncq$. 

\begin{example}
Consider the following word-structure: 
\begin{center}
	\begin{tabular}{ccccccc}
		1&2&3&4&5&6&$\$$\\
		$\emptyword$ & $a$ & $b$ & $\emptyword$ & $b$ & $\emptyword$ & $\emptyword$
	\end{tabular}
\end{center}
We have that $\firstrel = \{ 2 \}$ and $\lastrel = \{ 5 \}$ and $\nextrelation = \{ (2,3), (3,5)\}$.
\end{example}

We will now give an idea for the proof of~\cref{lemma:next}. Let $u$ be the node which is being updated. For insertion, if $x \nextsym y$ and $x < u < y$ then $x \newnextsym u \newnextsym y$. If $\firstrel(x)$ and $u < x$, then $\firstrel'(u)$ and $u \newnextsym x$. The analogous is done if $\lastrel(x)$ and $u> x$. For deletion, if $x \nextsym u \nextsym y$ then $x \newnextsym y$. The full proof also looks at when $x \nextsym y$ and $x \newnextsym y$ (for example when $u < x$ or when $u > y$). See the appendix for the proof.

\begin{definition}
The equal substring relation, $\equalsubstr$, is the set of 4-tuples $(\oneopen,\oneclose,\twoopen,\twoclose)$ such that $w[\oneopen,\oneclose]=w[\twoopen,\twoclose]$, 
$\oneclose < \twoopen$, 
and $w[z]\neq\emptyword$ for all $z\in\{\oneopen,\oneclose,\twoopen,\twoclose\}$. 
\end{definition}

Less formally, we have that if $(\oneopen,\oneclose,\twoopen,\twoclose) \in \equalsubstr$ then the word $w[\oneopen,\oneclose]$ is equal to the word $w[\twoopen,\twoclose]$. For our uses, we do not want these subwords to overlap, hence the constraint  $\oneclose < \twoopen$. The reason for this will become clear later on when we look at maintaining pattern languages. We also wish that each tuple represents a unique pair of subwords, therefore we have that $\oneopen$, $\oneclose$, $\twoopen$, and $\twoclose$ each have symbols associated to them.

\begin{example}
Consider the following word-structure:
\begin{center}
	\begin{tabular}{ccccccccccc}
		1&2&3&4&5&6&7&8&9&10& $\$$ \\
		$a$ & $\emptyword$ & $\emptyword$ & $b$ & $a$ & $\emptyword$ & $b$ & $\emptyword$ & $a$ & $b$ & $\emptyword$
	\end{tabular}
\end{center}

The equal substring relation for this structure is
$
\equalsubstr = \{  (1,1,5,5), (1,1,9,9), (4,4,7,7),$ $(4,4,10,10), 
(5,5,9,9), (7,7,10,10), (1,4,5,7), (1,4,9,10), (4,5,7,9), (5,7,9,10) \}.
$

Although $w[3,5] = w[7,9]$, this does not imply $(3,5,7,9) \in \equalsubstr$ because $w[3] = \emptyword$. We also do not have $(9,10,5,7) \in \equalsubstr$ because $10 > 5$.
\end{example}

\begin{lemma}\label{lemma:eqsubstr}
The equal substring relation can be maintained in \dyncq.
\end{lemma}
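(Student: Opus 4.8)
The plan is to exhibit \ucq update formulas for the two operations $\ins{\zeta}{u}$ and $\reset{u}$; since $\dyncq=\dynucq$, this places $\equalsubstr$ in \dyncq. The update formula for $\equalsubstr$ is evaluated against the already-updated word-structure together with the \emph{old} auxiliary relations, so it may use the new symbol relations $R_\zeta$ and the old relations $\equalsubstr$, $\nextrelation$, $\firstrel$, and $\lastrel$ (the latter three from \cref{lemma:next}). The guiding observation is that a candidate tuple $(\oneopen,\oneclose,\twoopen,\twoclose)$ describes two substrings occupying \emph{disjoint} ranges, because the relation demands $\oneclose<\twoopen$; hence an updated position $u$ lies in at most one of the two intervals, and the update changes the stored word by inserting or deleting a single symbol $\zeta$ at $u$.

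The main case split is on the position of $u$. If $u$ lies in neither interval, that is $u<\oneopen$, or $\oneclose<u<\twoopen$, or $\twoclose<u$, then both substrings are untouched, so the tuple belongs to the new relation exactly when it belonged to the old $\equalsubstr$; each of these three ranges is a positive condition, giving three CQs that copy the old relation. If $u$ lies strictly inside one interval, say $\oneopen<u<\oneclose$, the crux is to match the changed symbol against a single split position $v$ in the other interval. For an insertion of $\zeta$, the new first substring is the old prefix up to $u$, then $\zeta$, then the old suffix after $u$; it equals the unchanged second substring iff there is a symbol-element $v$ with $\twoopen<v<\twoclose$ and $R_\zeta(v)$ whose prefix and suffix match the two pieces. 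Both matches are \emph{old} equal-substring facts: writing $u^-,u^+$ for the symbol-elements flanking $u$ (so $u^-\nextsym u^+$ before the insertion) and $v^-,v^+$ for those of $v$, the condition is $(\oneopen,u^-,\twoopen,v^-)\in\equalsubstr$ and $(u^+,\oneclose,v^+,\twoclose)\in\equalsubstr$, the adjacencies being supplied by the old $\nextrelation$; the required orderings $u^-<\twoopen$ and $\oneclose<v^+$ follow from $\oneclose<\twoopen$. For a reset, deleting $\zeta$ glues the old prefix and suffix of the first interval directly, so the split uses $v\nextsym v^+$ together with $(\oneopen,u^-,\twoopen,v)\in\equalsubstr$ and $(u^+,\oneclose,v^+,\twoclose)\in\equalsubstr$. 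The subcase where $u$ lies inside the second interval is the same construction with the roles exchanged. Because the old $\equalsubstr$ forces equal lengths, the existence of such a split point is both necessary and sufficient for the two full substrings to coincide, so every subcase is an existentially quantified conjunction of positive atoms, i.e.\ a CQ.

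It remains to treat the degenerate configurations, and this is where I expect the bookkeeping to be the real obstacle. The single updated symbol can sit at an endpoint of an interval only for an insertion, where $u$ becomes one of $\oneopen,\oneclose,\twoopen,\twoclose$; then one piece is the empty word and the construction collapses to a one-symbol peel from that end, with the old $\equalsubstr$ applied to the shortened interval, while length-one intervals ($\oneopen=\oneclose$ or $\twoopen=\twoclose$) form the base case handled directly through $R_\zeta$. For resets no such endpoint degeneracy arises, since $\oneopen<u<\oneclose$ forces both flanking symbol-elements to stay inside the interval. Locating the relevant flanking symbol-element is exactly where $\firstrel$ and $\lastrel$ are needed, namely when $u$ becomes the global first or last symbol-element. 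The main difficulty is verifying that every such configuration is expressible \emph{positively} and \emph{existentially} — so that the disjunction over all position cases, over the endpoint subcases, and over the finitely many $\zeta\in\Sigma$ (where one must assert that a position carries a symbol) is a genuine \ucq rather than requiring negation or universal quantification. Once all cases are assembled, $\dyncq=\dynucq$ yields the claim.
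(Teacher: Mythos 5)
Your construction follows the paper's proof essentially step for step: the same outer conjunct enforcing $\oneclose<\twoopen$ and symbol-carrying endpoints, the same trichotomy on where $u$ falls, the same copy-the-old-relation disjuncts when $u$ is outside both intervals, the same splitting of the other interval at a matching symbol-element $v$ with $R_\zeta(v)$ witnessed by two \emph{old} $\equalsubstr$ facts on the prefix and suffix pieces, and the same endpoint and length-one base cases, all assembled into a \ucq and closed off via $\dyncq=\dynucq$. The one slip is in the interior-insertion case: the model also permits $\ins{\zeta}{u}$ when $u$ already carries a different symbol $\zeta'\neq\zeta$ (only $w(u)=\zeta$ is forbidden), and there your flanking condition $u^-\nextsym u^+$ in the \emph{old} word is false (the old word has $u^-\nextsym u\nextsym u^+$), so neither your interior CQs nor your copy CQs fire and the tuple is lost; the paper sidesteps this by expressing the adjacency of the flanking elements through the updated relation $\nextrelation'$ (i.e., $z_1\newnextsym u\newnextsym z_2$), which covers the previously-empty and the overwrite situations uniformly, and your sketch is repaired either by doing the same or by adding the parallel disjunct for the overwrite case.
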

We now give a proof idea for~\cref{lemma:eqsubstr}. There are four main cases for the tuple $(x_1,y_1,x_2,y_2)$ we examine in the full proof.
\begin{itemize}
\item Case 1: $w[x_1,y_1] = w[x_2,y_2]$ and $w'[x_1,y_1] \neq w'[x_2,y_2]$.
\item Case 2: $w[x_1,y_1] = w[x_2,y_2]$ and $w'[x_1,y_1] = w'[x_2,y_2]$.
\item Case 3: $w[x_1,y_1] \neq w[x_2,y_2]$ and $w'[x_1,y_1] = w'[x_2,y_2]$.
\item Case 4: $w[x_1,y_1] \neq w[x_2,y_2]$ and $w'[x_1,y_1] \neq w'[x_2,y_2]$.
\end{itemize}

Where we assume that $y_1 < x_2$. One can see that the main case out of these four is Case~3. One of the interesting sub-cases of Case 3 is illustrated in~\cref{fig:word}. Here, one can think of the new symbol at node $u$ as a ``bridge'' between the two equal substrings $w[x_1,v_1]$ and $w[x_2,v_3]$ (which are the word $w_1$) and the equal substrings $w[v_2,y_1]$ and $w[v_4,y_2]$ (which are the word $w_2$). Hence, after the update we have that $w'[x_1,y_1]=w'[x_2,y_2]$ even though $w[x_1,y_1] \neq w[x_2,y_2]$ (under the assumptions that $w(v) = a$, $v_1 \newnextsym u \newnextsym v_2$ and that $v_3 \newnextsym v \newnextsym v_4$). After examining a case like this, one would need to write an update formula to realize it.

The proof of~\cref{lemma:eqsubstr} looks through all the cases and produces a $\ucq$ update formula for each. These subformulae are joined together by disjunction to give us an update formula $\updateformula{\equalsubstr}{\unknownupdate}{u}$ which is in $\dynucq$, and hence we have proven that we can maintain the equal substring relation in $\dyncq$. See the appendix for the proof.

\begin{figure}
\tikzset{every picture/.style={line width=0.75pt}} 
\begin{tikzpicture}[x=0.75pt,y=0.75pt,yscale=-1,xscale=1]
\draw    (70.1,50) -- (70.1,70) ;
\draw    (130.1,50) -- (130.1,70) ;
\draw    (170.43,50) -- (170.43,70) ;
\draw    (230.1,50) -- (230.1,70) ;
\draw    (150.1,50) -- (150.1,70) ;
\draw    (360.1,50) -- (360.1,70) ;
\draw    (420.1,50) -- (420.1,70) ;
\draw    (460.43,50) -- (460.43,70) ;
\draw    (520.1,50) -- (520.1,70) ;
\draw    (440.1,50) -- (440.1,70) ;
\draw    (30,60) -- (550,60) ;
\draw (150.33,41) node    {$u$};
\draw (150.07,76) node  [color={rgb, 255:red, 208; green, 2; blue, 27 }  ,opacity=1 ] [align=left] {a};
\draw (100.1,76) node    {$w_{1}$};
\draw (200.17,76) node    {$w_{2}$};
\draw (71,41) node    {$x_{1}$};
\draw (229.67,41) node    {$y_{1}$};
\draw (440.33,41) node    {$v$};
\draw (439.97,76) node   [align=left] {a};
\draw (390,76) node    {$w_{1}$};
\draw (490.17,76) node    {$w_{2}$};
\draw (361,41) node    {$x_{2}$};
\draw (519.67,41) node    {$y_{2}$};
\draw (130.2,41) node    {$v_{1}$};
\draw (170.2,41) node    {$v_{2}$};
\draw (420.2,41) node    {$v_{3}$};
\draw (459.8,41) node    {$v_{4}$};
\end{tikzpicture}
\caption{Word after the insertion of the symbol $a$ at node $u$.\label{fig:word}.}
\end{figure}
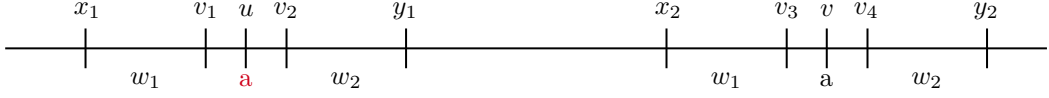

\cref{lemma:eqsubstr} is a central part of the proof of our main result, and some may consider maintaining this relation also to be the most technical aspect of the present paper. This relation will be the main feature of a construction to maintain so-called \emph{pattern languages}, which we then extend with regular constraints to maintain any relations selectable by~\splog.

Given a pattern $\alpha \in (\Sigma \cup \Xi)^+$, we define the \emph{non-erasing} language it generates as $\nonerasing{\lang}(\alpha) \df \{ \sigma(\alpha) \mid \sigma \colon (\Sigma \cup \Xi)^+ \rightarrow \Sigma^+ \text{ where $\sigma$ is a substitution}  \}$. Given the same pattern~$\alpha$, we have $\erasing{\lang}(\alpha) \df \{ \sigma(\alpha) \mid \sigma \colon (\Sigma \cup \Xi)^+ \rightarrow \Sigma^* \text{ where $\sigma$ is a substitution}  \}$ which is the \emph{erasing} language~$\alpha$ generates. Pattern languages are not only used as a part of word equations but also as language generators (see \cite{fre:doc} for more details, in particular regarding their relation to document spanners).

\begin{example}
Consider $\alpha \df axxb$ where $a,b \in \Sigma$ and $x \in \Xi$. Then $ab \in \erasing\lang(\alpha)$ with $\sigma(x)=\emptyword$, but $ab \notin \nonerasing\lang(\alpha)$. We can also see that $ ababab \in \nonerasing\lang(\alpha)$ and $ababab \in \erasing\lang(\alpha)$ using $\sigma(x) = ba$.
\end{example}

We take the definition of maintaining a language from \cite{gel:dyn}. We can \emph{maintain a language} $L$ if a dynamic program maintains a 0-ary relation which is true if and only if $\dynword(\wordstruc) \in L$.

\begin{lemma}
\label{prop:patterns}
Every non-erasing pattern language can be maintained in \dyncq.
\end{lemma}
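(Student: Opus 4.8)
The plan is to characterise membership in a non-erasing pattern language by the existence of a factorisation of the input word, and then to express this existence as a \ucq query over the auxiliary relations that \cref{lemma:next} and \cref{lemma:eqsubstr} already maintain. Fix the pattern $\alpha = \alpha_1 \cdots \alpha_m$ with $m := |\alpha|$ (a constant, since $\alpha$ is not part of the input) and each $\alpha_i \in \Sigma \cup \Xi$. A word $w$ lies in $\nonerasing{\lang}(\alpha)$ exactly when the symbol-elements of $w$ can be cut into $m$ consecutive, non-empty blocks, one per pattern position, such that (i) the block of a terminal position $\alpha_i = a$ is the single symbol $a$; (ii) any two blocks belonging to the same variable describe the same subword; and (iii) every block is non-empty, which is precisely the non-erasing requirement. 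I would guess the $2m$ boundary positions $s_1, e_1, \dots, s_m, e_m$ (the start and end symbol-element of each block) by existential quantification.

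The constraints then translate directly into atoms over the maintained relations: $\firstrel(s_1)$ and $\lastrel(e_m)$ force the blocks to cover all of $w$; the conjuncts $\nextrelation(e_i, s_{i+1})$ for $1 \le i < m$ make the blocks consecutive with neither gaps nor overlaps (and, since $\nextrelation$ only relates symbol-elements, they also guarantee the interior boundaries carry symbols); $s_i \le e_i$ together with $\symel{s_i}$ enforces non-emptiness; a terminal position contributes $s_i \logeq e_i \wedge R_a(s_i)$; and for a variable $x$ occurring at positions $i_1 < \dots < i_t$ I chain the equalities $\equalsubstr(s_{i_\ell}, e_{i_\ell}, s_{i_{\ell+1}}, e_{i_{\ell+1}})$ for $1 \le \ell < t$, whose consistency across all occurrences follows by transitivity. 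Crucially, the ordering condition $\oneclose < \twoopen$ built into $\equalsubstr$ is automatically satisfied, because block $i_\ell$ precedes block $i_{\ell+1}$; this is exactly the situation anticipated in the remark following the definition of the equal substring relation. The resulting characterisation is an existentially quantified conjunction of atoms, with only finite disjunctions hidden in $\le$ and $\symel{\cdot}$, hence a \ucq over $\firstrel$, $\lastrel$, $\nextrelation$, $\equalsubstr$, and the symbol predicates.

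Finally I would assemble these ingredients into a single dynamic program maintaining a $0$-ary relation $\mathsf{MEM}$ that holds iff $\dynword(\wordstruc) \in \nonerasing{\lang}(\alpha)$. Because every update formula is evaluated on the new word-structure together with the \emph{old} auxiliary relations, I cannot simply read the characterisation off the stored relations; instead I inline, replacing each occurrence of $\nextrelation$, $\firstrel$, $\lastrel$, and $\equalsubstr$ in the membership query by its \ucq update formula from \cref{lemma:next} and \cref{lemma:eqsubstr} (the symbol predicates $R_\zeta$ need no inlining, as they already reflect the update in the new word-structure). Since \ucq is closed under conjunction, disjunction, and existential quantification, the composed update formula $\updateformula{\mathsf{MEM}}{\unknownupdate}{u}$ is again a \ucq, and the required first-order initialisation is inherited from the subsidiary relations; invoking $\dyncq = \dynucq$ then yields the claim. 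The empty word is correctly excluded, since for $|w| = 0$ we have $\firstrel = \{\$\}$, forcing $s_1$ onto the non-symbol position $\$$ so that no block can be non-empty. The main obstacle is not any individual step but getting the factorisation characterisation exactly right — in particular verifying that the $\nextrelation$-chain really produces a gap-free partition and that the compositional inlining preserves \ucq-ness — while the genuinely hard technical work of maintaining $\equalsubstr$ has already been discharged in \cref{lemma:eqsubstr}.
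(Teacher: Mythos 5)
Your proposal is correct and follows essentially the same route as the paper: the paper's Algorithm~\ref{algo:patLangUpdate} builds exactly this existentially quantified conjunction, guessing block boundaries, chaining them with $\nextrelation'$, anchoring with $\firstrel'$ and $\lastrel'$, using $R_{\alpha_i}$ for terminal positions, linking each repeated variable occurrence to its most recent previous occurrence via $\equalsubstr'$, and then inlining the update formulas from \cref{lemma:next} and \cref{lemma:eqsubstr} to obtain a \ucq update formula. The only differences are notational (the paper uses a single element $t_i$ for a terminal block where you use $s_i \logeq e_i$).
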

\begin{proof}
To prove this lemma, we give a way to symbolically construct an update formula to maintain a 0-ary relation $\mathcal{P}$ which updates to true if and only if $w' \in \nonerasing{\lang}(\alpha)$ for any specified $\alpha \in (\Sigma \cup \Xi)^+$. Let $|\alpha|$ be the length of the pattern $\alpha$. Let $\alpha_i$ denote the $i^{th}$ symbol (from $\Xi$ or $\Sigma$) of the pattern $\alpha$ where $1 \leq i \leq |\alpha|$. We give the construction in Algorithm~\ref{algo:patLangUpdate}.

\SetKwInput{KwInput}{Input}
\SetKwInput{KwOutput}{Output}

\begin{algorithm}
\textbf{Input:} A pattern $\alpha \in (\Sigma \cup X)^+$. \; \\
\textbf{Output:} Update formulas $\updateformula{\patternquery}{\absins{\zeta}}{u}$ and $\updateformula{\patternquery}{\absreset}{u}$. \; \\
If $\alpha_1 \in \Sigma$ then $\omega_1 := R_{\alpha_1}(t_1) \land \firstrel'(t_1) $; \; \\
If $\alpha_1 \in \Xi$ then $\omega_1 := (x_1 \leq t_1) \land \firstrel'(x_1) $; \; \\
  \For{$i:=2$ \KwTo $|\alpha|$ }{
    	\If{$\alpha_i \in \Sigma$}{
   		$\omega_i := \symbolrel{\alpha_i}(t_i) \land \nextrelation'(t_{i-1},t_i) \land \omega_{i-1} $; \; 
   	}
	\If{$\alpha_i \in \Xi $}{
		\eIf{there exists $j \in \mathbb{N}$ where $j<i$ such that $\alpha_i = \alpha_j$}{
			$j_{max} :=$ Largest $j$ value such that $j<i$ and $\alpha_i = \alpha_j$; \; \\
			$\omega_i := \nextrelation'(t_{i-1}, x_i) \land (x_i \leq t_i) \land \equalsubstr'(x_{j_{max}}, t_{j_{max}}, x_i, t_i)  \land \omega_{i-1}$; \;
		}{
			$\omega_i := \nextrelation'(t_{i-1}, x_i) \land (x_i \leq t_i) \land \omega_{i-1}$; \;
		}
   	}
    }
$\omega := \big( \omega_{|\alpha|} \land \lastrel'(t_{|\alpha|}) \big)$; \; \\
For every occurrence of some  $t_i$ in $\omega$, where $i \leq |\alpha|$, add $\exists t_i$ to the front of $\omega$; \; \\
For every occurrence of some $x_i$ in $\omega$ add $\exists x_i$ to the front of $\omega$; \; \\
$\phi_{\absins{\zeta}}^{\patternquery}(u) := \omega$; \;
$\phi_{\absreset}^{\patternquery}(u) := \omega$; \;
\caption{Pattern Language Update Formula Construction.}\label{algo:patLangUpdate}
\end{algorithm}

Note that occurrences of $\nextrelation'$ and $\equalsubstr'$ in Algorithm~\ref{algo:patLangUpdate} are the relations correct \emph{after} the update. To achieve this, we can replace occurrences of $\nextrelation'(\dots)$ with $\updateformula{\nextrelation}{\unknownupdate}{\dots}$, where $\unknownupdate$ is the update for which the update formula of $\mathcal{P}$ is being constructed. The equivalent is done for $\equalsubstr$.
\end{proof}

\begin{example}
Let $\alpha \df axbx$ be a pattern such that $a,b \in \Sigma$ and $x \in \Xi$. As stated, we wish to maintain a 0-ary relation $\mathcal{P}$ such that $\mathcal{P}$ is true if and only if $w' \in \nonerasing\lang(\alpha)$ where $w'$ is our word after some update. 
\begin{itemize}
\item $\alpha_1 = a$: therefore $\alpha_1\in \Sigma$ and hence we have $\omega_1 \df R_{a}(t_1) \land \firstrel'(t_1)$.
\item $\alpha_2 = x$: therefore $\alpha_2\in\Xi$ therefore we have $\omega_2 \df \nextrelation'(t_1, x_2) \land (x_2 \leq t_2) \land \omega_1$.
\item $\alpha_3 = b$: therefore $\alpha_3 \in \Sigma$ and hence we have $\omega_3 \df R_b(t_3) \land \nextrelation'(t_2,t_3) \land \omega_2$.
\item $\alpha_4 = x$ and $\alpha_4 = \alpha_2$: therefore $\omega_4 \df \nextrelation'(t_3,x_4) \land (x_4\leq t_4) \land \equalsubstr'(x_2,t_2,x_4,t_4) \land \omega_3$.
\end{itemize}

We rearrange the atoms in $\omega$ to help with readability, giving us:
\begin{multline*}
\omega \df  \firstrel'(t_1) \land R_{a}(t_1) \land \nextrelation'(t_1, x_2) \land (x_2 \leq t_2) \land \nextrelation'(t_2,t_3)  \land R_b(t_3)  \\
\land \nextrelation'(t_3,x_4) \land (x_4\leq t_4) \land \equalsubstr'(x_2,t_2,x_4,t_4) \land \lastrel'(t_4).
\end{multline*}

Hence $\updateformula{\mathcal{P}}{\unknownupdate}{u} \df \exists t_1,t_2,t_3,t_4,x_2,x_4 \colon (\omega)$ which holds for a word-structure of the form:
\begin{center}
	\begin{tabular}{cccccccccc}
		$\dots$ & $ t_1 $&$ \mathbf{x_2} $&$ \mathbf{\dots} $&$ \mathbf{t_2} $&$ t_3 $&$ \mathbf{x_4} $&$ \mathbf{\dots} $&$ \mathbf{t_4} $ & $\dots$ \\
		$\emptyword$ & $a$ & & $\dots$ & & $b$ &  & $\dots$ & & $\emptyword$ 
	\end{tabular}
\end{center}

We have that $x_2,t_2,x_4,t_4$ are in bold to demonstrate the fact that it must be that $w'[x_2,t_2] = w'[x_4,t_4]$ for $\updateformula{\mathcal{P}}{\unknownupdate}{u}$ to hold. Note that $t_1$ may not be $<-minimal$ and $t_4$ may not be $<-maximal$, but because $\firstrel'(t_1)$ and $\lastrel'(t_4)$ must hold, $t_1$ and $t_4$ are the first and last symbol-elements respectively.
\end{example}

One side effect of~\cref{prop:patterns} is that we get the dynamic complexity upper bounds of a class of languages, the pattern languages. Pattern languages were not looked at in~\cite{gel:dyn} and hence this result extends what is known about the dynamic complexity of formal languages.

\begin{corollary}
\label{cor:erasing}
Every erasing pattern language can be maintained in \dyncq.
\end{corollary}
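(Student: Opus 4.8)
The plan is to reduce the erasing case to the non-erasing case settled in Lemma~\ref{prop:patterns}, by exploiting the observation that an erasing substitution is nothing more than a choice of which variables are sent to $\emptyword$ followed by a non-erasing substitution of the remaining variables. First I would fix a pattern $\alpha \in (\Sigma \cup \Xi)^+$ and, for each subset $S \subseteq \var(\alpha)$, write $\alpha_S$ for the pattern obtained from $\alpha$ by deleting every occurrence of every variable in $S$. Since a substitution assigns each variable a single image, any $\sigma\colon (\Sigma\cup\Xi)^+ \to \Sigma^*$ splits $\var(\alpha)$ into the variables it maps to $\emptyword$ and those it maps into $\Sigma^+$; taking $S$ to be the former set shows $\sigma(\alpha) \in \nonerasing{\lang}(\alpha_S)$, and conversely any non-erasing substitution of $\alpha_S$ extends to an erasing substitution of $\alpha$ by sending the variables of $S$ to $\emptyword$. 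This yields the finite decomposition
\[
\erasing{\lang}(\alpha) = \bigcup_{S \subseteq \var(\alpha)} \nonerasing{\lang}(\alpha_S),
\]
where the union is finite because the fixed pattern $\alpha$ has finitely many variables.

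Next I would maintain the target $0$-ary relation as a disjunction driven by this decomposition. For every $S$ with $\alpha_S \neq \emptyword$ (this includes the case where $\alpha_S$ has become a pure terminal word), Lemma~\ref{prop:patterns} supplies update formulas $\updateformula{\patternquery_S}{\absins{\zeta}}{u}$ and $\updateformula{\patternquery_S}{\absreset}{u}$ that hold after an update exactly when $w' \in \nonerasing{\lang}(\alpha_S)$; crucially, these reuse only the shared relations $\firstrel$, $\lastrel$, $\nextrelation$, and $\equalsubstr$, so no pattern-specific auxiliary relation has to be introduced for the residual patterns. I would then define the update formulas for the erasing relation $\patternquery_E$ to be the disjunction of all the $\updateformula{\patternquery_S}{\mathsf{op}}{u}$ over these $S$.

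The one remaining subtlety is the empty word, which can only enter through the summand $S = \var(\alpha)$ when $\alpha \in \Xi^+$: there $\alpha_S = \emptyword$, and Lemma~\ref{prop:patterns} does not apply since patterns are required to be non-empty. This is exactly the situation $\emptyword \in \erasing{\lang}(\alpha)$, and I would handle it with the single extra disjunct $\firstrel'(\$)$, which by the initialization convention for $\firstrel$ (stated just before Lemma~\ref{lemma:next}) holds after an update precisely when $w' = \emptyword$. Since each disjunct is a \ucq update formula and a finite disjunction of \ucq-formulas is again a \ucq, the relation $\patternquery_E$ is maintained in \dynucq, hence by $\dynucq = \dyncq$ in \dyncq. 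I expect the only real care to lie in this edge-case bookkeeping — the empty word and the pure-terminal residual patterns $\alpha_S$ — rather than in any new construction, since all of the technical work is inherited from Lemma~\ref{prop:patterns}.
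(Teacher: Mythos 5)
Your proof is correct and takes essentially the same route as the paper's: the paper simply cites Jiang et al.\ for the fact that every erasing pattern language is a finite union of non-erasing pattern languages (precisely the decomposition $\erasing{\lang}(\alpha)=\bigcup_{S\subseteq\var(\alpha)}\nonerasing{\lang}(\alpha_S)$ that you prove explicitly), then joins the corresponding $0$-ary relations from Lemma~\ref{prop:patterns} by disjunction and handles $\emptyword\in\erasing{\lang}(\alpha)$ via the fact that $\$\in\firstrel$ exactly when the represented word is empty. The only difference is that you spell out the (easy) decomposition argument and the pure-terminal/empty residual edge cases rather than citing them, which if anything makes the argument more self-contained.
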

\begin{proof}

From Jiang et al. \cite{jiang:pat} it is known that every erasing pattern language is the finite union of non-erasing pattern languages. Therefore, we can create 0-ary relations for each non-erasing pattern language and join them with a disjunction. There is the case where $\emptyword \in \erasing{\lang}(\alpha)$ which we can deal with using the following: $\exists x \colon ( \firstrel(x) \land (x \logeq \$) )$. We can do this because $\firstrel = \{\$ \}$ whenever $w=\emptyword$.
\end{proof}

Since we are able to maintain any erasing pattern language in \dyncq, we can extend this result to word-equations in \splog-formulas. Using this along with the fact that regular languages can be maintained in \dynprop, we can conclude the following: 

\begin{lemma}\label{lem:splogDynCQ}
Any relation selectable in \splog can be maintained in \dyncq.
\end{lemma}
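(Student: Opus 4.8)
The plan is to reduce maintaining a selectable relation to maintaining the satisfying assignments of a single \splog-formula, and then to induct on that formula using the ingredients already established. First I would turn selectability into a \emph{defining} formula for $R$: applying the definition of \splog-selectability to the formula $\varphi\df\exists p_1,s_1,\dots,p_k,s_k\colon\bigwedge_{i=1}^{k}(\mv\weqeq p_i x_i s_i)$, which merely forces each $x_i$ to be a factor of $\mv$, yields a formula $\psi\df\varphi^{R}_{\vec{x}}\in\splog(\mv)$. Its satisfying assignments $\sigma$ with $\sigma(\mv)=w$, read off as the spans of $x_1,\dots,x_k$, are precisely the span-tuples whose contents lie in $R$; ranging over all $w$ recovers all of $R$ (see~\cite{fre:splog} for the surrounding theory of selectable relations). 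Maintaining $R$ in \dyncq therefore amounts to maintaining, over the word-structure, the $2k$-ary relation of span-tuples that encodes these assignments.

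Second, I would prove by structural induction on $\psi$ that the relation of encoded satisfying assignments of any subformula can be maintained in \dynucq, which equals \dyncq by Zeume and Schwentick~\cite{zeu:dyncq}. The base cases are the two kinds of atoms. For a word equation $\mv\weqeq\eta_R$ (rule B1), the satisfying assignments are exactly the factorizations of $w$ matching the pattern $\eta_R$, so I would adapt the construction behind \cref{prop:patterns} (Algorithm~\ref{algo:patLangUpdate}) so that, instead of collapsing to a $0$-ary relation, it \emph{retains} the boundary positions $x_i,t_i$ of each variable occurrence as free tuple components. As there, \nextrelation enforces adjacency of consecutive pattern pieces, $\symbolrel{\zeta}$ pins down terminal symbols, \firstrel and \lastrel anchor the two ends, and \equalsubstr forces repeated variables to span equal subwords; the empty-image sub-cases are handled as in \cref{cor:erasing}. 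For a regular constraint $\constr_A(x)$ (rule R4), I would conjoin the span-pair relation $\{(i,j)\mid w_{\spn{i,j}}\in\lang(A)\}$, which is maintainable in \dynprop via the automaton relations $R_{p,q}$ used in the proof of \cref{prop:regular}.

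Third, the inductive steps correspond to the connectives. Conjunction (R1) is the join of the two assignment relations on their shared variables, expressible by a \cq and hence already within \dyncq; disjunction (R2) is a union, which is exactly where \ucq (and thus \dynucq) is needed; and existential quantification (R3) is a projection, realized by an existential quantifier in a \cq. Since each of these operations preserves \dynucq-maintainability (one keeps the sub-relations as auxiliary relations and combines them by the corresponding \ucq update formula), the induction goes through, and the final designated relation maintains $R$ in \dyncq.

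I expect the word-equation base case to be the main obstacle: converting the membership test of Algorithm~\ref{algo:patLangUpdate} into a construction that exposes the full tuple of variable spans requires care in tracking which boundary positions remain free versus existentially bound, in correctly chaining \nextrelation and \equalsubstr for patterns with repeated variables, and in covering the erasing (empty-image) cases without disturbing the quantifier structure that keeps the update formulas inside \ucq. The genuinely hard ingredient, however, is already behind us: once \equalsubstr is available (\cref{lemma:eqsubstr}, the most technical step), the remainder is assembly and bookkeeping built on \cref{lemma:next,prop:patterns,cor:erasing,prop:regular}.
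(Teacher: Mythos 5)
Your proposal is correct and follows essentially the same route as the paper: a structural induction over the \splog-formula defining the relation, in which word equations are handled by exposing the variable boundary positions $x_i,t_i$ of the pattern-language construction from \cref{prop:patterns} as free tuple components, regular constraints are handled via the automaton relations of~\cite{gel:dyn} together with $\nextrelation$, $\firstrel$, and $\lastrel$, and the connectives are handled by conjunction, disjunction, and existential quantification of the update formulas. The only cosmetic difference is that you make the initial reduction from selectability to a single defining formula $\varphi^{R}_{\vec{x}}$ explicit, which the paper leaves implicit.
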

\begin{proof}
We prove this lemma using structural induction with the recursive definition of a \splog formula, given in Definition \ref{def:splog}.

\textbf{B1.} $(\mv \weqeq \eta_R)$ for every $\eta_R \in  (\Xi\cup\Sigma)^*$: Since we are assuming that $\sigma(\mv) \in \Sigma^*$ and that $\eta_R$ does not contain $\mv$, we have that $\mv \weqeq \eta_R$ is equivalent to $\sigma(\mv) \in \erasing{\lang}(\eta_R)$. We have proven in~\cref{cor:erasing}, that we can maintain a 0-ary relation which is true if and only if, given some pattern $\alpha \in (\Xi\cup\Sigma)^*$, the word structure is currently a member of~$\erasing{\lang} (\alpha)$. According to the construction which we gave in Lemma \ref{prop:patterns}, given a variable $x \in \Xi$, where~$x = \alpha_{i}$, we have two variables $x_i, t_i \in \worddomain$ such that the word $w[x_i,t_i]$ represents $\sigma(x)$ for some substitution $\sigma$. Removing the existential quantifiers for $x_i$ and $t_i$ allows us to maintain the relation defined by $\alpha$.

\textbf{R1.} $(\psi_1 \land \psi_2)$ for all $\psi_1, \psi_2 \in \splog(\mv)$: Under the assumption that we have update formulas $\updateformula{\psi_1}{\unknownupdate}{u;\vec{v_1}}$ and $\updateformula{\psi_2}{\unknownupdate}{u;\vec{v_2}}$ for \splog formulas~$\psi_1$ and~$\psi_2$ respectively, the update formula for $\updateformula{\psi_1 \land \psi_2}{\unknownupdate}{u; \vec{v_1}\cup\vec{v_2} }$ is $\updateformula{\psi_1}{\unknownupdate}{u;\vec{v_1}} \land \updateformula{\psi_2}{\unknownupdate}{u;\vec{v_2}}$.

\textbf{R2.} $(\psi_1 \lor \psi_2)$ for all  $\psi_1, \psi_2 \in \splog(\mv)$ with $\fvar(\psi_1) = \fvar(\psi_2)$: Assuming we have update formulas $\updateformula{\psi_1}{\unknownupdate}{u;\vec{v}}$ and $\updateformula{\psi_2}{\unknownupdate}{u;\vec{v}}$ for \splog formulas~$\psi_1$ and~$\psi_2$ respectively, the update formula for $\updateformula{(\psi_1 \lor \psi_2)}{\unknownupdate}{u; \vec{v}}$ is $\updateformula{\psi_1}{\unknownupdate}{u;\vec{v}} \lor \updateformula{\psi_2}{\unknownupdate}{u;\vec{v}}$.

\textbf{R3.} $ \exists x \colon \psi$ for all $\psi \in \splog(\mv)$ and $x \in \fvar(\psi) \setminus \{ \mv \}$: If a variable $x \in \Xi$ is existentially quantified within the \splog formula, then we existentially quantify the variables $x_i,t_i \in \worddomain$ where $w[x_i,t_i]$ represents $\sigma(x)$ for some substitution $\sigma$.

\textbf{R4.} $(\psi \land \constr_A(x))$ for every $\psi \in \splog(\mv)$, every $x \in \fvar(\psi)$, and every NFA~$A$: let $A \df (Q, \delta, s, F)$ be an NFA. We have that $Q$ is a finite set of states, $\delta \colon Q \times \Sigma \rightarrow Q$ is the transition function, $s$ is the initial state and $F \subseteq Q$ is the set of accepting states. We denote the reflexive and transitive closure of $\delta$ as $\delta^* \colon Q \times \Sigma^* \rightarrow Q$. For regular constraints, we maintain the relation $R_{A} := \{ (i,j) \in \worddomain^2 \mid w[i,j] \in \mathcal{L}(A)  \}$

From Proposition~3.3 in Gelade, Marquardt, and Schwentick~\cite{gel:dyn}, we know that the following relations can be maintained in \dynprop, and from \cite{zeu:small} (Theorem~3.1.5, part~b) we know that \dynprop is a strict subclass of \dyncq. Hence we can maintain the following in \dyncq:
\begin{align*}
R_{p,q} \df & \{ (i,j) \in \worddomain^2 \mid i<j \text{ and }  \delta^*(p,w[i+1,j-1])  = q \}, \\
I_{q} \df & \{ j \in \worddomain \mid \delta^*(s,w[1,j-1]) = q \}, \\
F_{p} \df & \{ i \in \worddomain \mid \delta^*(p,w[i+1,n] \in F \}.
\end{align*}

Where $p,q \in Q$. We also know, from \cite{gel:dyn}, that we can maintain the 0-ary relation $\mathsf{ACC}$, which is true if and only if $w' \in \lang(A)$.

We maintain $R_{A}$ with $\updateformula{R_{A}}{\unknownupdate}{u;x,y} \df \psi_1^{R_{A}} \lor \psi_2^{R_{A}} \lor \psi_3^{R_{A}} \lor \psi_4^{R_{A}}$ where each $\psi_i^{R_{A}}$ is a subformula which we now define for separate cases. Note that~$R'(\vec{x})$ is shorthand for $\updateformula{R}{\unknownupdate}{u;\vec{x}}$. We define $\psi_1^{R_{A}} $ as
\[
\psi_1^{R_{A}} \df \exists x_2,y_2 \colon \big( \nextrelation'(x_2,x) \land \nextrelation'(y,y_1) \land \bigvee\limits_{f \in F}(R_{s,f}'(x_2,y_2) \big).
\]

Since $R_{p,q}(x,y)$ refers to the substring from position $x+1$ to $y-1$, and we wish to examine the string from position $x$ to $y$, we look at $R_{s,f}'(x_2,y_2)$ where $x_2 \newnextsym x$ and $y \newnextsym y_2$. If it is indeed the case that $x_2 \newnextsym x$ and $y \newnextsym y_2$ then $w'[x_2 +1, y_2 -1] = w[x,y]$. Therefore $R_{s,f}'(x_2,y_2)$, for $f \in F$, is true for such $x_2$ and $y_2$ if and only if $\delta^*(s, w[x,y]) \in F$ which is the desired behavior for this case. Note that $\psi_1^{R_{A}}$ fails if there doesn't exist $x_2$ such that $x_2 \newnextsym x$ or there doesn't exists $y_2$ such that $y \newnextsym y_2$. This is dealt with using $\psi_2^{R_{A}}, \psi_3^{R_{A}}$ and $\psi_4^{R_{A}}$, which we explore next.

If $\lastrel'(y)$ then $w'[x,y]=w'[x,n]$ where $n = |\worddomain|$. Therefore, we can use $F_s'(x_2)$ for some $x_2 \in \worddomain$ where $x_2 \newnextsym x$ and $s$ is the initial state of the NFA, to see whether $\delta^*(s, w'[x,n]) \in F$ and hence whether $\delta^*(s, w'[x,y]) \in F$. To realize this behavior, we define $\psi_2^{R_{A}} $ as
\[
\psi_2^{R_{A}} \df \exists x_2 \colon \big( \nextrelation'(x_2,x) \land \lastrel'(y) \land F_s'(x_2) \big).
\]

If $\firstrel'(x)$ then $w'[1,y] = w'[x,y]$. Therefore, we can use $I_f'(y_2)$ for some $y_2 \in \worddomain$ where $y \newnextsym y_2$ and $f \in F$, to see whether $\delta^*(s,w'[1,y]) \in F$ and hence whether $\delta^*(s,w'[x,y]) \in F$. To realize this behavior, we define $\psi_3^{R_{A}} $ as
\[
\psi_3^{R_{A}} \df \exists y_2 \colon \big( \nextrelation'(y,y_2) \land \firstrel'(x) \land \bigvee\limits_{f \in F}(I_f'(y_2)) \big).
\]

If $\firstrel'(x)$ and $\lastrel'(y)$ then $w'[x,y]=w'$ and therefore it follows that $w'[x,y] \in \lang(A)$ if and only if $w' \in \lang(A)$. We only need to see if $\mathsf{ACC'}$ is true for this case. We realize this behavior by defining $\psi_4^{R_{A}}$ as
\[
\psi_4^{R_{A}} \df \firstrel'(x) \land \lastrel'(y) \land \mathsf{ACC'}.
\]

To simulate $(\psi \land \constr_A(x))$ for every $\psi \in \splog(\mv)$, every $x \in \fvar(\psi)$, and every NFA~$A$ within \dyncq, we do the following; let $\updateformula{\psi}{\unknownupdate}{u;\vec{v}}$ be an update formula for $\psi \in \splog$ and since for some $\sigma(x)$, where $x \in \fvar(\psi)$, has $x_i,t_i \in \worddomain$ associated with it, we can use $\updateformula{\psi}{\unknownupdate}{u;\vec{v}} \land \updateformula{R_A}{\unknownupdate}{u;x_i,t_i}$ which is true if and only if $w'[x_i,t_i] \in \lang(A)$.
\end{proof}

Most of the work for this proof follows from Lemma \ref{prop:patterns} and Corollary \ref{cor:erasing}. Extra work is done in order to simulate regular constraints, although this follows on from the fact that \dynprop maintains the regular languages \cite{gel:dyn}. 

\begin{theorem}\label{splogindyncq}
Core spanners can be maintained in \dyncq.
\end{theorem}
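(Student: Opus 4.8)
The plan is to derive the theorem almost immediately from the machinery already assembled, combining the fact that \splog exactly captures core spanners with Lemma~\ref{lem:splogDynCQ}, and adding only a thin translation layer that turns the word-based representation used by \splog into the position-based spanner relation $R^P$. First I would invoke the equivalence established in Section~4.1 of~\cite{fre:splog}: for every core spanner $P$ there is a formula $\varphi \in \splog$ with $\fvar(\varphi) \df \{\mv\} \cup \{x_p, x_c \mid x \in \SVars{P}\}$ that \emph{realizes} $P$. Under the realization definition, for every $w$ the set $\spanner{\varphi}(w)$ of satisfying substitutions $\sigma$ with $\sigma(\mv)=w$ is in bijection with $P(w)$: each $\sigma$ encodes a $(V,w)$-tuple $\mu$ via $\sigma(x_p)=w_{\spn{1,i}}$ and $\sigma(x_c)=w_{\spn{i,j}}$ for $\spn{i,j}=\mu(x)$.

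Next I would apply the construction underlying Lemma~\ref{lem:splogDynCQ} to $\varphi$ itself. Its structural induction maintains, in \dyncq, an update formula whose free variables are, for each free \splog-variable $v\in\fvar(\varphi)\diff\{\mv\}$, a pair of domain elements delimiting the subword that represents $\sigma(v)$, i.e.\ two symbol-elements $a_v,b_v$ with $w[a_v,b_v]=\sigma(v)$. Applied to $\varphi$, this maintains a relation whose tuples record, for every spanner variable $x$, the position markers of both $\sigma(x_p)$ and $\sigma(x_c)$. So the heavy lifting—maintaining $\spanner{\varphi}(w)$ in encoded form—is already done.

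The remaining step is to re-encode these prefix/content markers as the tuples of $R^P$, recalling that $\mu(x)=\spn{\position{w}{\openspanvar{x}},\position{w}{\closespanvar{x}}}$. The span start of $x$ is the logical position at which the content $x_c$ begins and the span end is one position past where $x_c$ ends; concretely $\openspanvar{x}$ is the first symbol-element of the subword representing $\sigma(x_c)$, and $\closespanvar{x}$ is the symbol-element directly following its last symbol-element, which I would pin down with $\nextrelation$. Since this extraction is existential-positive and glued on by conjunction, composing it with the formula from Lemma~\ref{lem:splogDynCQ} keeps everything in \dynucq, and hence by~\cite{zeu:dyncq} in \dyncq.

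The hard part will not be the main case but the degenerate ones. An empty content span $\spn{i,i}$ has no content subword, so $\openspanvar{x}=\closespanvar{x}$ must instead be located from where the prefix $x_p$ ends, again using $\nextrelation$, $\firstrel$, and $\lastrel$; a span whose right endpoint equals $|w|+1$ must set $\closespanvar{x}=\$$ exactly as prescribed in the definition of $R^P$; and the empty-word word-structure must be handled separately, consistently with the conventions fixed for $\firstrel$ and $\lastrel$. Each of these is captured by its own conjunctive query, and taking their disjunction yields the desired \dynucq program for $R^P$, completing the proof.
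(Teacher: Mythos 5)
Your proposal is correct and follows essentially the same route as the paper: realize the core spanner by a \splog{} formula, apply Lemma~\ref{lem:splogDynCQ} to maintain the selected relation, and then adjust which position markers are free versus existentially quantified to obtain $R^P$. In fact you are somewhat more explicit than the paper's own (very terse) proof about the off-by-one re-encoding via $\nextrelation$, the $\$$ convention for spans ending at $|w|+1$, and the empty-span and empty-word degeneracies, all of which the paper dismisses as trivial.
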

\begin{proof}
Although maintaining the \splog relation that realizes a spanner is not the same as maintaining the spanner relation as defined in~\cref{sec:dyn}, the changes we need to make are trivial. Let $P$ be a spanner and let $\psi_P$ be a \splog formula that realizes $P$. We know that $\fvar(\psi_P) = \{ x_p, x_c \mid x \in \SVars{P} \}$, and for every $x \in \SVars{P}$ where $\spn{i,j} \df \mu(x)$, we have both $\sigma(x_p) = w_{\spn{1,i}}$ and $\sigma(x_c) = w_{\spn{i,j}}$. Let $R^P$ be a relation that maintains the spanner~$P$. The only difference between update formulas that maintain $P$ and update formulas that maintain the relation \splog selects which realizes $P$ is that the two elements $x_p^o, x_p^c \in \worddomain$ that are used to represent the \splog variable $x_p \in \Xi$ are existentially quantified whereas the two variables $x_c^o, x_c^c \in \worddomain$ which represent $x_c \in \Xi$ are not. 
\end{proof}

\cref{splogindyncq} shows us that \dyncq is at least as expressive as \splog. We will use this along with Proposition \ref{eqlen} to show that \dyncq is more expressive than core spanners. Given that we can maintain any relation selectable in \splog using \dyncq, it is no big surprise that adding negation allows us to maintain \splogneg in \dynfo.

\begin{lemma}\label{splogneg}
Any relation selectable in \splogneg can be maintained in \dynfo.
\end{lemma}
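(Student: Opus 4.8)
The plan is to extend the structural induction from the proof of \cref{lem:splogDynCQ} by the single syntax rule that distinguishes \splogneg from \splog, namely negation. For the base case B1 and the rules R1--R4 nothing changes: the update formulas built in \cref{lem:splogDynCQ} are \ucq-formulas and hence first-order, so the very same constructions maintain the corresponding relations in \dynfo, using that \dynprop and \dyncq are contained in \dynfo. The whole burden therefore falls on the new inductive step for $(\neg\varphi)$.

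For that step I would assume, by the induction hypothesis, a \dynfo program maintaining the relation selected by $\varphi$; concretely, for each abstract update an update formula $\updateformula{\varphi}{\unknownupdate}{u;\vec{v}}$ in which every free variable $x\in\fvar(\varphi)\diff\{\mv\}$ is represented by a pair $(\openspanvar{x},\closespanvar{x})$ of domain elements delimiting the subword $\sigma(x)$, exactly as in \cref{lem:splogDynCQ}. Recall that $\sigma\models\neg\varphi$ holds iff $\sigma(x)\subword\sigma(\mv)$ for every $x\in\fvar(\varphi)$ and $\sigma\not\models\varphi$. Since $\fvar(\neg\varphi)=\fvar(\varphi)$ and $\sigma(\mv)\subword\sigma(\mv)$ always holds, I would set
\[
\updateformula{\neg\varphi}{\unknownupdate}{u;\vec{v}} \df \chi(u;\vec{v}) \land \neg\,\updateformula{\varphi}{\unknownupdate}{u;\vec{v}},
\]
where $\chi(u;\vec{v})$ is a first-order formula asserting that each pair $(\openspanvar{x},\closespanvar{x})$ occurring in $\vec{v}$ genuinely delimits a (possibly empty) subword of $w'$. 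The first conjunct enforces the subword side-condition of negation, and the negated subformula enforces $\sigma\not\models\varphi$; this negation is legal precisely because \dynfo admits arbitrary first-order update formulas, which is the reason the bound is \dynfo rather than \dyncq.

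Correctness rests on the fact that every relation maintained in this model ranges over span-tuples, each of which already represents a subword of the current word, and that the $\varphi$-relation contains a span-tuple exactly when the induced substitution satisfies $\varphi$ (independently of which occurrence of a word the span picks out). Hence negating the $\varphi$-relation inside the well-formed span-tuples produces precisely the span-representations of the subword-tuples that fail $\varphi$, which is the relation selected by $\neg\varphi$. The step I expect to be the main obstacle is pinning down $\chi$: one must characterize, using the post-update relations $\firstrel'$, $\lastrel'$ and $\nextrelation'$, exactly when two domain elements delimit a valid subword span of $w'$, while correctly handling the boundary elements, the $\$$-element, and the empty-subword convention introduced in \cref{cor:erasing}. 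Everything else is a routine reuse of the previous construction together with a single first-order negation.
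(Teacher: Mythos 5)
Your proposal is correct and takes essentially the same approach as the paper: the paper's proof is a one-liner that sets $\updateformula{R^{\neg\psi}}{\unknownupdate}{u;\vec{x}} = \neg\,\updateformula{R^{\psi}}{\unknownupdate}{u;\vec{x}}$, relying on \dynfo admitting negation in update formulas, exactly as you do. Your additional guard $\chi$ enforcing that the tuple components delimit genuine subwords of the current word is a sensible refinement that makes explicit the subword side-condition in the semantics of $\neg\varphi$, which the paper's proof leaves implicit.
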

\begin{proof}
Let $\psi \in \splog(\mv)$ and let $R^{\psi}$ be the relation maintaining $\psi$ where the update formulas for $R^{\psi}$ are in \cq. The extra recursive rule allowing for $(\neg\psi) \in \splogneg(\mv)$ can be maintained by $\updateformula{R^{\neg\psi}}{\unknownupdate}{u;\vec{x}} = \neg \updateformula{R^{\psi}}{\unknownupdate}{u;\vec{x}}$.
\end{proof}

As with Theorem \ref{splogindyncq}, we can use the result from Lemma \ref{splogneg} along with Corollary \ref{pow2} to show that \dynfo is more expressive than \splogneg.

\begin{theorem}\label{genCoreInDynFO}
Generalized core spanners can be maintained in \dynfo.
\end{theorem}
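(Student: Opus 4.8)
The plan is to obtain this theorem as an almost immediate analogue of Theorem~\ref{splogindyncq}, with Lemma~\ref{splogneg} now playing the role that Lemma~\ref{lem:splogDynCQ} played there. The one external ingredient I rely on is the equivalence established in Section~4.1 of~\cite{fre:splog}: \splogneg has exactly the expressive power of generalized core spanners. Concretely, for every generalized core spanner $P$ there is a \splogneg-formula $\psi_P$ that realizes $P$, with $\fvar(\psi_P) = \{\mv\} \cup \{ x_p, x_c \mid x \in \SVars{P}\}$, where $x_c$ carries the content $w_{\mu(x)}$ of the span and $x_p$ carries the prefix of $w$ preceding it.

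First I would fix such a realizing formula $\psi_P$ and apply Lemma~\ref{splogneg} to conclude that the relation it defines over the word-structure is maintainable in \dynfo. Here each \splogneg-variable $x$ is tracked, exactly as in the \dyncq constructions, by a pair of domain elements $x^o, x^c \in \worddomain$ delimiting the factor $w[x^o,x^c]$ that represents $\sigma(x)$; the induction behind Lemma~\ref{splogneg} (the \dyncq constructions of Lemma~\ref{lem:splogDynCQ} together with the single new negation rule) supplies the first-order update formulas for this relation. Second, I would turn this \splogneg-relation into the spanner-relation $R^P$ precisely as in the proof of Theorem~\ref{splogindyncq}. For each $x \in \SVars{P}$, the content variable $x_c$ is represented by the pair $x_c^o, x_c^c$, which become the components $\openspanvar{x}, \closespanvar{x}$ of the $2k$-tuple and are left free, whereas the prefix variable $x_p$ is represented by $x_p^o, x_p^c$, whose domain elements are existentially quantified away. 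Since prefixing existential quantifiers preserves membership in first-order logic, the resulting update program is still in \dynfo and maintains $R^P$.

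I do not expect a genuine obstacle in this final step: all the real difficulty has already been absorbed into Lemma~\ref{splogneg}, whose negation rule is the only device beyond the \dyncq constructions, and into the \splogneg-to-spanner equivalence from~\cite{fre:splog}. The only matters requiring care are routine and identical to the situation already handled for Theorem~\ref{splogindyncq}: the bookkeeping that maps the prefix/content split $x_p, x_c$ of $\psi_P$ to the open/close components $\openspanvar{x}, \closespanvar{x}$ of $R^P$, and the consistent treatment of the boundary case $\closespanvar{x} = \$$ that arises when a span reaches past the last symbol-element.
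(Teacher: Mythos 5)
Your proposal is correct and follows essentially the same route as the paper: invoke the equivalence of \splogneg and generalized core spanners from~\cite{fre:splog}, apply Lemma~\ref{splogneg} to maintain the realizing \splogneg-relation in \dynfo, and then convert to the spanner relation exactly as in the proof of Theorem~\ref{splogindyncq} by existentially quantifying the prefix-variable positions. The paper states this even more tersely, leaving the final conversion implicit, so your added detail on the $x_p$/$x_c$ bookkeeping and the $\$$ boundary case is consistent with, and slightly more explicit than, the paper's own argument.
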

Since \splogneg captures the generalized core spanners, it follows from~\cref{splogneg} that any generalized core spanner can be maintained in \dynfo. In Section 4 we show that \dynfo is more expressive than \splogneg, it therefore follows that \dynfo is more expressive than generalized core spanners.

\section{Relations in \splog and \dyncq}\label{sec:rel}
In this section, we examine the comparative expressive power of \splog and \dyncq.  
Recall that we defined the notion of \splog-selectable relations at the end of \cref{sec:splog}. We now define an analogous concept for \dyncq. For a relation $R\subseteq (\Sigma^*)^k$, we define the corresponding relation in the dynamic setting $\bar{R}$ as the $2k$-ary relation of all $(x_1,y_1,\ldots,x_k,y_k)\in \worddomain^{2k}$ such that $(w[x_1,y_1], \ldots, w[x_k,y_k])\in R$. We say that $R$ is selectable in \dyncq if $\bar{R}$ can be maintained in \dyncq.

For example, the equal length relation is defined as $R_{\mathsf{len}} \df \{ (w_1,w_2) \mid |w_1| = |w_2| \}$. From Fagin et al. \cite{fag:spa} it is known that this relation is not selectable with core spanners. This relation in the dynamic setting is $\bar{R}_{\mathsf{len}} = \{ (u_1,u_2,v_1,v_1) \in \worddomain^4 \mid |w[u_1,u_2]| = |w[v_1,v_2]| \}$.

\begin{proposition}\label{eqlen}
	The equal length relation is selectable in \dyncq.
\end{proposition}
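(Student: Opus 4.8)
The plan is to maintain $\bar{R}_{\mathsf{len}}$ \emph{directly} as an auxiliary relation, reusing $\nextrelation$, $\firstrel$, and $\lastrel$, which are already available in \dyncq by Lemma~\ref{lemma:next}. Since the input word is initially empty, every range has length $0$ and $\bar{R}_{\mathsf{len}}$ starts as all of $\worddomain^4$, which is first-order initializable. As $\dyncq=\dynucq$, it suffices to exhibit \ucq update formulas $\updateformula{\bar{R}_{\mathsf{len}}}{\absins{\zeta}}{u}$ and $\updateformula{\bar{R}_{\mathsf{len}}}{\absreset}{u}$.

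The driving observation is that a single update changes the length of a range $[x,y]$ by at most one: an insertion at $u$ raises $|w[x,y]|$ by one exactly when $x\le u\le y$ and leaves it fixed otherwise, and a reset is symmetric. Writing $A$ for ``$u_1\le u\le u_2$'' and $B$ for ``$v_1\le u\le v_2$'' (each a positive combination of $\le$-atoms, whose negation $(u<u_1)\lor(u_2<u)$ is again positive), the new equality $|w'[u_1,u_2]|=|w'[v_1,v_2]|$ reduces, by a four-way case split on $A$ and $B$, to one of three statements about the \emph{old} word: ``the two ranges had equal length'' (when $u$ lies in both ranges or in neither), ``the second range was longer by exactly one'', or ``the first range was longer by exactly one'' — the precise assignment of the latter two depending on whether the update is an insertion or a reset. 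The first statement is just the old $\bar{R}_{\mathsf{len}}$, and the whole update formula is the disjunction of these four guarded cases.

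The only delicate point is expressing ``longer by exactly one'' as a \ucq over the old relations: done naively this would force us to also maintain ``off by two'', and so on, an infinite cascade. I would break it by \emph{shrinking} the longer range by its last symbol-element and invoking the old $\bar{R}_{\mathsf{len}}$ a single time. Concretely, the last symbol-element $s$ of $[v_1,v_2]$ is located by the \emph{positive} formula $v_1\le s\le v_2\land\symel{s}\land\bigl(\lastrel(s)\lor\exists s^+(s\nextsym s^+\land v_2<s^+)\bigr)$, which uses no universal quantifier, and ``second longer by one'' then becomes $\bar{R}_{\mathsf{len}}(u_1,u_2,v_1,p)$ for the predecessor $p\nextsym s$ with $v_1\le p$. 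The boundary subcase ``the second range holds a single symbol-element and the first is empty'' is also handled without negation by exploiting the constant $\$$, which is never a symbol-element: emptiness of $[u_1,u_2]$ is precisely $\bar{R}_{\mathsf{len}}(u_1,u_2,\$,\$)$, and ``$s$ is also the first symbol-element of $[v_1,v_2]$'' is tested symmetrically to the ``last'' test above. Every ingredient is positive existential, so each case, and hence each full update formula, is a \ucq.

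Finally, a bookkeeping remark settles correctness: in the state $\bar{\programstate}$ in which update formulas are evaluated, the symbol predicates already describe $w'$ while the auxiliary relations still describe $w$, so when reasoning about the old word I would treat $u$ as a non-symbol for insertions (replacing $\symel{s}$ by $\symel{s}\land s\neq u$, with $s\neq u$ abbreviating $s<u\lor u<s$) and as a symbol for resets, while $\nextrelation$, $\firstrel$, $\lastrel$ describe $w$ and are used verbatim. Assembling the guarded disjunctions for both $\absins{\zeta}$ and $\absreset$ then yields the required \ucq program, and $\dyncq=\dynucq$ gives the claim. The hard part, as indicated, is this third step — keeping ``off by one'' inside \ucq without triggering the cascade — which the ``remove the last symbol-element, and encode emptiness via $\$$'' trick resolves.
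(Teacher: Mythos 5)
Your route is genuinely different from the paper's: the paper derives \cref{eqlen} by taking the update formulas of \cref{lemma:eqsubstr} and stripping out the symbol atoms (and the ordering constraint), whereas you maintain $\bar{R}_{\mathsf{len}}$ directly through an ``off by at most one'' analysis. The two ingredients that make your version work as a \ucq{} --- expressing ``longer by exactly one'' by shrinking the longer range past its last symbol-element and calling the old $\bar{R}_{\mathsf{len}}$ once, and encoding emptiness of a range as $\bar{R}_{\mathsf{len}}(u_1,u_2,\$,\$)$ --- are sound and make the argument more self-contained than the paper's two-line adaptation.

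There is, however, a genuine gap in the central case analysis. In this paper's model, $\ins{\zeta}{u}$ is also permitted when $w(u)=\zeta'$ for some $\zeta'\neq\zeta$: it is then a \emph{symbol change} (\cref{sec:dyn} only forbids $\ins{\zeta}{u}$ when $w(u)=\zeta$). For such an update $|w'[x,y]|=|w[x,y]|$ for \emph{every} range, including those containing $u$, so your claim that an insertion at $u$ raises $|w[x,y]|$ by one exactly when $x\le u\le y$ is false. Concretely, for a tuple with $u\in[u_1,u_2]$ and $u\notin[v_1,v_2]$, your formula for $\absins{\zeta}$ demands that the second range was longer by exactly one in $w$; under a symbol change this wrongly rejects tuples whose ranges were (and remain) of equal length and wrongly accepts tuples that were off by one. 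The same confusion infects your bookkeeping remark, which reconstructs $w$ from $w'$ by treating $u$ as a non-symbol for \emph{every} insertion. The repair stays inside your framework: both ``$u$ was a symbol-element of $w$'' and its complement are positively expressible from $\nextrelation$, $\firstrel$, $\lastrel$ and $\$$ (a position is a non-symbol of $w$ iff it lies strictly between two $\nextrelation$-consecutive positions, or before the first, or after the last symbol-element, or $w=\emptyword$), so you can guard the ``length increased'' disjuncts by the non-symbol condition and add an inheritance disjunct, namely the old $\bar{R}_{\mathsf{len}}(u_1,u_2,v_1,v_2)$ conjoined with ``$u$ was a symbol-element of $w$'', to cover symbol changes. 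As written, though, the update formula for $\absins{\zeta}$ is incorrect.
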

\begin{proof}
	To maintain the equal length relation, we take the update formulas from \cref{lemma:eqsubstr} and omit any atoms relating to the symbol of an element of the domain $\worddomain$. We also remove the constraint that the first subword must appear before the second. We also use $\bar{R}_{\mathsf{len}}$ in any update formula, rather than $R_{\mathsf{eq}}$. The only exception to omitting all atoms relating to the symbol of an element, is to ensure that $w[u_1] \neq \emptyword$, $w[u_2] \neq \emptyword$, $w[v_1] \neq \emptyword$, and $w[v_2] \neq \emptyword$.
\end{proof}

While this allows us to separate the languages that are definable in \splog from the ones that can be maintained in \dyncq, we consider the following more wide-ranging example:
\begin{lemma}\label{pow2}
	The language  $\{w\in\Sigma^* \mid |w| = 2^n, n\geq 0\}$  is maintainable in \dyncq.
\end{lemma}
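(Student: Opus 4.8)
The plan is to maintain, as an auxiliary relation, the set
$\mathsf{Pow} \df \{(x,y)\in\worddomain^2 \mid x,y \text{ are symbol-elements}, x\leq y, \text{ and } |w[x,y]| \text{ is a power of two}\}$,
and then read off the language from the behaviour of $\mathsf{Pow}$ on the whole word. Concretely, the $0$-ary relation for the language is set to true iff there exist $f,l$ with $\firstrel(f)$, $\lastrel(l)$, $\bigvee_{\zeta\in\Sigma}\symbolrel{\zeta}(f)$ (forcing $f$ to carry a symbol, so that the empty word, where $\firstrel=\{\$\}$, is rejected), and $\mathsf{Pow}(f,l)$. A length-one word is accepted because $\mathsf{Pow}(x,x)$ always holds ($2^0=1$), which is the base case of the construction. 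It is worth noting first that no \emph{fixed} formula can work: over the positions equipped with $<$ and the equal-length relation of \cref{eqlen} one can only define semilinear (Presburger) sets, and $\{2^n\}$ is not semilinear, so the dynamic maintenance across the update sequence is genuinely essential.

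The heart of the argument is the update rule for $\mathsf{Pow}$, and its main difficulty is avoiding an infinite regress. A single $\ins{\zeta}{u}$ or $\reset{u}$ changes $|w[x,y]|$ by exactly one precisely for the intervals containing $u$; all other entries of $\mathsf{Pow}$ are copied verbatim from the pre-update relation. For an affected interval the naive approach would track ``$|w[x,y]|$ equals $2^t-1$'' as a separate predicate, but updating \emph{that} needs ``$2^t-2$'', and so on without bound. The key observation that removes the regress is that one never needs a new predicate: a length-$\ell$ interval reaches a power-of-two length after being extended by one symbol-element iff the one-longer neighbouring interval \emph{already} has power-of-two length. Thus, for an insertion with $x<u<y$ (so that the pre-update length $\ell$ satisfies $\ell\geq 2$), I would set the post-update $\mathsf{Pow}(x,y)$ true iff the pre-update $\mathsf{Pow}(x,y')$ holds, where $y'$ is the next symbol-element after $y$, i.e.\ $\nextrelation(y,y')$ (or, symmetrically, pre-update $\mathsf{Pow}(x',y)$ with $\nextrelation(x',x)$); deletions peek at a one-\emph{shorter} neighbour in the same way. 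The next/previous symbol-element is supplied by \cref{lemma:next}, so these formulas reference only the pre-update $\mathsf{Pow}$ and $\nextrelation$, quantify existentially over the peek position, and are therefore in \ucq. Intervals having $u$ itself as an endpoint are handled by the identical peeking idea applied to the nearest surviving symbol-element.

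The only intervals admitting neither a left nor a right neighbour are those spanning from the first to the last symbol-element, and among these lies exactly the interval the language depends on, namely the whole word. For such an interval I would use the equal-length relation of \cref{eqlen}: after an insertion growing the length to $\ell+1$, the length is a power of two iff it is even and the word splits into two equal-length halves, one of which has power-of-two length (equal length makes the two halves interchangeable here). Choosing the half that does \emph{not} contain $u$ — which is unchanged by the update — lets this be decided by the pre-update $\mathsf{Pow}$, while the split point is located by one application of the equal-length relation together with $\nextrelation$; the resulting formula stays in \ucq, and deletion is symmetric. Finally I would combine the interior case, the endpoint-of-$u$ cases, and the boundary/whole-word case by disjunction to obtain \ucq update formulas, and invoke $\dyncq=\dynucq$~\cite{zeu:dyncq}. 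The main obstacle is thus entirely contained in the second paragraph: finding a formulation whose post-update value is expressible from pre-update data without an unbounded chain of offset-predicates, which the ``peek at a neighbouring interval'' trick resolves, with the boundary intervals mopped up by equal-length halving.
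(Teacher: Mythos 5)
Your proof is correct, and it shares the paper's overall strategy: both maintain a binary auxiliary relation asserting that an interval has power-of-two length, both use the equal-length relation of \cref{eqlen} together with $\nextrelation$ to implement a doubling/halving step, and both read off the answer at the first and last symbol-elements. Where you genuinely diverge is in the mechanics of the update step. The paper simply states the recursive invariant ``$P(x,y)$ holds if $|w[x,y]|=1$ or there is a split $z_1\nextsym z_2$ into equal-length halves with $P(x,z_1)$ and $P(z_2,y)$'' and treats it as the update formula; read literally over \emph{pre-update} values this is too strong, since the half containing the updated position $u$ has just changed length and its stored $P$-value is stale (e.g.\ growing a word from length $7$ to $8$: the half containing $u$ had pre-update length $3$). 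Your proposal isolates and resolves exactly this issue twice over: for interior intervals, the ``peek at the one-longer (or one-shorter) neighbouring interval'' observation $|w'[x,y]|=|w[x,y']|$ lets the post-update value be read off directly from a single pre-update entry, with no halving and no regress; and for the one interval that has no neighbour to peek at (first-to-last symbol-element), you apply the halving characterization but deliberately consult only the half \emph{not} containing $u$, which is the half whose pre-update entry is still valid. So your argument buys a rigorous one-step \ucq update formula where the paper offers a recursive characterization whose translation into an update formula is left implicit; the cost is a longer case analysis (intervals containing $u$ strictly, having $u$ as an endpoint, or spanning the whole word), all of which you correctly flag and which are routine. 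The Presburger/semilinearity aside is not needed for the proof and nothing rests on it.
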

\begin{proof}
Let $P$ be a 2-ary relation such that $P(x,y)$ holds if and only if $|w[x,y]| = 2^n$ for some $n \in \mathbb{N}$. This can be maintained by having that $P(x,y)$ holds if $|w[x,y]| = 1$ or if there exists $z_1, z_2 \in \worddomain$ such that $P(x,z_1)$, $P(z_2,y)$, $\nextrelation'(z_1,z_2)$ and that $\bar{R}_{\mathsf{len}}(x,z_1,z_2,y)$. If we assume that $|w[x,z_1]| = 2^n$ for some $n \in \mathbb{N}$, which we do because we have the base case of $w[x,y]=a$, and that $|w[x,z_1]|=|w[z_2,y]|$, then it follows that if $\nextrelation'(z_1,z_2)$ then $w[x,y] = w[x,z_1] \cdot w[z_2,y]$ and therefore $|w[x,y]| = 2 |w[x,z_1]|$ and hence $|w[x,y]| = 2^{n+1}$. We then have that $|w| = 2^n$ if $\exists x,y \colon \big( \firstrel'(x) \land \lastrel'(y) \land P'(x,y) \big)$.
\end{proof}

For every choice of  $\Sigma$, this language is not expressible in \splogneg (and, hence, not in \splog). This is easily seen by considering the case that $\Sigma$ is unary\footnote{Larger alphabets then follow by observing that the class of $\splogneg$-languages is trivially closed under intersection with regular languages.}. As shown in~\cite{fre:doc} for core spanners and then in~\cite{pet:rec} for generalized core spanners, both classes collapse to exactly the class of regular languages if $|\Sigma|=1$. As the language of all words $a^{2^n}$ is not regular, this shows that even \dyncq can define languages that are not expressible in \splogneg. 

Combining this with Theorem~\ref{splogindyncq} and Theorem~\ref{genCoreInDynFO}, we respectively conclude that \dyncq is strictly more expressive than core spanners and that \dynfo is strictly more expressive than generalized core spanners. 

As explained in Section~6 of~\cite{fre:splog}, there are few inexpressibility results for \splog that generalize to non-unary alphabets (and basically none for \splogneg), apart from straightforward complexity observations that are not particularly illuminating.  
Nonetheless, Proposition~6.7 in~\cite{fre:splog} establishes that none of the following relations is \splog-selectable:
\begin{proposition}\label{prop:beyondsplog}
The following relations are \dyncq-selectable but not \splog-selectable:
\begin{align*}
	R_{\mathsf{num}(a)} &\df \{ (w_1,w_2) \mid |w_1|_a = |w_2|_a \} \text{ for } a \in \Sigma, \\
	R_{\mathsf{perm}} &\df \{ (w_1,w_2) \mid |w_1|_a = |w_2|_a \text{ for all } a \in \Sigma\},\\
	R_{\mathsf{rev}} &\df \{ (w_1,w_2) \mid w_2=w_1^R \}, \text{ where $w_1^R$ is the reversal of $w_1$,}\\
	R_{<} &\df \{ (w_1,w_2)\mid |w_1| < |w_2| \},\\
	R_{\mathsf{scatt}} &\df \{ (w_1,w_2) \mid w_1 \text{ is a scattered subword of } w_2 \},
\end{align*}
where $w_1$ is a scattered subword of $w_2$ if, for some $n\geq 1$, there exist $s_1,\ldots,s_n,\bar{s}_0,\ldots,\bar{s}_n\in\Sigma^*$ such that $w_1=s_1\cdots s_n$ and $w_2= \bar{s}_0 s_1 \bar{s}_1 \cdots s_n \bar{s}_n$.
\end{proposition}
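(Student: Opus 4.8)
Since the inexpressibility half (``not \splog-selectable'') is exactly Proposition~6.7 of~\cite{fre:splog}, the only thing left to establish is that each $\bar{R}$ is \dyncq-selectable. By $\dyncq=\dynucq$~\cite{zeu:dyncq} it suffices to supply \ucq update formulas, and the plan is to reuse as much of the equal-substring and equal-length machinery of \cref{lemma:eqsubstr} and \cref{eqlen} as possible, treating the five relations in roughly increasing order of difficulty.

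For $R_{\mathsf{num}(a)}$, observe that $|w_1|_a=|w_2|_a$ is just ``equal length restricted to the positions carrying the terminal $a$''. I would first maintain a next-$a$ relation $R_{\mathsf{Next}_a}$, i.e.\ the next-symbol-element relation computed on the substructure of those positions $i$ with $w(i)=a$; this is maintainable by exactly the argument of \cref{lemma:next}, since restricting to a single terminal only changes which positions count as active. Re-running the counting argument of \cref{eqlen} with $R_{\mathsf{Next}_a}$ in place of $\nextrelation$ (and first/last $a$-positions in place of \firstrel{} and \lastrel) then yields $\bar{R}_{\mathsf{num}(a)}$. Because $\Sigma$ is finite and fixed, $\bar{R}_{\mathsf{perm}}$ is the finite conjunction $\bigwedge_{a\in\Sigma}\bar{R}_{\mathsf{num}(a)}$, which stays inside \dyncq{} by closure of conjunctive queries under conjunction. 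The relation $R_<$ reduces directly to the already-maintained $\bar{R}_{\mathsf{len}}$ together with $\nextrelation$: one has $|w[x_1,y_1]|<|w[x_2,y_2]|$ exactly when $w[x_1,y_1]$ has the same length as some \emph{proper} prefix of $w[x_2,y_2]$, i.e.
\[
	\exists z,z'\colon\bigl(\bar{R}_{\mathsf{len}}(x_1,y_1,x_2,z)\land\nextrelation(z,z')\land z'\leq y_2\bigr),
\]
with a separate disjunct for the case that $w[x_1,y_1]$ is empty (handled via \firstrel{} and \lastrel{} as in \cref{cor:erasing}). This is plainly a \ucq.

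The genuinely new work is $R_{\mathsf{rev}}$ and $R_{\mathsf{scatt}}$, and this is where I expect the main obstacle. For $R_{\mathsf{rev}}$ I would maintain a reverse-equal-substring relation $\bar{R}_{\mathsf{rev}}(x_1,y_1,x_2,y_2)$ asserting $w[x_1,y_1]=(w[x_2,y_2])^R$, by re-running the entire insertion/reset case analysis of \cref{lemma:eqsubstr} (including the ``bridge'' case of \cref{fig:word}) but with the alignment reversed, so that each symbol of the first interval is matched against the symbol at the symmetric position from the end of the second interval. The case structure is symmetric, so here the effort lies in re-deriving the update formulas rather than in discovering new cases.

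For $R_{\mathsf{scatt}}$ the matching is no longer a bijection, which is the real difficulty and the step I would spend the most care on. I would maintain a $4$-ary embedding relation characterised by the greedy, monotone, symbol-preserving matching of the symbol-elements of $[x_1,y_1]$ into those of $[x_2,y_2]$ (using a ``next occurrence of $a$ at or after a position'' relation, a variant of $R_{\mathsf{Next}_a}$, to locate the earliest admissible match), and update it incrementally from its previous value: inserting or resetting a single symbol at $u$ perturbs the existence of an embedding only locally around $u$. The crux — and the hardest part of the whole proposition — is to verify that this local change is captured by a single \ucq over the old auxiliary relations, rather than by something that would require transitive closure; confirming this, by the same kind of exhaustive update-case analysis used for \cref{lemma:eqsubstr}, is where the bulk of the argument will sit.
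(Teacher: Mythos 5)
Your treatment of $R_{\mathsf{num}(a)}$, $R_{\mathsf{perm}}$, $R_{\mathsf{rev}}$, and $R_{<}$ is sound and close to the paper's: the paper likewise dispatches $R_{\mathsf{perm}}$ as a finite conjunction of the $R_{\mathsf{num}(\zeta)}$ update formulas, obtains $R_{\mathsf{rev}}$ by reversing the index order in the update formulas for $\equalsubstr$ from \cref{lemma:eqsubstr}, and reduces $R_{<}$ to $\bar{R}_{\mathsf{len}}$ (your prefix-based formula is, if anything, more transparent than the containment-based one in the paper). For $R_{\mathsf{num}(a)}$ the paper maintains the $4$-ary relation directly via an inheritance/base-case/inductive-step analysis rather than relativising $\nextrelation$ to $a$-positions, but your variant works once you handle interval endpoints that do not carry $a$ and intervals containing no $a$ at all.

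The genuine gap is $R_{\mathsf{scatt}}$, which you yourself identify as the crux and then leave unverified. The object you propose to maintain --- the greedy, leftmost, symbol-preserving matching --- is the wrong auxiliary relation for a \dyncq{} update: inserting a single symbol inside the host interval can shift the leftmost embedding at every position after the insertion point, so the greedy matching itself exhibits exactly the cascading behaviour that you worry ``would require transitive closure''. The paper avoids this by never materialising any particular embedding: it maintains the purely existential $4$-ary relation ``$w[u_1,u_2]$ is a scattered subword of $w[v_1,v_2]$'' for all interval pairs and updates it by composition around the touched position $u$, namely an inheritance clause for tuples with $u$ outside $[u_1,u_2]$, a base case for the single inserted symbol, and an inductive step asserting $R'_{\mathsf{scatt}}(u_1,u_2,v_1,v_2)$ whenever there are split points $x_1\newnextsym u\newnextsym x_3$ and $x_2,x_4$ with $R_{\mathsf{scatt}}(u_1,x_1,v_1,x_2)$, $R_{\mathsf{scatt}}(x_3,u_2,x_4,v_2)$, and $w(u)$ embeddable into $w[x_2,x_4]$; completeness holds because any embedding of $w'[u_1,u_2]$ restricts to embeddings of the two halves on either side of $u$. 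Replacing your greedy-matching relation by this existential relation and its split-at-$u$ update closes the gap; as written, the proposal does not establish that $R_{\mathsf{scatt}}$ is \dyncq-selectable.
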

\begin{proof}
The relations $R_{\mathsf{scatt}}$, $R_{\mathsf{num}(a)}$, and $R_{\mathsf{rev}}$ have case distinctions equivalent to the proof of~\cref{lemma:eqsubstr}, therefore we give the overarching idea of the proof but without exploring every case. See \cite{frey2019dynfo} for a full proof of~\cref{lemma:eqsubstr}.
\subparagraph*{Maintaining $R_{\mathsf{scatt}}$:}

For insertion, we give three steps for this proof; inheritance, base case, and an inductive step.

We have that if $w[u_1,u_2]$ is a scattered subword of $w[v_1,v_2]$ and $u$ is outside of the interval $[u_1,u_2]$, then $w'[u_1,u_2]$ remains a scattered subword of $w'[v_1,v_2]$ and therefore $R'_{\mathsf{scatt}}(u_1,u_2,v_1,v_2)$ should hold. We call this step inheritance.

The base case is that given the update $\ins{\zeta}{u}$ for some $u \in \worddomain$, if there exists $v \in \worddomain$ such that $v_1 \leq v \leq v_2$ and $w(v) = w(u) = \zeta$, then it follows that $w(u)$ is a scattered subword of $w[v_1,v_2]$ and therefore $R'_{\mathsf{scatt}}(u,u,v_1,v_2)$ should hold.

For the inductive step, given that we have some update $\ins{\zeta}{u}$, if $w[u_1,x_1]$ is a scattered subword of $w[v_1,x_2]$ and $w[x_3,u_2]$ is a scattered subword of $w[x_4,v_2]$, it follows that $w[u_1,u_2]$ is a scattered subword of $w[v_1,v_2]$ if $x_1\newnextsym u\newnextsym x_3$ and $w(u)$ is a scattered subword of $w[x_2,x_4]$. Deletion is dealt with analogously, although without the base case. 
\subparagraph*{Maintaining $R_{\mathsf{num}(a)}$:}

We again give three steps; inheritance, the base case(s), and an inductive step.

We have that if $|w[u_1,u_2]|_a = |w[v_1,v_2]|_a$ and $u$ is outside of the interval $[u_1,u_2]$, then $|w'[u_1,u_2]|_a = |w'[v_1,v_2]|_a$ and therefore $R'_{\mathsf{num}(a)}(u_1,u_2,v_1,v_2)$ should hold. We call this step inheritance. We have that $(u_1,u_2,v_1,v_2)$ is not inherited if $u \in [u_1,u_2]$ or $u \in [v_1,v_2]$, but this should be dealt with by the inductive step. 

To maintain $R_{\mathsf{num}(a)}$, we have two base cases. Given the update $\ins{a}{u}$, we have that $|w'(u)|_{a} = |w'(v)|_{a}$ if $w'(v) = a$.

For the inductive step, we have that if $|w[u_1,x_1]|_a = |w[v_1,x_2]|_a$ and $|w(u)|_a = |w(v)|_a$ and $|w[x_3,u_2]|_a = |w[x_4,v_2]|_a$ where $x_1 \newnextsym u \newnextsym x_3$ and $x_2 \newnextsym v \newnextsym x_4$, then $|w'[u_1,u_2]|_a = |w'[v_1,v_2]|_a$. Dealing with deletion is analogous to insertion but without the base case.

\subparagraph*{Maintaining $R_{\mathsf{rev}}$:}

We can maintain this with a simple variation of the update formula which maintains $\equalsubstr$. Firstly, we remove the constraint that the first subword must appear before the second. Then, whenever $\equalsubstr(\cdot)$ is used as a subformula, one would need to use $R_{\mathsf{rev}}(\cdot)$ instead. The more involved aspect of altering the update formulas would be to reverse the ordering of certain indices. Informally, check $y \nextsym x$ instead of $x \nextsym y$ where necessary.
\subparagraph*{Maintaining $R_{\mathsf{perm}}$:} $\updateformula{R_{\mathsf{perm}}}{\unknownupdate}{u;u_1,u_2,v_1,v_2} \df \bigwedge\limits_{\zeta \in \Sigma} \big( \updateformula{R_{\mathsf{num}(\zeta)}}{\unknownupdate}{u;u_1,u_2,v_1,v_2} \big). $
\subparagraph*{Maintaining $R_{<}$:}

\begin{multline*}
\updateformula{R_{<}}{\unknownupdate}{u;u_1,u_2,v_1,v_2} \df  \exists x_1 \exists x_2 \colon \big( R_{\mathsf{len}}(u_1,u_2,x_1,x_2)  \\
\land (x_1<v_1) \land (v_1\leq v_2) \land (v_2<x_2)\big).\qedhere 
\end{multline*}
\end{proof}

By Lemma~5.1 in~\cite{fre:splog}, a $k$-ary relation $R$ is \splog-selectable if and only there is some \splog-formula $\varphi(\mv;x_1,\ldots,x_k)$ such that for all $\sigma$ that satisfy $\sigma(x_i)\subword \sigma(\mv)$ for all $i\in[k]$, we have $\sigma\models\varphi$ if and only if $(\sigma(x_1),\ldots,\sigma(x_k))\in R$. One can show with little effort that relations like string inequality, the substring relation, or equality modulo a bounded Levenshtein-distance are all \splog-selectable (see Section~5.1 of~\cite{fre:splog}). By \cref{lem:splogDynCQ}, we can directly use these relations in constructions for \dyncq-definable languages and \dyncq-selectable relations.
\begin{example}
For $k\geq1$ and $w_1,w_2\in\Sigma^*$, we say that $w_1$ is a $k$-scattered subword of~$w_2$ if there exist $s_1,\ldots,s_k,\bar{s}_0,\ldots,\bar{s}_k\in\Sigma^*$ such that $w_1=s_1\cdots s_k$ and $w_2= \bar{s}_0 s_1 \bar{s}_1 \cdots s_k \bar{s}_k$. This relation is \splog-selectable\footnote{Unlike a relation for unbounded scattered subword.}, as demonstrated by the following \splog-formula which uses syntactic sugar from Section~5.1 of~\cite{fre:splog}:
\[
\varphi(\mv;w_1,w_2)\df
\exists s_1,\ldots,s_k,\bar{s}_0,\ldots,\bar{s}_k:
\Bigl( (w_1 \weqeq  s_1\cdots s_k)
\land (w_2 \weqeq  \bar{s}_0 s_1 \bar{s}_1\cdots s_k \bar{s}_k)
\Bigr).
\]
\end{example}
Although one could show directly that the $k$-scattered subword relation is \dyncq-selectable, using \splog and \cref{lem:splogDynCQ} can avoid hand-waving.

We can even generalize this approach beyond \splog.
In the proof of \cref{lem:splogDynCQ}, we use the fact the every regular language is in \dyncq to maintain regular constraints for \splog. 
Analogously, we can extend \splog with relation symbols for any \dyncq-sectable relation and use the resulting logic for~\dyncq. Of course, all this applies to \splogneg and \dynfo.

\section{Conclusions}\label{sec:conc}
From a document spanner point of view, the present paper establishes upper bounds for maintaining the three most commonly examined classes of document spanners, namely \dynprop for regular spanners, \dyncq for core spanners, and \dynfo for generalized core spanners. While the bounds for regular spanners and generalized core spanners are what one might expect from related work, the \dyncq-bound for core spanners might be considered surprising low (keeping in mind, of course, that it is still open whether \dyncq is less expressive than \dynfo).

By analyzing the proof of \cref{lem:splogDynCQ}, the central construction of this main result, it seems that the most important part of maintaining core spanners is updating the string equality relation and the regular constraints. 
One big question for future work is whether this might have any practical use for the evaluation of core spanners. 
Although some may consider this unlikely, there is at least some possibility that some techniques might be useful.

In the present paper, we only examine updates that affect single letters. At least as far as the main result is concerned, it should be possible to generalize this to cut and paste operations, as they are commonly found in text editors. These other operations beyond single letters are promising directions for further work.

From a dynamic complexity point of view, \cref{sec:rel} describes how \splog can be used as a convenient tool that allows shorter proofs that languages can be maintained in \dyncq. One consequence of this is that a large class of regular expressions with backreference operators (see Section~5.3 of~\cite{fre:splog}) are in fact \dyncq-languages.

\newpage
\bibliography{bibliography}

\begin{thebibliography}{10}

\bibitem{ama:con}
Antoine Amarilli, Pierre Bourhis, Stefan Mengel, and Matthias Niewerth.
\newblock Constant-delay enumeration for nondeterministic document spanners.
\newblock In {\em Proceedings of {ICDT}~2019}, pages 22:1--22:19, 2019.

\bibitem{dol:split}
Johannes Doleschal, Benny Kimelfeld, Wim Martens, Yoav Nahshon, and Frank
  Neven.
\newblock Split-correctness in information extraction.
\newblock In {\em Proceedings of {PODS} 2019}, pages 149--163, 2019.

\bibitem{don:non}
Guozhu Dong, Jianwen Su, and Rodney Topor.
\newblock Nonrecursive incremental evaluation of datalog queries.
\newblock {\em Annals of Mathematics and Artificial Intelligence},
  14(2-4):187--223, 1995.

\bibitem{fag:spa}
Ronald Fagin, Benny Kimelfeld, Frederick Reiss, and Stijn Vansummeren.
\newblock Document spanners: {A} formal approach to information extraction.
\newblock {\em Journal of the {ACM}}, 62(2):12, 2015.

\bibitem{flo:con}
Fernando Florenzano, Cristian Riveros, Mart{\'{\i}}n Ugarte, Stijn Vansummeren,
  and Domagoj Vrgoc.
\newblock Constant delay algorithms for regular document spanners.
\newblock In {\em Proceedings of {PODS}~2018}, pages 165--177, 2018.

\bibitem{fre:splog}
Dominik~D. Freydenberger.
\newblock A logic for document spanners.
\newblock {\em Theory of Computing Systems}, 63(7):1679--1754, 2019.

\bibitem{fre:doc}
Dominik~D. Freydenberger and Mario Holldack.
\newblock Document spanners: From expressive power to decision problems.
\newblock {\em Theory of Computing Systems}, 62(4):854--898, 2018.

\bibitem{fre:joi}
Dominik~D. Freydenberger, Benny Kimelfeld, and Liat Peterfreund.
\newblock Joining extractions of regular expressions.
\newblock In {\em Proceedings of {PODS}~2018}, pages 137--149, 2018.

\bibitem{frey2019dynfo}
Dominik~D. Freydenberger and Sam~M. Thompson.
\newblock Dynamic complexity of document spanners, 2019.
\newblock \href {http://arxiv.org/abs/1909.10869} {\path{arXiv:1909.10869}}.

\bibitem{gel:dyn}
Wouter Gelade, Marcel Marquardt, and Thomas Schwentick.
\newblock The dynamic complexity of formal languages.
\newblock {\em {ACM} Transactions on Computational Logic}, 13(3):19:1--19:36,
  2012.

\bibitem{jiang:pat}
Tao Jiang, Efim Kinber, Arto Salomaa, Kai Salomaa, and Sheng Yu.
\newblock Pattern languages with and without erasing.
\newblock {\em International Journal of Computer Mathematics},
  50(3-4):147--163, 1994.

\bibitem{los:fou}
Katja Losemann.
\newblock {\em Foundations of Regular Languages for Processing {RDF} and
  {XML}}.
\newblock PhD thesis, University of Bayreuth, 2015.
\newblock URL: \url{https://epub.uni-bayreuth.de/2536/}.

\bibitem{mat:doc}
Francisco Maturana, Cristian Riveros, and Domagoj Vrgoc.
\newblock Document spanners for extracting incomplete information:
  Expressiveness and complexity.
\newblock In {\em Proceedings of {PODS}~2018}, pages 125--136, 2018.

\bibitem{mor:eng}
Andrea Morciano, Martin Ugarte, and Stijn Vansummeren.
\newblock Automata-based evaluation of {AQL} queries.
\newblock Technical report, Universit\'e Libre de Bruxelles, 2016.

\bibitem{mun:dyn}
Pablo Mu{\~{n}}oz, Nils Vortmeier, and Thomas Zeume.
\newblock Dynamic graph queries.
\newblock In {\em Proceedings of {ICDT}~2016}, pages 14:1--14:18, 2016.

\bibitem{pat:dynfo}
Sushant Patnaik and Neil Immerman.
\newblock Dyn-{FO}: A parallel, dynamic complexity class.
\newblock {\em Journal of Computer and System Sciences}, 55(2):199--209, 1997.

\bibitem{pet:com}
Liat Peterfreund, Dominik~D. Freydenberger, Benny Kimelfeld, and Markus
  Kr{\"{o}}ll.
\newblock Complexity bounds for relational algebra over document spanners.
\newblock In {\em Proceedings of {PODS}~2019}, pages 320--334, 2019.

\bibitem{pet:rec}
Liat Peterfreund, Balder ten Cate, Ronald Fagin, and Benny Kimelfeld.
\newblock Recursive programs for document spanners.
\newblock In {\em Proceedings of {ICDT} 2019}, pages 13:1--13:18, 2019.

\bibitem{sch:cha}
Markus~L. Schmid.
\newblock Characterising {REGEX} languages by regular languages equipped with
  factor-referencing.
\newblock {\em Information and Computation}, 249:1--17, 2016.

\bibitem{zeu:small}
Thomas Zeume.
\newblock {\em Small dynamic complexity classes}.
\newblock Springer, 2017.

\bibitem{zeu:dyncq}
Thomas Zeume and Thomas Schwentick.
\newblock Dynamic conjunctive queries.
\newblock {\em Journal of Computer and System Sciences}, 88:3--26, 2017.

\end{thebibliography}
\newpage
\appendix
\section{Proofs for \cref{sec:main}}
\subsection{Proof of \cref{lemma:next}}
We first observe the following helpful result:
\begin{lemma}\label{obs:nextins}
	Let $\unknownupdate = \ins{\zeta}{u}$ and let $x \nextsym y$ for $x,y \in \worddomain$. We have that $x \not\newnextsym y$ if and only if $x<u<y$.
\end{lemma}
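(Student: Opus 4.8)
The plan is to unfold the definition of the next-symbol relation and then track how a single insertion changes which positions carry symbols. Recall that $x \nextsym y$ means $x < y$, that both $x$ and $y$ are symbol-elements of $w$, and that no position $z$ with $x < z < y$ is a symbol-element; equivalently $\position{w}{y} = \position{w}{x}+1$. The update $\unknownupdate = \ins{\zeta}{u}$ changes the word only at the single position $u$, setting $w'(u) = \zeta \neq \emptyword$ while leaving $w'(z) = w(z)$ for every $z \neq u$. Since both directions of the equivalence hinge on whether the insertion introduces a new symbol-element strictly between $x$ and $y$, I would prove the forward implication directly and the backward implication by contraposition.

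For the direction $x < u < y \Rightarrow x \not\newnextsym y$: because $x \nextsym y$ and $x < u < y$, the position $u$ lies strictly between two consecutive symbol-elements, so $w(u) = \emptyword$. After the update $w'(u) = \zeta \neq \emptyword$, so $u$ is now a symbol-element with $x < u < y$. As $u \neq x$ and $u \neq y$, both $x$ and $y$ keep their nonempty symbols and remain symbol-elements in $w'$. Thus there is a symbol-element strictly between $x$ and $y$ in $w'$, so $\position{w'}{y} \geq \position{w'}{x}+2$ and hence $x \not\newnextsym y$.

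For the contrapositive of the other direction, assume $\neg(x<u<y)$, that is $u \le x$ or $u \ge y$, and show $x \newnextsym y$. I would split into the cases $u<x$, $u=x$, $u=y$, and $u>y$. In the two outside cases ($u<x$ or $u>y$) the symbols on the whole closed interval $[x,y]$ are untouched, so $x$ and $y$ stay symbol-elements and nothing new appears between them. In the boundary case $u=x$, the update only overwrites the symbol at $x$ (which stays nonempty), leaving every position $z$ with $x<z\le y$ unchanged; the case $u=y$ is symmetric. In each case $x$ and $y$ remain consecutive symbol-elements, so $x \newnextsym y$, contradicting $x \not\newnextsym y$.

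The only genuinely delicate points are the boundary cases $u=x$ and $u=y$, together with the observation that a position strictly between two consecutive symbol-elements must previously have been empty, so that the insertion really does create a fresh symbol-element there. Both facts are immediate once one notes that an insertion affects exactly one position and cannot erase an existing symbol, but they are precisely the places where a careless argument might forget that $x$ and $y$ survive as symbol-elements. No heavy machinery is needed; the whole argument is a direct case analysis straight from the definitions.
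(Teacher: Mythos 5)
Your proof is correct and follows essentially the same route as the paper's: the forward direction observes that the insertion creates a symbol-element strictly between $x$ and $y$, and the converse direction notes that an insertion outside the open interval $(x,y)$ cannot introduce a symbol-element between them (nor erase the symbols at $x$ and $y$). Your version is merely more explicit about the boundary cases $u=x$ and $u=y$ and about $x,y$ surviving as symbol-elements, which the paper's terser argument leaves implicit.
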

\begin{proof}
	Let $\zeta \in \Sigma$, if we perform the update $\ins{\zeta}{u}$ on $\wordstruc$ where $x<u<y$ then it follows that there exists some $z$ such that $w'(z) \neq \emptyword$ and $x<z<y$. Therefore it cannot be the case that $x \newnextsym y$, so $x \not\newnextsym y$.
	
	If it is not the case that $x<u<y$ then it cannot be that there exists some $z \in \worddomain$ such that $x<z<y$ where $w'(z) \neq \emptyword$. Therefore, if $x \nextsym y$ and $\unknownupdate = \ins{\zeta}{u}$ then $x \not\newnextsym y$ if and only if $x<u<y$.
\end{proof}
\subsubsection{Actual proof of \cref{lemma:next}}
\begin{proof}
We first define the relations $\firstrel$ and $\lastrel$. These are unary relations which have the first and last symbol elements in a word structure respectively. Formally, we define them as $\firstrel := \{ x \in \worddomain \mid \position{w}{x}=1 \}$ and $\lastrel := \{ x \in \worddomain \mid \position{w}{x} = |w| \}$. Since $\position{w}{x}$ for any $x \in \worddomain$ is undefined when $w=\emptyword$, we use the following initialization $\firstrel := \{ \$ \}$ and $\lastrel := \{ 1 \}$. We also have that $\nextrelation$ is initialized to $\emptyset$.

We split this proof into two parts; one part for the insertion update and one part for the reset update.

\subparagraph*{Part 1 (insertion):}

To prove this part, we assume the relations $\nextrelation,\firstrel,\lastrel \in \auxstruc$ are correct for some arbitrary word-structure $\wordstruc$, and then prove that they are correctly updated for $\unknownupdate(\wordstruc)$, where $\unknownupdate = \ins{\zeta}{u}$. We now define the update formula for the $\nextrelation$ relation under $ins_{\zeta}$:
\[
\updateformula{\nextrelation}{\absins{\zeta}}{u;x,y} \df \bigvee\limits_{i=1}^{5} \big( \nextsubform{i} \big).
\]
where each $\nextsubform{i}$ is a UCQ subformula defined later. For readability, we denote the relation defined by $\{ (x,y) \in \worddomain^2 \mid \programstate \models\updateformula{\nextrelation}{\absins{\zeta}}{u;x,y} \}$ as $\nextrelation'$, where $\programstate \df (\wordstruc,\auxstruc)$ is the program state. We also do the analogous for $\firstrel$ and $\lastrel$.

\begin{description}
\item[Case 1.] $(x,y) \in \nextrelation$.
\end{description}

For this case, we refer back to Lemma \ref{obs:nextins}. From this lemma, we can see that if $x \nextsym y$ and $x<u<y$ then $x \not\newnextsym y$. It follows that if $(x,y) \in \nextrelation$ and $(x<u<y)$ then we should have $(x,y) \notin \nextrelation'$. We can also see from this lemma that if $x \nextsym y$ and $u \leq x$ or $y \leq u$ then $x \newnextsym y$ and therefore if $(x,y) \in \nextrelation$ and $(u \leq x) \lor (y \leq u)$ then $(x,y) \in \nextrelation'$. We can see that this behavior is realized with the following
\[
\nextsubform{1} \df \nextrelation(x,y) \land \big( (u \leq x) \lor (y \leq u) \big).
\]

\begin{description}
\item[Case 2.] $(x,y) \notin \nextrelation$ and $(x,y) \in \nextrelation'$.
\end{description}

We can see that if $(x,y) \notin \nextrelation$ and $u \neq x$ or $u \neq y$ then it must be that $(x,y) \notin \nextrelation'$. This is because either:
\begin{itemize}
\item $w(x) = \emptyword$ or $w(y) = \emptyword$ - this doesn't change if $u \neq x$ or $u \neq y$.
\item There exists some $v \in \worddomain$ such that $x<v<y$ and $w(v) \neq \emptyword$ - since we are looking at when $\unknownupdate = \ins{\zeta}{u}$, we still have such an element $v$.
\end{itemize}

Therefore, we will look at two cases; when $u = x$ and when $u = y$:

\begin{description}
\item[Case 2.1.] $u=x$.
\end{description}

We first look at when $\position{w'}{u} = 1$. We now define $\nextsubform{2}$:
\[
\nextsubform{2} \df (u \logeq x) \land \firstrel(y) \land (u < y).
\]
We will assume that $\nextsubform{2}$ evaluates to true and show that $x \newnextsym y$. For $\nextsubform{2}$ to be true, it must be that:
\begin{itemize}
\item $u = x$.
\item $\firstrel(y)$ - which is the case when $\position{w}{y} = 1$.
\item $(u<y)$.
\end{itemize}

Since $\position{w}{y} = 1$ and $u<y$ it follows that $\position{w'}{u} = 1$. Furthermore, we can see that because $u<y$ we have that $\position{w'}{y} = \position{w}{y}+1$. It follows that $\position{w'}{u} =1$ and $\position{w'}{y} =2$ and therefore $u \newnextsym y$. Since $u=x$ we have $x \newnextsym y$, hence this subformula has the correct behavior for this case when $\position{w'}{u} = 1$. But we are still yet to explore when $\position{w'}{u} \neq 1$. We now look at $\nextsubform{3}$:
\[
\nextsubform{3} \df (u \logeq x) \land \exists v \colon \big( \nextrelation(v,y) \land (v<u) \land (u<y) \big). 
\]
Assuming that $\nextsubform{3}$ evaluates to true, it must be that there exists some $v \in \worddomain$ such that:
\begin{itemize}
\item $u = x$.
\item $\nextrelation(v,y)$ - therefore $v \nextsym y$.
\item $v<u$ and $u<y$.
\end{itemize}

We know that $u = x$, therefore we can refer to $x$ as the element of the domain for which the symbol is being set. Since $v \nextsym y$ and $v<x<y$, it follows that $v \newnextsym x \newnextsym y$. Therefore we can see that $x \newnextsym y$ and $(x,y) \in \nextrelation'$, which is the correct behavior for $\nextsubform{3}$ in this case. 

\begin{description}
\item[Case 2.2.] $u=y$.
\end{description}

This case is analogous to Case 2.1. We have $\nextsubform{4}$ for when $\position{w'}{u}= |w'|$ and we have $\nextsubform{5}$ for when $\position{w'}{u} \neq |w'|$:
\begin{align*}
& \nextsubform{4} \df (u \logeq y) \land \lastrel(x) \land (u > x), \\
& \nextsubform{5}\df (u \logeq y) \land \exists v \colon \big( \nextrelation(x,v) \land (x < u) \land (u < v) \big).
\end{align*}

The intuition behind these subformulas is analogous to the reasoning stated for $\nextsubform{2}$ and $\nextsubform{3} $.

\begin{description}
\item[Case 3.] $(x,y) \notin \nextrelation$ and $(x,y) \notin \nextrelation'$.
\end{description}

This is the case where none of the subformulas evaluate to true, and therefore $\updateformula{\nextrelation}{\absins{\zeta}}{u;x,y}$ evaluates to false. Hence $(x,y) \notin \nextrelation'$.

We have proven for each case, the correctness of the update formula for $\nextrelation$ under insertion. We now prove the correctness of $\firstrel$ and $\lastrel$ by giving update formulas for them under the update $\unknownupdate = \ins{\zeta}{u}$:
\begin{align*}
& \updateformula{\firstrel}{\absins{\zeta}}{u;x} \df \big( \firstrel(x) \land (u>x) \big) \lor \exists y \colon \big( \firstrel(y) \land (u<y) \land (u \logeq x) \big), \\
& \updateformula{\lastrel}{\absins{\zeta}}{u;x} \df \big( \lastrel(x) \land (u<x) \big) \lor \exists y \colon \big( \lastrel(y) \land (u>y) \land (u \logeq x) \big).
\end{align*}

The intuition behind $\updateformula{\firstrel}{\absins{\zeta}}{u;x}$ is, if $u<x$ where $x$ is the first symbol element, then $u$ is the new first symbol element, otherwise $x$ remains the first symbol element. The intuition for $\updateformula{\lastrel}{\absins{\zeta}}{u;x}$ follows in analogously. 

\subparagraph*{Part 2 (reset):}

For this part, we have that $\unknownupdate =  \reset{u}$ for some $u \in \worddomain$. The update formula for the $\nextrelation$ relation under reset is defined as:
\[
\updateformula{\nextrelation}{\absreset}{u;x,y} \df \big( \nextrelation(x,y) \land ( (u < x) \lor (y < u) ) \big) \lor \big( \nextrelation(x,u) \land \nextrelation(u,y) \big).
\]

Looking at $\updateformula{\nextrelation}{\absreset}{u;x,y}$, we can see that $(x,y) \in \nextrelation$ and $(x,y) \in \nextrelation'$ when $(u < x) \lor (y < u)$. If we assume that $(x,y) \in \nextrelation$, it follows that there doesn't exist some element $v \in \worddomain$ such that $x<v<y$ and $w(v) \neq \emptyword$. Therefore we have that $(u < x) \lor (y < u)$ can only be false if $u=x$ or $u=y$ since there cannot be another element between $x$ and $y$ which has a symbol. Therefore if we have that $(x,y) \in \nextrelation$ and $(x,y) \notin \nextrelation'$ it must be that the update is $\reset{x}$ or $\reset{y}$. This is the correct behavior since if $w'(x)=\emptyword$ or $w'(y)= \emptyword$ then $x \not\newnextsym y$. 

We also have that $(x,y) \notin \nextrelation$ and $(x,y) \in \nextrelation'$ when $\nextrelation(x,u) \land \nextrelation(u,y)$. We can see that $\nextrelation(x,u) \land \nextrelation(u,y)$ is the case only when $x \nextsym u \nextsym y$ and if we have that $\unknownupdate = \reset{u}$ then it follows that there doesn't exist any element $v \in \worddomain$ such that $x<v<y$ and $w(v) \neq \emptyword$, therefore $x \newnextsym y$. Therefore the update formula $\updateformula{\nextrelation}{\absins{\zeta}}{u;x,y}$ has the desired behavior.

The following is the update formula for $\firstrel$:
\begin{multline*}
\updateformula{\firstrel}{\absreset}{u;x} \df \big( \firstrel(x) \land (u > x) \big) \lor \big( \firstrel(u) \land \nextrelation(u,x) \big) \lor \\ 
\big(\firstrel(u) \land \lastrel(u) \land (x \logeq \$) \big).
\end{multline*}

Looking at $\updateformula{\firstrel}{\absreset}{u;x}$, we can see that if $x \in \firstrel$ and $u>x$ then $x \in \firstrel'$. We can also see that if $u \in \firstrel$, i.e. we are setting $w'(u) = \emptyword$ where $\position{w}{u}=1$, then $x \in \firstrel'$ where $u \nextsym x$. This is because if $u \nextsym x$ then it follows that $\position{w}{x} = \position{w}{u}+1$ and therefore $\position{w}{x} = 2$ and because we are resetting $u$, $\position{w'}{x}=1$. 

We also have one edge case which is when $\firstrel(u)$ and $\lastrel(u)$. If this is the case, it follows that $|w| = 1$ and therefore $|w'| = 0$, i.e. $w' = \emptyword$. Therefore, we have that $\$ \in \firstrel$. We do this because given an insertion, of some element $v \in \worddomain$, it follows that $v< \$$ and therefore the update formula $\updateformula{\firstrel}{\absreset}{u;x}$ has the desired behavior.

The following is the update formula for $\lastrel$:
\begin{multline*}
\updateformula{\lastrel}{\absreset}{u;x} \df \big( \lastrel(x) \land (u<x) \big) \lor \big( \lastrel(u) \land \nextrelation(x,u) \big) \\
\lor \big( \firstrel(u) \land \lastrel(u) \land (x \logeq 1) \big).
\end{multline*}

The reasoning behind the update formula $\updateformula{\lastrel}{\absreset}{u;x}$ is analogous to the reasoning given earlier for the update formula $\updateformula{\firstrel}{\absreset}{u;x}$. 
\end{proof}

\subsection{Proof of \cref{lemma:eqsubstr}}
We first observe two results which help us in the actual proof of \cref{lemma:eqsubstr}:


\begin{lemma}
\label{obs:next}
If $y \nextsym z$ then $w[x,y] \cdot w[z,v] = w[x,v]$ where $x,y,z,v \in \worddomain$.
\end{lemma}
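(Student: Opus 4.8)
The plan is to unfold the definition of the subword notation $w[\cdot,\cdot]$ and then exploit the single structural fact that $y \nextsym z$ forces every position strictly between $y$ and $z$ to carry no symbol. Recall that for domain elements $i<j$ the subword is $w[i,j] = w(i)\cdot w(i+1)\cdots w(j)$, where $w(k)=\zeta$ whenever $R_\zeta(k)$ holds and $w(k)=\emptyword$ otherwise. So the three relevant words factorise as $w[x,y]=w(x)\cdots w(y)$, as $w[z,v]=w(z)\cdots w(v)$, and, splitting the product at the positions $y$ and $z$, as
\[
w[x,v] = w(x)\cdots w(y)\cdot w(y+1)\cdots w(z-1)\cdot w(z)\cdots w(v).
\]

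Next I would invoke the definition of $\nextsym$. Since $y \nextsym z$, we have $\position{w}{z} = \position{w}{y}+1$, which by the defining condition means that $y$ and $z$ are both symbol-elements with $y<z$ and $w(k)=\emptyword$ for every $k\in\worddomain$ with $y<k<z$. Consequently the middle factor $w(y+1)\cdots w(z-1)$ is a concatenation of empty words and hence equals $\emptyword$. Substituting this back gives
\[
w[x,v] = \bigl(w(x)\cdots w(y)\bigr)\cdot\emptyword\cdot\bigl(w(z)\cdots w(v)\bigr) = w[x,y]\cdot w[z,v],
\]
which is the claim.

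I expect no genuine obstacle here; the statement is essentially a bookkeeping identity. The only point requiring care is the index arithmetic when splitting $w[x,v]$: one must check that the positions collapsed to $\emptyword$ are \emph{exactly} those strictly between $y$ and $z$, so that neither $w(y)$ nor $w(z)$ is dropped or duplicated, and that the symbol-elements $y$ and $z$ themselves are retained on either side of the gap. The degenerate case $z=y+1$, where the middle factor is empty, is subsumed by the same argument, since an empty product is $\emptyword$ as well.
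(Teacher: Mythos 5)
Your proposal is correct and follows essentially the same route as the paper: both split $w[x,v]$ at the positions $y$ and $z$ and observe that $y \nextsym z$ forces the middle factor $w[y+1,z-1]$ to be $\emptyword$. The only difference is cosmetic, in that you unfold the factor symbol by symbol while the paper keeps it as a single subword expression.
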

\begin{proof}
Because $y \nextsym z$ it follows that $w[y+1,z-1] = \emptyword$. Since we can write $w[x,v]$ as $w[x,y] \cdot w[y+1,z-1] \cdot w[z,v]$ and because $w[y+1,z-1] = \emptyword$, it follows that $w[x,y] \cdot w[z,v] = w[x,v]$.
\end{proof}

\begin{lemma}
\label{obs:insert}
If $w[x_1,y_1] = w[x_2,y_2]$ and we perform $\ins{\zeta}{u}$ then $w'[x_1,y_1] \neq w'[x_2,y_2]$ if $x_1 < u < y_1$ or $x_2 < u < y_2$.
\end{lemma}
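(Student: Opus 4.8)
The plan is to argue that the insertion at $u$ modifies the content of exactly one of the two equal substrings, leaving the other literally unchanged, so that their equality cannot survive. The hypothesis ``$x_1 < u < y_1$ or $x_2 < u < y_2$'' splits into two cases that are symmetric to one another (swap the roles of the two intervals), so I would treat $x_1 < u < y_1$ and note the other case is analogous. Since this lemma is applied to tuples of the equal substring relation, where the intervals are disjoint with $y_1 < x_2$, I would use $u < y_1 < x_2 \le y_2$ to conclude $u \notin [x_2,y_2]$; as the update $\ins{\zeta}{u}$ touches only position $u$, this immediately gives $w'[x_2,y_2] = w[x_2,y_2]$.

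It then remains to check that the first substring genuinely changes, that is $w'[x_1,y_1] \neq w[x_1,y_1]$. As the update alters only position $u$, which lies strictly inside $[x_1,y_1]$, I would compare the two substrings by splitting on the old content of $u$. If $w(u) = \emptyword$, then $u$ contributes no symbol before the update but the symbol $\zeta$ afterwards, so applying $\clr$ shows $w'[x_1,y_1]$ is one symbol longer than $w[x_1,y_1]$ and they differ in length. If $w(u) = \zeta' \neq \zeta$, the lengths coincide but the symbol contributed by $u$ changes from $\zeta'$ to $\zeta$ while everything to its left inside the interval is unchanged, so the two words disagree at that symbol. Either way $w'[x_1,y_1] \neq w[x_1,y_1]$, and combining with the previous paragraph yields $w'[x_1,y_1] \neq w[x_1,y_1] = w[x_2,y_2] = w'[x_2,y_2]$.

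The step needing the most care is justifying these two disagreement subcases, which together rely on two facts that are easy to overlook. First, that the update is genuinely nontrivial: the standing restriction that an update must change the string guarantees $w(u) \neq \zeta$ for a legal $\ins{\zeta}{u}$, which is exactly what rules out the no-op in which the claim would fail. Second, that the two intervals do not both contain $u$; this is the disjointness $y_1 < x_2$ inherent in the equal substring relation, and it is essential, since for overlapping intervals the statement fails outright (taking $(x_1,y_1)=(x_2,y_2)$ gives $w'[x_1,y_1]=w'[x_2,y_2]$ trivially). Once these two points are pinned down, the remaining reasoning is a routine factorization argument.
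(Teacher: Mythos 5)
Your proof is correct and follows essentially the same route as the paper's: the length argument when $w(u)=\emptyword$, the symbol-change argument (via the restriction that updates must change the string) when $w(u)\neq\emptyword$, and the implicit disjointness $y_1 < x_2$ to keep the other substring untouched. You spell out the two subcases and the role of disjointness more explicitly than the paper, which treats the $w(u)\neq\emptyword$ case as ``trivial,'' but the underlying argument is the same.
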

\begin{proof}
If $x_1 < u < y_1$ then it follows that $|w'[x_1,y_1]| = |w[x_1,y_1]| + 1$ but since $w'[x_2,y_2]| = |w[x_2,y_2]|$ it follows that $w'[x_1,y_1] \neq w'[x_2,y_2]$. The reasoning for when $x_2 < u < y_2$ is analogous.

There is the case when $w(u) \neq \emptyword$ and $x_1 < u < y_1$ or $x_2 < u < y_2$. However since we have the restriction that an update \emph{must} change the word (\ie we cannot perform $\ins{\zeta}{u}$ if $w(u) = \zeta$), this case is trivial.
\end{proof}

\subsubsection{Actual proof of \cref{lemma:eqsubstr}}
\newcommand{\symfix}{\mu_{sym}(\oneopen, \oneclose,\twoopen,\twoclose)}
\begin{proof}
In a similar fashion to the proof of Lemma \ref{lemma:next}, we split this proof into two parts. For both parts we assume that $\equalsubstr$ is correct for a word-structure in some state, then prove that the update formula $\updateformula{\equalsubstr}{\absins{\zeta}}{u; \oneopen,\oneclose,\twoopen,\twoclose}$ correctly updates $\equalsubstr$. We have that $\equalsubstr$ is initialized to be $\emptyset$. If our update formulas are all in \ucq, then the equal substring relation can be maintained in $\dyncq$.

\subparagraph*{Part 1 (insertion):}

For this part of the proof, we have $\unknownupdate = \ins{\zeta}{u}$. Let $\equalsubstr'$ denote the relation $\{ (\oneopen,\oneclose,\twoopen,\twoclose) \mid \programstate \models \updateformula{\equalsubstr}{\absins{\zeta}}{u; \oneopen,\oneclose,\twoopen,\twoclose} \}$. The update formula for $\equalsubstr$ is:
\begin{multline*}
\updateformula{\equalsubstr}{\absins{\zeta}}{u; \oneopen,\oneclose,\twoopen,\twoclose} \df  \bigvee\limits_{i=1}^{9} \big( \substrsubform{i} \big) \land (\oneclose < \twoopen) \land \symel{\oneopen} \\ 
\land \symel{\oneclose} \land \symel{\twoopen} \land \symel{\twoclose}.
\end{multline*}

We have that for $\updateformula{\equalsubstr}{\absins{\zeta}}{u; \oneopen,\oneclose,\twoopen,\twoclose}$ to evaluate to true, it must be that $(\oneclose < \twoopen)$ and $\symel{\oneopen}$, which is only true when $w(\oneopen) \neq \emptyword$. Similarly, it must be that $w(\oneclose)$, $w(\twoopen)$ and $w(\twoclose)$ are all not the empty word. This is per the definition of the equal substring relation. Therefore, it is enough to show that if $\substrsubform{i} = \true$ then $w'[\oneopen,\oneclose] = w'[\twoopen,\twoclose]$ since the other cases of the equal substring relation definition have been dealt with. Note that $\substrsubform{i}$ is different for each $\zeta \in \Sigma$. 

Let $ \oneopen,\oneclose,\twoopen,\twoclose \in \worddomain$ be elements of our domain such that $\oneopen \leq \oneclose < \twoopen \leq \twoclose$. We have four cases to consider:

\begin{description}
\item[Case 1.] $w[\oneopen, \oneclose] = w[\twoopen,\twoclose]$ and $w'[\oneopen, \oneclose] \neq w'[\twoopen,\twoclose]$:
\end{description}

From Lemma \ref{obs:insert}, we know that if $w[\oneopen,\oneclose] = w[\twoopen,\twoclose]$ and we perform $\ins{\zeta}{u}$ where $\oneopen < u < \oneclose$ or $\twoopen < u < \twoclose$ then $w'[\oneopen,\oneclose] \neq w'[\twoopen,\twoclose]$. Therefore if $(\oneopen,\oneclose,\twoopen,\twoclose) \in R_{eq}$ and $\oneopen < u < \oneclose$ or $\twoopen < u < \twoclose$ then $\equalsubstr$ should be updated by the update formula such that $(\oneopen,\oneclose,\twoopen,\twoclose) \notin \equalsubstr'$. We now define the $\substrsubform{1}$:
\[
\substrsubform{1} \df \equalsubstr(\oneopen,\oneclose,\twoopen,\twoclose) \land \Big( (u<\oneopen) \lor \big( ( \oneclose < u ) \land ( u < \twoopen ) \big) \lor ( \twoclose < u ) \Big).
\]

If $\oneopen \leq u \leq \oneclose$ then $(u<\oneopen) = \false$, $(\oneclose < u) = \false$ and $(y_c < u) = \false$. Therefore we can see that $\substrsubform{1}$ will evaluate to false. If $\twoopen \leq u \leq \twoclose$ then $(u<\oneopen)= \false$, $(u<\twoopen)= \false$ and $(u<\twoclose) = \false$ and therefore $\substrsubform{1}$ evaluates to false. Hence, if $(\oneopen, \oneclose, \twoopen, \twoclose) \in \equalsubstr$ then it cannot be the case that $\oneopen<u<\oneclose$ nor can it be the case that $\twoopen<u<\twoclose$ for $(\oneopen, \oneclose, \twoopen, \twoclose) \in \equalsubstr'$. Indeed, it could be that $u=\oneopen$ and $w(\oneopen)=\zeta$ and therefore $w = w'$ even though $\substrsubform{1} = \false$, but this is dealt with using $\substrsubform{2}$, which we define later. Similar issues arise when $u=\oneclose$, $u=\twoopen$ and when $u=\twoclose$, but similarly they are all dealt with later on. Therefore, it can be seen that $\substrsubform{1}$ correctly maintains $\equalsubstr$ for this case.

\begin{description}
\item[Case 2.] $w[\oneopen, \oneclose] = w[\twoopen,\twoclose]$ and $w'[\oneopen, \oneclose] = w'[\twoopen,\twoclose]$:
\end{description}

This case is also dealt with by $\substrsubform{1}$. We again refer to Lemma \ref{obs:insert}. From this lemma, we know that if $w[\oneopen,\oneclose] = w[\twoopen,\twoclose]$ and we perform $\ins{\zeta}{u}$ but it is not the case that $\oneopen \leq u \leq \oneclose$ or $\twoopen \leq u \leq \twoclose$, then $w'[\oneopen,\oneclose] = w'[\twoopen,\twoclose]$. If it is not the case that $\oneopen \leq u \leq \oneclose$ or $\twoopen \leq u \leq \twoclose$, then we can see that $u<\oneopen \lor \big( \oneclose < u \land u< \twoopen \big) \lor \twoclose$ is true, and therefore if $(\oneopen,\oneclose,\twoopen,\twoclose) \in \equalsubstr$ then $\substrsubform{1}$ evaluates to true. It follows that $(\oneopen,\oneclose,\twoopen,\twoclose) \in \equalsubstr'$, which is the correct behavior in this case.
 
\begin{description}
\item[Case 3.] $w[\oneopen, \oneclose] \neq w[\twoopen,\twoclose]$ and $w'[\oneopen, \oneclose] = w'[\twoopen,\twoclose]$:
\end{description}

We have eight cases within Case 3, each case has an associated subformula. Since the subformulas are joined by disjunction to form $\updateformula{\equalsubstr}{\absins{\zeta}}{u; \oneopen,\oneclose,\twoopen,\twoclose}$, if one of the subformulas evaluates to true then $(\oneopen,\oneclose,\twoopen,\twoclose) \in \equalsubstr'$. Since we are in the case where  $w'[\oneopen, \oneclose] = w'[\twoopen,\twoclose]$, we wish to prove that $(\oneopen,\oneclose,\twoopen,\twoclose) \in \equalsubstr'$.

\begin{description}
\item[Case 3.1.] $u= \oneopen$ and $|w'[\oneopen, \oneclose]| > 1$:
\end{description}

For this case, we define $\substrsubform{2}$:
\begin{multline*}
\substrsubform{2} \df \exists v_1 \exists v_2 \colon \big(\equalsubstr(v_1, \oneclose, v_2, \twoclose) \land \nextrelation'(\oneopen, v_1)  \\
 \land \nextrelation'(\twoopen, v_2) \land \symbolrel{\zeta}(\twoopen) \land (u \logeq \oneopen) \big).
\end{multline*}

We can see that $\substrsubform{2}$ states that $(\oneopen,\oneclose,\twoopen,\twoclose) \in \equalsubstr'$ if there exists $v_1, v_2 \in \worddomain$, such that:
\begin{itemize}
\item $\equalsubstr(v_1, \oneclose, v_2, \twoclose)$ - which if true, we know that $w[v_1, \oneclose] = w[v_2, \twoclose]$.
\item $\nextrelation'(\oneopen, v_1) \land \nextrelation'(\twoopen, v_2)$ - which if true, we know that $\oneopen \newnextsym v_1$ and $\twoopen \newnextsym v_2$.
\item $\symbolrel{\zeta}(\twoopen)$ - which if true, we know that $w'[\twoopen,\twoopen] = w'[u,u] = \zeta$.
\item $u = \oneopen$.
\end{itemize}

Assume $\substrsubform{2} = \true$, we now show that $w'[\oneopen,\oneclose] = w'[\twoopen,\twoclose]$ must hold. If we have that $\substrsubform{2} = \true$ then we know that $w[\twoopen,\twoopen] = w[u,u]$ and that $w[v_1, \oneclose] = w[v_2, \twoclose]$, therefore it follows that:
\[
w'[u,u] \cdot w[v_1, \oneclose] = w'[\twoopen,\twoopen] \cdot w[v_2, \twoclose]
\]
and since $u = \oneopen$
\[
w'[\oneopen,\oneopen] \cdot w[v_1, \oneclose] = w'[\twoopen,\twoopen] \cdot w[v_2, \twoclose].
\]

We also have that the only change to the word-structure is that $w'(u) = \zeta$. Therefore all substrings that do not contain $u$ remain unchanged. Therefore:
\[
w'[\oneopen,\oneopen] \cdot w'[v_1, \oneclose] = w'[\twoopen,\twoopen] \cdot w'[v_2, \twoclose].
\]

Since we also have that $\oneopen \newnextsym v_1$ and $\twoopen \newnextsym v_2$, we can use Lemma \ref{obs:next} which gives us that:
\[
w'[\oneopen, \oneclose] = w'[\oneopen,\oneopen] \cdot w'[v_1, \oneclose]  \text{ and } w'[\twoopen, \twoclose] = w'[\twoopen,\twoopen] \cdot w'[v_2, \twoclose].
\]

Therefore we have shown that if $\substrsubform{2} = \true$ then $w'[\oneopen, \oneclose]  = w'[\twoopen, \twoclose]$. 

\begin{description}
\item[Case 3.2.] $\oneopen < u < \oneclose$ and $|w'[\oneopen, \oneclose]| > 1$:
\end{description}

For this case, we define $\substrsubform{3}$:
\begin{multline*}
\substrsubform{3} \df \exists z_1, z_2, z_3, z_4, v \colon \big(\nextrelation'(z_1, u) \land \nextrelation'(u,z_2) \land \nextrelation'(z_3, v)   \\
\land \nextrelation'(v,z_4) \land \equalsubstr(\oneopen, z_1, \twoopen, z_3) \land \equalsubstr(z_2,\oneclose,z_4,\twoclose ) \land \symbolrel{\zeta}(v)  \big).
\end{multline*}

We can see that $\substrsubform{3}$ states that $(\oneopen,\oneclose,\twoopen,\twoclose) \in \equalsubstr'$ if there exists $z_1,z_2,z_3,z_4,v \in \worddomain$ such that:
\begin{itemize}
\item $\nextrelation'(z_1, u)$ - which if true, we know that $z_1 \newnextsym u$.
\item $\nextrelation'(u,z_2)$ - which if true, we know that $u \newnextsym z_2$.
\item $\nextrelation'(z_3, v)$ - which if true, we know that $v_3 \newnextsym v$.
\item $\nextrelation'(v,z_4)$ - which if true, we know that $v \newnextsym z_4$.
\item $\equalsubstr(\oneopen, z_1, \twoopen, z_3)$ - which if true, we know that $w[\oneopen,z_1] = w[\twoopen,z_3]$.
\item $\equalsubstr(z_2,\oneclose,z_4,\twoclose)$ - which if true, we know that $w[z_2,\oneclose] = w[z_4,\twoclose]$.
\item $\symbolrel{\zeta}(v)$ - which if true, we know that $w'[u,u] = w'[v,v]$.
\end{itemize}

Let $\substrsubform{3} =\true$, we know that $w[\oneopen,z_1] = w[\twoopen,z_3]$, $w[z_2,\oneclose] = w[z_4,\twoclose]$ and $w'[u,u] = w'[v,v]$. Therefore, we can write:
\[
w[\oneopen,z_1] \cdot w'[u,u] \cdot w[z_2,\oneclose] = w[\twoopen,z_3] \cdot w'[v,v] \cdot w[z_4,\twoclose].
\]

Since the only change to the word-structure is that $w(u)$ is now $\zeta$ where $\zeta \in \Sigma$, we know that all subwords of the word-structure that do not contain $u$ remain unchanged, therefore:
\[
w'[\oneopen,z_1] \cdot w'[u,u] \cdot w'[z_2,\oneclose] = w'[\twoopen,z_3] \cdot w'[v,v] \cdot w'[z_4,\twoclose].
\]

Since we are assuming that $\substrsubform{3}=\true$, we also have that $z_1 \newnextsym u$ and $u \newnextsym z_2$, therefore $w'[\oneopen,\oneclose] = w'[\oneopen,z_1] \cdot w'[u,u] \cdot w'[z_2,\oneclose]$ and similarly because $v_3 \newnextsym v$ and $v \newnextsym z_4$ we have that $w'[\twoopen,\twoclose] = w'[\twoopen,z_3] \cdot w'[v,v] \cdot w'[z_4,\twoclose]$. This all follows from Lemma \ref{obs:next}. We therefore can see that $w'[\oneopen,\oneclose]=w'[\twoopen,\twoclose]$.

\begin{description}
\item[Case 3.3.] $u = \oneclose$ and $|w'[\oneopen, \oneclose]| > 1$:
\end{description}

For this case, we define $\substrsubform{4}$:
\begin{multline*}
\substrsubform{4} \df \exists v_1 \exists v_2 \colon \big(  \equalsubstr(\oneopen,v_1,\twoopen,v_2) \land \nextrelation'(v_1,u)  \\
\land \nextrelation'(v_2,\twoclose) \land (u \logeq \oneclose) \land \symbolrel{\zeta}(\twoclose) \big).
\end{multline*}

We now show that if $\substrsubform{4} = \true$, then $w'[\oneopen,\oneclose] = w'[\twoopen,\twoclose]$. If $\substrsubform{4} = \true$, then there exists $v_1, v_2 \in \worddomain$ such that:
\begin{itemize}
\item $\equalsubstr(\oneopen,v_1,\twoopen,v_2)$ - which if true, we know that $w[\oneopen,v_1] = w[\twoopen,v_2]$.
\item $\nextrelation'(v_1,u)$ - which if true, we know that $v_1 \newnextsym u$.
\item $\nextrelation'(v_2,\twoclose)$ - which if true, we know that $v_2 \newnextsym \twoclose$.
\item $\symbolrel{\zeta}(\twoclose)$ - which if true, we know that $w'[u,u] = w'[\twoclose,\twoclose]$.
\item $u = \oneclose$.
\end{itemize}   

Since $w[\oneopen,v_1] = w[\twoopen,v_2]$ and $w'[u,u] = w'[\twoclose,\twoclose]$, we know that:
\[
w[\oneopen,v_1] \cdot w'[u,u]= w[\twoopen,v_2] \cdot w'[\twoclose,\twoclose].
\]

Also since the only difference between the word before the update and after the update is the changing of $w(u)$ to $\zeta$, we can write:
\[
w'[\oneopen,v_1] \cdot w'[u,u]= w'[\twoopen,v_2] \cdot w'[\twoclose,\twoclose].
\]

Moreover, from Lemma \ref{obs:next} we know that $v_1 \newnextsym u$ and that $v_2 \newnextsym \twoclose$, therefore we can write that $w'[\oneopen,\oneclose] =w'[\oneopen,v_1] \cdot w'[u,u]$ and that $w'[\twoopen,\twoclose] = w'[\twoopen,v_2] \cdot w'[\twoclose,\twoclose]$. Therefore we have shown that when $\substrsubform{5} = \true$, we have that $w'[\oneopen,\oneclose] = w'[\twoopen,\twoclose]$.

\begin{description}
\item[Case 3.4.] $w'[u,u] = w'[\twoopen, \twoclose]$ and $u < \twoopen$:
\end{description}

For this case, we define $\substrsubform{5}$:
\[
\substrsubform{5} \df (u \logeq \oneopen) \land (\oneopen \logeq \oneclose) \land (\twoopen \logeq \twoclose) \land \symbolrel{\zeta}(\twoopen).
\]

We assume that $\substrsubform{5} = \true$ and then show that $w'[\oneopen, \oneclose] = w'[\twoopen,\twoclose]$ must hold. Since $u = \oneopen$ and $\oneopen = \oneclose$ it follows that $w'[u,u] = w'[\oneopen,\oneclose]$. Furthermore since $\twoopen = \twoclose$ we have that $w'[\twoopen,\twoclose] = w'[\twoopen,\twoopen]$. Therefore the equality $w'[\oneopen, \oneclose] = w'[\twoopen,\twoclose]$ can be rewritten as $w'[u,u] = w'[\twoopen,\twoopen]$ and since $\symbolrel{\zeta}(\twoopen)$, we know that $w'[u,u] = w'[\twoopen,\twoopen]$ is in fact the case. Hence, if $\substrsubform{5} = \true$ then $w'[\oneopen, \oneclose] = w'[\twoopen,\twoclose]$.

There are four other cases, although they are symmetric to the cases 3.1 to 3.4, i.e. we have that $\twoopen \leq u \leq \twoclose$ rather than $\oneopen \leq u \leq \oneclose$. Due to the fact that the cases are symmetrical, we have omitted the remaining proofs for said cases.

\begin{description}
\item[Case 4.] $w[\oneopen, \oneclose] \neq w[\twoopen,\twoclose]$ and $w'[\oneopen, \oneclose] \neq w'[\twoopen,\twoclose]$:
\end{description}

For this case, since $w'[\oneopen, \oneclose] \neq w'[\twoopen,\twoclose]$ it must be that $\updateformula{\equalsubstr}{\absins{\zeta}}{u; \oneopen,\oneclose,\twoopen,\twoclose}$ evaluates to false. Since we have exhaustively looked at all the cases where $w'[\oneopen, \oneclose] = w'[\twoopen,\twoclose]$ and shown that $\updateformula{\equalsubstr}{\absins{\zeta}}{u; \oneopen,\oneclose,\twoopen,\twoclose}$ evaluates to true, if $w'[\oneopen, \oneclose] \neq w'[\twoopen,\twoclose]$ it must be that  $\updateformula{\equalsubstr}{\absins{\zeta}}{u; \oneopen,\oneclose,\twoopen,\twoclose}$ evaluates to false. 

\subparagraph*{Part 2 (reset):}

For this part, we have that $\unknownupdate = \reset{u}$.
\begin{multline*}
\updateformula{\equalsubstr}{\absreset}{u;\oneopen,\oneclose,\twoopen,\twoclose} \df \bigvee\limits_{i=6}^{8}\big( \substrsubform{i} \big) \land (\oneclose < \twoopen) \land \symel{\oneopen} \\ 
\land \symel{\oneclose} \land \symel{\twoopen} \land \symel{\twoclose}.
\end{multline*}

\begin{description}
\item[Case 1.] $w[\oneopen, \oneclose] = w[\twoopen,\twoclose]$:
\end{description}

For this case we define the subformula $\substrsubform{6}$
\[
\substrsubform{6} \df \equalsubstr(\oneopen,\oneclose,\twoopen,\twoclose) \land \Big( (u<\oneopen) \lor \big( (\oneclose < u) \land (u< \twoopen) \big) \lor (\twoclose < u) \Big).
\]

This subformula states that if $(\oneopen,\oneclose,\twoopen,\twoclose) \in \equalsubstr$ and $\oneopen \leq u \leq \oneclose$ or $\twoopen \leq u \leq \twoclose$ then $\substrsubform{6}=\false$. Whereas, if $(\oneopen,\oneclose,\twoopen,\twoclose) \in \equalsubstr$ and it is not the case that $\oneopen \leq u \leq \oneclose$ or $\twoopen \leq u \leq \twoclose$ then $\substrsubform{6} = \true$ and hence $(\oneopen,\oneclose,\twoopen,\twoclose) \in \equalsubstr'$. This is due to the fact that we can only reset one element at a time, $u$, and therefore since $\oneclose < \twoopen$, if $u$ is in either $[\oneopen,\oneclose]$ or $[\twoopen,\twoclose]$ then $w'[\oneopen,\oneclose] \neq w'[\twoopen,\twoclose]$ because exactly one of them has changed. If it is not the case that $u$ is in either $[\oneopen,\oneclose]$ or $[\twoopen,\twoclose]$, then $w[\oneopen,\oneclose] = w'[\oneopen,\oneclose] $ and $w[\twoopen,\twoclose] = w'[\twoopen,\twoclose]$.

\begin{description}
\item[Case 2.] $w[\oneopen, \oneclose] \neq w[\twoopen,\twoclose]$:
\end{description}

We have two cases to explore, when $\oneopen \leq u \leq \oneclose$ and when $\twoopen \leq u \leq \twoclose$. These cases are symmetrical and therefore we only explore the case where $\oneopen \leq u \leq \oneclose$. If neither of these conditions are met, then it follows that $w'[\oneopen, \oneclose] \neq w'[\twoopen,\twoclose]$. For this case, we define $\substrsubform{7}$:
\begin{multline*}
\substrsubform{7} \df \exists z_1, z_2, z_3, z_4 \colon \big( \equalsubstr(\oneopen,z_1,\twoopen, z_3) \land \equalsubstr(z_2,x_c,z_4,y_c) \land \\
 \nextrelation(z_1,u) \land \nextrelation(u,z_2) \land \nextrelation(z_3,z_4) \big).
\end{multline*}

We assume that $\substrsubform{7}$ evaluates to true and show that $w'[\oneopen, \oneclose] = w'[\twoopen,\twoclose]$ must hold. If $\substrsubform{7}=\true$, then there must exists $z_{1 \dots 4}$ and $v$ such that:
\begin{itemize}
\item $\equalsubstr(\oneopen,z_1,\twoopen, z_3)$ - therefore $w[\oneopen,z_1] = w[\twoopen,z_3]$.
\item $\equalsubstr(z_2,\oneclose,z_4,\twoclose)$ - therefore $w[z_2,\oneclose] = w[z_4,\twoclose]$.
\item $ \nextrelation(z_1,u)$ - therefore $z_1 \nextsym u$.
\item $ \nextrelation(u,z_2)$ - therefore $u \nextsym z_2$.
\item $\nextrelation(z_3,z_4)$ - therefore $z_3\nextsym z_4$.
\end{itemize}
We can see that if $\substrsubform{7}$ holds that $z_1 \nextsym u \nextsym z_2$ but since $\unknownupdate = \reset{u}$ it follows that $z_1 \newnextsym z_2$. Therefore it follows that $w'[\oneopen,\oneclose] = w'[\oneopen,z_1] \cdot w'[z_2,\oneclose]$ and $w'[\twoopen,\twoclose] = w'[\twoopen,z_3]\cdot w'[z_4,\twoclose]$. Hence we can see that $w'[\oneopen,\oneclose] = w'[\twoopen,\twoclose]$.

We also have $\substrsubform{8}$ which is equivalent to $\substrsubform{7}$ but for the case where $\twoopen < u < \twoclose$ rather than $\oneopen < u < \oneclose$. We have omitted this due to the fact that it is analogous to $\substrsubform{7}$.
\end{proof}








\end{document}